\newcommand{\Be}{\mbox{\usefont{T2A}{\rmdefault}{m}{n}\CYRB}}
\newcommand{\be}{\mbox{\usefont{T2A}{\rmdefault}{m}{n}\cyrb}}
\setlist{leftmargin=\parindent} 
\newcommand{\p}{\mathfrak{p}} 
\renewcommand{\t}{\mathfrak{t}} 
\newcommand{\cN}{\mathcal{N}} 
\newcommand{\M}{\mathbb{M}} 
\newcommand{\D}{\mbox{\sffamily{D}}} 
\newcommand{\PP}{\mbox{\sffamily{P}}} 
\newcommand{\Lag}{L} 
\renewcommand{\L}{L} 
\newcommand{\Lie}{\mathcal{L}} 
\newcommand{\Y}{\mathcal{Y}} 
\newcommand{\pd}[2][]{\frac{\partial #1}{\partial #2}} 
\newcommand{\ev}{\mathpzc{ev}} 
\newcommand{\Diff}{\operatorname{Diff}} 
\newcommand{\ber}{\operatorname{\be}} 
\newcommand{\Ber}{\Be} 
\newcommand{\bersheaf}{\Be} 
\newcommand{\A}{\mathcal{A}} 
\newcommand{\dR}{\mathpzc{dR}} 
\newcommand{\Sp}{\mathpzc{Sp}} 
\newcommand{\pr}{\mathrm{pr}} 
\renewcommand{\c}{\mathpzc{c}} 
\renewcommand{\dR}{\mathpzc{dR}} 
\renewcommand{\Sp}{\mathpzc{Sp}} 
\newcommand{\fib}{\mathpzc{F}} 
\newcommand{\Oh}{\mathcal{O}} 
\newcommand{\Cal}{\mathcal{C}} 
\newcommand{\J}{\mathcal{J}} 
\newcommand{\HOM}{\operatorname{\mathcal{H}om}} 
\renewcommand{\S}{\mathcal{S}} 
\newcommand{\T}{\mathcal{T}} 
\newcommand{\PV}{\mathcal{PV}} 
\newcommand{\Der}{\operatorname{Der}} 
\newcommand{\Geo}{\mathscr{F}} 
\newcommand{\Sup}{\mathscr{E}} 
\newcommand{\Rheo}{\mathscr{R}} 
\newcommand{\Conv}{\mathscr{C}} 
\newcommand{\GeoComp}{\mathcal{F}} 
\newcommand{\SupComp}{\mathcal{E}} 
\renewcommand{\Gauge}{\mathscr{G}} 
\newcommand{\GaugeComp}{\mathcal{G}} 
\newcommand{\scH}{\mathscr{H}} 
\newcommand{\scal}{R} 
\newcommand{\tens}{G} 
\newcommand{\curv}{\Omega} 
\newcommand{\rheo}[1]{\bm{#1}} 
\newcommand{\conv}[1]{\bm{#1}} 
\newcommand{\comp}[1]{{#1}_0} 
\newcommand{\ip}[1]{\langle #1 \rangle} 
\newcommand{\sugra}{\mathfrak{sugra}} 
\newcommand{\susy}{\mathpzc{susy}}
\newcommand{\geom}{\mathpzc{geom}}
\newcommand{\compo}{\mathpzc{comp}}
\newcommand{\red}{\mathpzc{red}} 
\newcommand{\can}{\mathpzc{can}} 
\renewcommand{\flat}{\mathpzc{flat}} 
\newcommand{\cl}{\mathpzc{cl}}
\newcommand{\trun}{\tau} 
\newcommand{\shift}{\sigma} 
\newcommand{\maps}{\colon}    
\newcommand{\R}{{\mathbb R}}  
\newcommand{\Z}{{\mathbb Z}}  
\newcommand{\SO}{{\rm SO}}     
\newcommand{\SL}{{\rm SL}}     
\newcommand{\Spin}{{\rm Spin}} 
\newcommand{\so}{{\mathfrak{so}}}  
\newcommand{\Hom}{\operatorname{Hom}} 
\newcommand{\inclusion}{\hookrightarrow}
\newcommand{\iso}{\cong} 
\newcommand{\isoto}{\stackrel{\sim}{\longrightarrow}} 
\newcommand{\vol}{\operatorname{vol}} 
\newcommand{\define}[1]{{\bf \boldmath{#1}}}
\newcommand{\arxiv}[1]{\href{http://arxiv.org/abs/#1}{arXiv:{#1}}}
\theoremstyle{plain}
\newtheorem{thm}{Theorem}[section]
\newtheorem{cor}[thm]{Corollary}
\newtheorem{prop}[thm]{Proposition}
\newtheorem{claim}[thm]{Claim}
\newtheorem{convention}[thm]{Convention}
\newtheorem{theorem}[thm]{Theorem}
\newtheorem{lemma}[thm]{Lemma}
\newtheorem{corollary}[thm]{Corollary}
\theoremstyle{definition}
\newtheorem{rem}[thm]{Remark}
\newtheorem{defn}[thm]{Definition}
\newtheorem{definition}[thm]{Definition}
\newtheorem{example}[thm]{Example}
\theoremstyle{remark}
\newtheorem{remark}[thm]{Remark}
\DeclareMathAlphabet{\mathpzc}{OT1}{pzc}{m}{it}
\numberwithin{equation}{section}
\DeclareRobustCommand{\cev}[1]{%
  \mathpalette\do@cev{#1}%
}
\newcommand{\do@cev}[2]{%
  \fix@cev{#1}{+}%
  \reflectbox{$\m@th#1\vec{\reflectbox{$\fix@cev{#1}{-}\m@th#1#2\fix@cev{#1}{+}$}}$}%
  \fix@cev{#1}{-}%
}
\newcommand{\fix@cev}[2]{%
  \ifx#1\displaystyle
    \mkern#23mu
  \else
    \ifx#1\textstyle
      \mkern#23mu
    \else
      \ifx#1\scriptstyle
        \mkern#22mu
      \else
        \mkern#22mu
      \fi
    \fi
  \fi
}
\newcommand{\slantone}[2]{{\raisebox{.1em}{$#1$}\left/\raisebox{-.1em}{$#2$}\right.}}
\newcommand*{\defeq}{\mathrel{\vcenter{\baselineskip0.5ex \lineskiplimit0pt
                     \hbox{\scriptsize.}\hbox{\scriptsize.}}}%
                     =}
\newcommand\asim{\mathrel{%
  \ooalign{\raise0.1ex\hbox{$\sim$}\cr\hidewidth\raise-0.8ex\hbox{\scalebox{0.9}{$\scriptstyle{x}$}}\hidewidth\cr}}}
\newcommand{\mani}{\ensuremath{\mathpzc{X}}}
\newcommand{\para}{\ensuremath{\mathpzc{S}}}
\newcommand{\stsheaf}{\ensuremath{\mathcal{O}_{\mathpzc{M}}}}
\newcommand{\beq}{\begin{equation}}
\newcommand{\eeq}{\end{equation}}
\newcommand{\bear}{\begin{eqnarray}}
\newcommand{\eear}{\end{eqnarray}}
\title[Poincar\'e duality and supergravity]{Poincar\'e duality and supergravity}
\author{Konstantin Eder}
\address{Institute for Quantum Gravity (IQG), Friedrich-Alexander-Universit\"at  Erlangen-N\"urnberg}
\curraddr{Staudtstrasse 7 / B2, 91058 Erlangen (Germany)}
\email{konstantin.eder@fau.de}
\author{John Huerta}
\address{Faculty of Physics, Ludwig Maximilian University}
\curraddr{Theresianstrasse 37, 80333, Munich (Germany)}
\email{john.huerta@tecnico.ulisboa.pt}
\author{Simone Noja}
\address{Dipartimento di Matematica, Universit\'a degli Studi di Bari `Aldo Moro'}
\curraddr{via Edoardo Orabona 4, 75125, Bari (Italy)}
\email{simone.noja@uniba.it}
\begin{document}
\maketitle

\begin{abstract}

We study relative differential and integral forms on families of supermanifolds and their cohomology. 
We prove a relative Poincar\'e--Verdier duality and show that it
relates the cohomology of differential and integral forms, admitting a concrete geometric realization
via Berezin fiber integration. We further introduce the Poincar\'e--dual integral form associated to
an embedded even family and prove that it satisfies the correct localization property. We then apply these results to
supergravity, focusing on the $3d$ case. In this setting, we show that relative Poincar\'e
duality provides the natural framework for encoding the data needed to relate a superspace
formulation to the physical spacetime, thereby yielding a rigorous definition of picture changing
operators used in the physics literature. Building on this, after a careful analysis of the space
of fields and the relevant constraints, we prove that the component, superspace, and geometric
formulation of the theory are all equivalent. Finally, under suitable hypotheses, we argue that our construction
illustrates a general principle governing the mathematical formulation of classical
field theories on supermanifolds.

\end{abstract}

\tableofcontents

\section{Introduction}

In de Rham theory, Poincar\'e duality admits a beautiful geometric incarnation: wedge product and
integration furnish a pairing between ordinary differential forms and compactly supported forms,
pairing degree $p$ with degree $n-p$ on an $n$-dimensional oriented manifold, and descending to a
perfect pairing on cohomology \cite{BottTu}. We will call this pairing, the \emph{wedge--and--integrate pairing}, Poincar\'e pairing: 
one may regard this  as the basic mechanism by which geometry produces duality.

The purpose of this paper is twofold: first, to develop a supergeometric analogue of the classical
wedge--and--integrate mechanism underlying Poincar\'e duality, formulated in the relative setting of
\emph{families} of supermanifolds; and second, to demonstrate the concrete effectiveness of this
formalism by applying it to a physical theory, supergravity.
Both aspects are essential: the abstract derived duality framework provides the correct notion of
integration and localization in superspace, while the supergravity case shows that these structures
are not merely formal, but supply a rigorous and usable toolkit for treating physical theories on
superspaces.

On a supermanifold $\mani$, the de Rham complex is qualitatively different from the
classical one \cite{BerLei, Kostant, Manin, Noja}. Besides anticommuting one--forms such as $dx$ coming from even directions, there are
commuting one--forms such as $d\theta$ coming from odd directions; consequently the de Rham complex
is not bounded above, and in particular there is no ``top degree'' form. From the classical point of
view, this is precisely the obstruction to integrating differential forms on $\mani$ in a way that sees
the odd directions. The correct integration theory instead involves \emph{integral forms}, and their complex, 
which we call \emph{Spencer complex} after $\mathcal{D}$-module theory \cite{BerLei, Penkov}. 

Conceptually, integral forms arise as the twisted dual (by the Berezinian line bundle) of the de Rham complex \cite{BerLei, CNR, CGN, Lie, Noja, Pol1}. 
From the perspective of pairing, the guiding
principle is that wedge--and--integrate pairing is restored as a pairing between de Rham cohomology
of differential forms and compactly supported Spencer cohomology of integral forms \cite{BerLei, Manin, Noja}. 

A second layer appears when one takes seriously the needs of physics. Supersymmetric theories are
often written ``on a single superspace'', but computations implicitly assume an unlimited supply of
odd coefficients (supersymmetry parameters, odd moduli, auxiliary odd variables in variations, \emph{et cetera}, see for example \cite{CGV, DFClassical, DFSuper, DonWit, DonOtt1, DonOtt2, Eder, FKP}).
A clean way to encode this is to work with a \emph{family} of supermanifolds
$\varphi \maps \mani\to \para$, allowing arbitrary base change $\para'\to \para$ \cite{DFClassical, Eder, GioSati, SSR, Kessler}; in this sense the base $\para$ is not an
inessential decoration, but a device for \emph{probing the odd directions of the super moduli space}.
If one were to rigidly restrict to a single supermanifold over a point, then pullback to physical
spacetime (an ordinary manifold) forces odd coefficients to vanish and the fermionic degrees of
freedom disappear \cite{EderJMP, Hack:2015vna}. Thus, the family viewpoint is necessary already for well-definedness.

Working over a base also forces a conceptual shift in what it means to ``restrict to spacetime''.
In general, the ordinary manifold underlying a supermanifold -- its \emph{reduced space} $\mani_{\red}$ --  
is not a relative object over $\para$, and one must instead work with
underlying \emph{even families} $\varphi^\ev  \maps \mani^{\ev} \to \para $ and embeddings
$\iota \maps \mani^{\ev} \hookrightarrow \mani$
over $\para$ \cite{Eder, Kessler}. In this setting, for a trivial family $\pr_\para  \maps \fib \times \para \to \para$, the distinguished embedding
constructed from the canonical embedding $\iota_{\mathpzc{can}}  \maps \fib_{\mathpzc{red}} \hookrightarrow \fib$ of the reduced fiber manifold inside the fiber supermanifold,
plays a distinguished role: it is the geometric avatar of the physical
prescription ``set the odd coordinates to zero'' to reduce to ordinary spacetime (here the reduced fiber manifold $\fib_{\mathpzc{red}}$), but now carried out in a context where odd
parameters can survive through the base.

The first part of the paper develops a uniform theory of differential and integral forms in families,
together with their cohomology and dualities, in such a way that the absolute theory over a point is
recovered as a special case \cite{Noja}. We define relative de Rham forms and relative
integral forms, study their
cohomology with and without support conditions (and develop the relevant Cartan calculus).

Main technical results are the Poincar\'e lemmas (compactly supported and not) 
for differential and integral forms, which are indispensable for a clean cohomological integration theory 
and fills a gap in the existing literature \cite{CNR, Manin, Noja}. 
At the conceptual level, we formulate and prove a \emph{Poincar\'e--Verdier duality} for
families of supermanifolds. The natural home of this statement is that of derived categories \cite{KashSha}: we identify
the appropriate dualizing complex for a family and express duality as a derived equivalence
relating the relative de Rham complex and the relative Spencer complex. Importantly, this is not a
purely formal exercise: we also give a concrete geometric realization in terms of \emph{Berezin
fiber integration}, so that the derived adjunctions become explicit operations on relative forms.

A particularly useful concrete output is a supergeometric theory of \emph{Poincar\'e duals} for even
families of supermanifolds. To an embedding $\iota \maps \mani^{\ev}\hookrightarrow \mani$ over $\para$ we attach a
canonical closed integral $0$--form $Y_\iota$ in the Spencer complex characterized by the
localization identity
\[
  \int_{\mani^{\ev}/\para} \iota^*\omega \;=\; \int_{\mani/\para} Y_\iota\cdot \omega,
\]
for de Rham forms $\omega$ of the appropriate degree. Thus $Y_\iota$ is the Poincar\'e dual of
$\iota$ in a precise supergeometric sense: it packages the dependence on an embedded spacetime
family into a cohomological datum on the ambient superspace family.

The second part applies the theory to the mathematical foundations of supergravity, focusing on
three--dimensional ($3d$, $\mathcal{N}=1$) supergravity as a nontrivial test case \cite{CCG, Merkulov, RRvN, ZP}. Here the role of our duality
theory becomes sharply visible.

In the geometric formulation of supergravity \cite{AD, CDF, Castellani:2018zey, Cortes, FrancoisRavera, nLab}, Lagrangian densities are naturally formulated using differential forms, 
but an \emph{action} requires integration over superspace. Since differential forms
on a supermanifold do not furnish integrable top forms, one needs integral forms. In the physics
literature, the missing ingredient is supplied by \emph{picture changing operators} (PCOs) \cite{Belo, CGN, Carlo, FMS}. One of
the main conceptual novelties of this work is that relative Poincar\'e duality in families provides a
natural and intrinsic definition of PCOs: a PCO is precisely a closed integral $0$--form which is a
Poincar\'e dual to a choice of spacetime embedding inside superspace. In other words, the data needed
to relate superspace to physical spacetime is not an ad hoc analytic gadget, but a geometric and
cohomological choice governed by the same duality mechanism that underlies integration theory on supermanifolds.

We develop this point in a way that is faithful both to geometry and to the physics practice of
switching among seemingly different formulations of the same theory. We treat the three standard
approaches to supergravity \cite{CCG}: the \emph{component} formulation (where fields live on spacetime and supersymmetry is not manifest),
the \emph{superspace} formulation (where fields live on superspace and supersymmetry is geometric),
and the \emph{geometric} (or rheonomic) formulation (which mediates between the two: fields live on
superspace, but correspond one--to--one with component fields via rheonomic parametrizations).
A key structural feature is that both geometric and superspace formalisms, \emph{i.e.}, the formalisms which actually rely on supermanifolds,
organize fields into commutative ``triangles'' relating superfields, constrained superfields, and component fields.\footnote{Though, in the
superspace approach one must in general work up to gauge equivalence, naturally leading from sheaves
to stacks, while the geometric approach is more rigid.}

Within our framework, we give a mathematically precise action principle for geometric supergravity
by constructing the PCO associated with the ``canonical'' spacetime embedding, 
and defining the action by integrating the geometric Lagrangian against this
PCO. This produces an integral over superspace which is \emph{provably equal} to the usual component
action. We then analyze the superspace action and introduce a notion of \emph{supersymmetric} PCO,
designed to make supersymmetry manifest at the level of the action \cite{Carlo}. The comparison between the
geometric and superspace actions becomes a problem in cohomology: it reduces to
showing that two candidate PCOs differ by an exact term, i.e.\ are cohomologous. This is exactly the
sort of statement our duality theory is designed to formulate cleanly. In doing so, we also isolate
which parts of the comparison are genuinely subtle, namely we compare the space of fields and their constraints (with special attention to the elusive 
\emph{rheonomic constraints} \cite{AD, CDF, Castellani:2018zey, nLab}) in different formulation of the theories, and we explain where certain arguments in the
existing literature fall short. Finally, we extract from the $3d$ case a general principle: under
suitable hypotheses, mathematically well--defined supersymmetric actions in superspace are governed
by relative Poincar\'e duality and the cohomology class of the chosen PCO. \\
We stress that our framework reconciles mathematical rigor with physical consistency in supergravity
by combining three essential ingredients: the intrinsic supergeometry of differential and integral
forms (and Berezin integration), the family viewpoint needed to capture fermionic degrees of freedom,
and a careful treatment of the resulting spaces of fields. Finally, we wish to stress once again that supersymmetric field 
theories in physics continue to serve as a profound source of inspiration and a driving force for progress in pure mathematics

\medskip

\noindent\textbf{Structure of the paper.}
Section~2 introduces families of supermanifolds and the basic constructions we use, emphasizing base
change and underlying even families.
Section~3 defines relative differential and integral forms together with their de Rham and Spencer complexes,
and develops their algebraic calculus.
Section~4 studies their cohomology in the absolute setting, including in particular their Poincar\'e lemmas for compact
support and not.
Section~5 returns to families: here, after computing the dualizing complex of a family of supermanifolds, 
we prove a Poincar\'e--Verdier duality and a related notion of relative Poincar\'e duality relating de Rham and Spencer 
cohomology, and give a concrete realization via Berezin fiber integration; we
then construct Poincar\'e dual integral forms attached to embeddings of even families.
Finally, Section~6 applies the theory to $3d$ supergravity, defining PCOs geometrically and using
them to relate the component, geometric, and superspace formulations. Finally, Appendix~A develops 
Cartan calculus for integral forms, while Appendix~B gives a concrete derivation for compactly supported Poincar\'e
lemma for integral forms.

\vspace{.3cm}

\noindent
{\bf Acknowledgments.} The authors wish to thank Antonio Grassi, Roberto Catenacci and Leonardo Castellani. John Huerta
would like to thank Carlo Alberto Cremonini and Ingmar Saberi, as well as Rita Fioresi, Dimitry Leites and Andrea Santi for
discussions on supergeometry. Simone Noja wishes to thank Sergio
Cacciatori (for asking some questions that ultimately led to this paper), Sasha Voronov, Alberto Arabia, and 
acknowledges correspondence with Ugo Bruzzo, Daniel
Hern\'andez Ruip\'erez.

Part of this work was done while the authors visited the \emph{Simons Center for
  Geometry and Physics} (SCGP) and the \emph{Galileo Galileo Galilei Institute}
(GGI): the authors would like to collectively thank these institutions and the
organizers of the respective workshops (S. Grushevksy, A. Polishchuk \& S.
Cacciatori and R. Donagi \& A. Grassi) for their hospitality and the fantastic
working conditions. 

The work of John Huerta was supported by FCT/Portugal through project
UIDB/04459/2020. The work of Simone Noja was supported by the Deutsche
Forschungsgemeinschaft (DFG, German Research Foundation) Projektnummer 517493862
(Homologische Algebra der Supersymmetrie: Lokalit\"at, Unitarit\"at, Dualit\"at).

\section{Main definitions: families of supermanifolds}
\label{sec:maindefs}

\noindent
We now introduce the main objects we work with throughout this paper: supermanifolds
and families of supermanifolds. Briefly, a `supermanifold' is an ordinary manifold
equipped with a sheaf of supercommutative algebras. That sheaf turns out not to have
enough elements to support the superfield formalism in physics, so we need to work in
families: a `family of supermanifolds' is a fiber bundle of supermanifolds.
The base supermanifold in this fiber bundle provides us with additional odd
coefficients for our superfields, and base change allows us to obtain more as needed.

For a general introduction to supermanifolds, we suggest the reader refer to the
classic book by Manin \cite{Manin}. For references that focus on families, we
recommend the more recent work of Eder \cite{Eder} and Kessler \cite{Kessler}.

\begin{definition}[Superspaces and Morphisms]
  A \define{superspace} is a locally ringed space $\mani \defeq (|\mani|,
  \mathcal{O}_{\mani})$ whose structure sheaf is given by a sheaf of
  $\mathbb{Z}_2$-graded supercommutative rings $\mathcal{O}_{\mani} \defeq
  \mathcal{O}_{\mani, 0} \oplus \mathcal{O}_{\mani, 1}$. Given two superspaces
  $\mathpzc{X}$ and $\mathpzc{Y}$, a \define{morphism of superspaces} $\varphi \maps
  \mathpzc{X} \to \mathpzc{Y}$ is a pair $\varphi \defeq (|\varphi|, \varphi^\sharp)$ where
  \begin{enumerate}[leftmargin=*]
  \item $|\varphi| \maps |\mathpzc{X}| \to |\mathpzc{Y}|$ is a continuous map
    of topological spaces;
  \item $\varphi^\sharp \maps \mathcal{O}_{\mathpzc{Y}} \to |\varphi|_*
    \mathcal{O}_\mathpzc{X}$ is a morphism of sheaves of $\Z_2$-graded
    supercommutative rings, having the properties that it preserves the
    $\Z_2$-grading and the unique maximal ideal of each stalk, i.e. $\varphi^\sharp_x
    (\mathfrak{m}_{|\varphi| (x)}) \subseteq \mathfrak{m}_x$ for all $x \in |\mathpzc{X}|.$
  \end{enumerate}
\end{definition}

\noindent
Given a superspace $\mani$, the nilpotent sections in $\mathcal{O}_\mani$ define a sheaf of ideals.

\begin{definition}[Nilpotent Sheaf]
  Let $\mani$ be a superspace. The sheaf of ideals $\mathcal{J}_\mani  \defeq \langle \mathcal{O}_{\mani, 1} \rangle =
  \mathcal{O}_{\mani,1} + (\mathcal{O}_{\mani, 1})^2$, generated by all the odd
  sections in $\mathcal{O}_\mani = \mathcal{O}_{\mani, 0 } \oplus \mathcal{O}_{\mani,
  1}$ is called the \define{nilpotent sheaf} of $\mani$.
\end{definition}

\noindent
This definition is slightly misleading for a general superspace, which can have even
nilpotent elements not generated by the odd elements. However, for a supermanifold,
$\J_\mani$ is precisely the nilpotent ideal of $\Oh_\mani$. The supermanifold case is
our exclusive concern here, and we turn to the definition now.

\begin{definition}[Smooth / Complex / Algebraic Supermanifold and Morphisms]
  \label{supermanifold}
  A \define{smooth / complex / algebraic supermanifold} is a superspace $\mani$ such that
  \begin{enumerate}[leftmargin=*]
  \item $\mani_{\mathpzc{red}} \defeq (|\mani|, \mathcal{O}_{\mani} /
    \mathcal{J}_{\mani})$ is a smooth / complex manifold or an algebraic variety;
  \item the quotient sheaf $\mathcal{J}_{\mani} / \mathcal{J}^2_\mani$ is a
    locally-free sheaf of $\mathcal{O}_{\mani} / \mathcal{J}_\mani$-modules and the
    structure sheaf $\mathcal{O}_{\mani}$ is locally isomorphic to the sheaf of
    exterior algebras $\wedge^\bullet \left( \mathcal{J}_{\mani} /
      \mathcal{J}^2_\mani \right)$ over $\mathcal{O}_{\mani} / \mathcal{J}_\mani$.
  \end{enumerate}
  We call $\mani_{\mathpzc{red}}$ the \define{reduced space} of the supermanifold
  $\mani$ and we denote $\mathcal{O}_{\mani_{\mathpzc{red}}} \defeq
  \mathcal{O}_{\mani} / \mathcal{J}_{\mani}$ and $\mathcal{F}_\mani \defeq
  \mathcal{J}_{\mani} / \mathcal{J}_{\mani}^2 $ for short. Also, we say that the
  supermanifold $\mani$ has \define{dimension} $m|n$ if $m = \dim
  \mani_{\mathpzc{red}}$ and $n = \mbox{rank}\, \mathcal{J}_\mani /
  \mathcal{J}^2_\mani$. Finally, a \define{morphism of supermanifolds} is a morphism
  of superspaces. We will also call a morphism a \define{map}, and call an
  isomorphism of supermanifolds a \define{diffeomorphism} in the case of real supermanifolds.
\end{definition}

\begin{remark}
  Note that the \define{sheaf of regular functions}
  $\mathcal{O}_{\mani_{\mathpzc{red}}}$ on $\mathcal{X}_{\mathpzc{red}}$ is the sheaf
  of smooth, holomorphic or regular functions, depending on whether $\mani_{\mathpzc{red}}$ is
  a smooth manifold, a complex manifold, or an algebraic variety, and likewise $\mathcal{F}_\mani$ is the sheaf of
  sections of a smooth, holomorphic, or algebraic vector bundle. The sheaf
  $\mathcal{O}_\mani$ is filtered by powers of its nilpotent ideal $\mathcal{J}_\mani$:
  \begin{eqnarray}
    \label{filt}
    \mathcal{O}_\mani \supset \mathcal{J}_\mani \supset \mathcal{J}_\mani^2 \supset  \ldots \supset \mathcal{J}_\mani^{n} \supset \mathcal{J}_\mani^{n+1} = 0.
  \end{eqnarray}
  This allows us to define the associated $\mathbb{Z}$-graded sheaf of
  supercommutative algebras $\mbox{Gr}^{\bullet} \mathcal{O}_\mani$ by taking
  successive quotients:
  \begin{eqnarray}
    \label{grsheaf}
    \mbox{{Gr}}^{\bullet} \mathcal{O}_\mani \defeq
    \bigoplus_{i=0}^n \mbox{{Gr}}^{i} \stsheaf = \mathcal{O}_{\mani_{\mathpzc{red}}}
    \oplus \slantone{\mathcal{J}_\mani}{\mathcal{J}^2_{\mani}} \oplus \ldots \oplus
    \slantone{\mathcal{J}_\mani^{n-1}}{\mathcal{J}^n_{\mani}} \oplus
    {\mathcal{J}^n_\mani},
  \end{eqnarray}
  where $\mbox{{Gr}}^{i} \stsheaf \defeq
  \slantone{\mathcal{J}_\mani^i}{\mathcal{J}_\mani^{i+1}}$. It is not hard to check
  that $\mbox{Gr}^\bullet \mathcal{O}_\mani \iso \bigwedge^\bullet \mathcal{F}_\mani$
  as sheaves of $\Z$-graded algebras over $\mathcal{O}_{\mani_{\mathpzc{red}}}$. This
  implies that the pair of \emph{classical} objects $(\mani_{\mathpzc{red}},
  \mathcal{F}_\mani)$, namely a manifold and a vector bundle, defines a local model
  for $\mani$ in the sense that $\mani$ is locally isomorphic to its associated
  graded supermanifold $\mbox{Gr}\, \mani \defeq (|\mani|, \mbox{Gr}^\bullet
  \mathcal{O}_\mani)$, by definition.
\end{remark}

Definition \ref{supermanifold} implies that any supermanifold comes endowed with a
short exact sequence of sheaves that encodes the relation between the supermanifold
itself and its reduced space:
\begin{eqnarray}
  \label{ses}
  \xymatrix{
  0 \ar[r] & \mathcal{J}_\mani \ar[r] & \mathcal{O}_{\mani} \ar[r]^{\iota^*} & \mathcal{O}_{\mani_{\mathpzc{red}}} \ar[r] & 0.
                                                                                                                                 }
\end{eqnarray}
The embedding $\iota \maps \mani_{\mathpzc{red}} \to \mani$ corresponding to the
morphism of sheaves $\iota^* \maps \mathcal{O}_\mani \to \iota_*
\mathcal{O}_{\mani_{\mathpzc{red}}}$ is canonical, and corresponds to modding out all
the nilpotent sections from the structure sheaf $\mathcal{O}_{\mani}$. Intuitively
speaking, we can view $\mani_\mathpzc{red}$ as the subspace defined by setting all nilpotent
sections to zero.

From here on, unless stated otherwise, all supermanifolds are understood to be real
smooth supermanifolds. We note that Definition \ref{supermanifold} allows us to
introduce local coordinates: a real smooth supermanifold of dimension $m|n$ is
locally diffeomorphic to $\R^{m|n} = (\R^m, \Cal^\infty_{\R^m} \otimes_\R
\wedge^\bullet_\R(\theta_1, \ldots, \theta_n))$, where $\Cal^\infty_{\R^m}$ denotes
the sheaf of smooth functions on $\R^m$, and $\wedge^\bullet_\R(\theta^1, \ldots,
\theta^n)$ denotes the real exterior algebra with generators $\theta^1, \ldots,
\theta^n$. The standard coordinate functions $x^1, \ldots, x^m$ on $\R^m$ are called
the \define{even coordinates} on $\R^{m|n}$, while the generators $\theta^1, \ldots,
\theta^n$ of the exterior algebra are called the \define{odd coordinates}. We will
often denote such a coordinate system by $x^a|\theta^\alpha$, the $|$ indicating the
division into even and odd just as it does with the dimension $m|n$. A supermanifold
that is diffeomorphic to an open submanifold of $\R^{m|n}$ is called a
\define{domain}; the existence of coordinates is equivalent to saying that every
supermanifold can be covered with domains.

We now come to our main object of study---instead of working with a single
supermanifold, we will consider families of them in the following sense.

\begin{definition}[Family of Supermanifolds]
  \label{deffam}
  Let $\mani$ and $\para$ be smooth supermanifolds such that $ |\para | $ is
  connected and $\dim_0 \para \leq \dim_0 \mani$, $\dim_1 \para \leq \dim_1 \mani$. A \define{family of supermanifolds} is
  a fiber bundle $\varphi \maps \mani \to \para$. \\
  This means that
  $\varphi \maps \mani \to \para$ is a smooth map of supermanifolds with the property
  that for each point $b \in |\para|$ there exist a domain $B \subseteq \para$
  containing $b$ and a diffeomorphism
  $\Phi \maps \varphi^{-1} (B) \rightarrow \mathpzc{F} \times B $, with
  $\mathpzc{F} $ the \define{fiber supermanifold} over $b$ such that the following
  commutes
  \[ \begin{tikzcd}
      \varphi^{-1} (B) \ar[rd, "\tilde \varphi", swap] \ar[rr, "\sim"] &&  \mathpzc{F} \times B  \ar[ld, "p_2"]  \\
      & B 
    \end{tikzcd} \] where $\tilde \varphi \defeq \varphi |_{\varphi^{-1} (B)}.$ In
  particular, a \define{trivial family of supermanifolds} is a trivial bundle,
  $\varphi \maps \mathpzc{F} \times \para \to \para.$

  We call $\para$ the \define{parametrizing supermanifold} or \define{base} of the
  family and $B\subseteq \para$ a \define{trivializing domain} or
  \define{trivializing open set} for $\varphi$. We will say that
the family has \define{relative dimension} $m|n$ if
$\dim \mathpzc{F} = \dim \mani - \dim \para = m|n$. In the following, we will often write
  the data of a family of supermanifolds $\varphi \maps \mani \to \para $ as
  $\mani_{/\mathpzc{S}}$ for short.
\end{definition}

\noindent
Note since we choose $\para$ to be connected, we can unambiguously speak of the fiber
$\mathpzc{F}$ of the family $\varphi \maps \mani \to \para$, as all fibers are
diffeomorphic to one another -- in the following we will also assume that the fiber
is \emph{connected}. More generally, unless otherwise stated, we will also assume
that both $\mani$ and $\para$ are \emph{orientable}, see also section
\ref{sec:fiberint}.

Notice that the above definition of a family of supermanifolds differs from the one
given in \cite{Kessler}, where the morphism $\varphi \maps \mani \to \para$ is only
assumed to be a surjective submersion, while definition \ref{deffam} can be rephrased
by saying that a family is a locally trivial surjective submersion. On the other
hand, if the surjective submersion is also proper, then it is a family of
supermanifolds in the sense of definition \ref{deffam}, by a super analog of the
Ehresmann fibration theorem\footnote{Several results of the present paper remain true
  in the more general case of family as surjective submersions as in
  \cite{Kessler}.}.

\begin{remark}[Families as morphisms of locally ringed spaces]\label{rem:family-ringed}
A family of supermanifolds $\varphi \maps \mani\to \para$ can be regarded as a morphism of locally ringed
spaces
\(
(|\varphi|,\varphi^\sharp) \maps (|\mani|,\mathcal O_\mani)\longrightarrow (|\para|,\mathcal O_\para).
\)
Here $|\varphi| \maps |\mani|\to |\para|$ is the underlying continuous map, and
\(
\varphi^\sharp \maps \mathcal O_\para \longrightarrow |\varphi|_*\mathcal O_\mani
\)
is a morphism of sheaves of supercommutative algebras satisfying the locality condition mentioned above (\emph{i.e.}\ for every
$x\in |\mani|$, the induced map on stalks
\(
\varphi^\sharp_x \maps \mathcal O_{\para,|\varphi|(x)} \longrightarrow \mathcal O_{\mani,x}
\)
sends the maximal ideal into the maximal ideal, i.e.\ $\varphi^\sharp_x(\mathfrak m_{|\varphi|(x)})\subseteq \mathfrak m_x$).
Equivalently, $\varphi^\sharp$ may be viewed as a morphism of sheaves on $|\mani|$,
\[
\varphi^\sharp \maps |\varphi|^{-1}\mathcal O_\para \longrightarrow \mathcal O_\mani,
\]
where we have used the adjunction $|\varphi|^{-1}\dashv |\varphi|_*$; in the sequel we will often
identify $|\varphi|^{-1}\mathcal O_\para$ with its image in $\mathcal O_\mani$ when convenient.

In our smooth setting, the assumption that $\varphi$ is a (surjective) submersion means that locally
$\varphi$ trivializes as a projection $F\times B\to B$. On such a trivializing open $U\simeq F\times B$,
$\varphi^\sharp$ is given by pullback along the projection,
\[
\varphi^\sharp(f)=f\circ \mbox{pr}_B,
\qquad f\in \mathcal O_B,
\]
and hence it is injective. In particular, one may regard $|\varphi|^{-1}\mathcal O_\para$ as a subsheaf of
$\mathcal O_\mani$ (though it is generally not fine).
\end{remark}

Given two families of supermanifolds, a morphism relating them is defined as follows.
\begin{definition}[Morphisms of Families of Supermanifolds] \label{defmor} Let $\mani_{/ \para}$ and $\mani'_{/ \para'}$ be families of supermanifolds over $\para$ and $\para'$ respectively. Then a morphism $f \maps \mani_{/ \para} \to \mani'_{/ \para'}$ of families of supermanifolds is a commutative square
\begin{eqnarray} \label{basec}
\xymatrix{ \mani \ar[r]^{\tilde f} \ar[d]_{\varphi_{\mani}} & \mani' \ar[d]^{\varphi_{\mani'}} \\
\para \ar[r]^{b} & \para',
}
\end{eqnarray}
where $\tilde f $ and $b$ are morphisms of supermanifolds. If in particular $b = id_\para$, then $f \maps \mani_{/ \para} \to \mani'_{/ \para} $ is called morphism of supermanifolds over $\para$, and it is given by a commutative triangle
\begin{eqnarray} \label{mapfam}
\xymatrix{ \mani \ar[rr]^{\tilde \varphi} \ar[dr]_{\varphi_{\mani}} & & \mani' \ar[dl]^{\varphi_{\mani'}} \\
& \para.
}
\end{eqnarray}
The set of all such maps is denoted with $\Hom_{\para} (\mani , \mani')$. The subset of $\Hom_{\para} (\mani , \mani')$ of invertible maps is denoted by $\mbox{Diff}_\para (\mani, \mani').$
\end{definition}

\noindent
Families of supermanifolds and their morphisms define a category.
\begin{definition}[Category of Relative Supermanifolds] We call $\mathbf{SMan}_{/}$ the category whose objects are families of supermanifolds in the sense of Definition \ref{deffam} and whose morphisms are morphisms of families of supermanifolds in the sense of Definition \ref{defmor} and we call it the category of relative supermanifolds. In particular, we define $\mathbf{SMan}_{/\para}$ the subcategory of $\mathbf{SMan}_{/}$ whose objects are families of supermanifolds over $\para$ and whose morphisms are morphisms of supermanifolds over $\para$ and we call it the category of $\para$-relative supermanifolds.
\end{definition}

\begin{remark}[Base Change and Families of Supermanifolds]
  Base change exists in the category of relative
  supermanifolds \cite{Eder, Kessler}. In particular, given a family $\varphi \maps \mani
  \to \para$ and a morphism of supermanifolds $b \maps \para' \to
  \para $, the fiber product $b^* \mani \defeq \mani \times_{\para} \para'$
  is a family of supermanifolds over $\para'$,
\begin{eqnarray}
\xymatrix{
\mani \times_{\para} \mathpzc{S}' \ar[d]_{p_\mani} \ar[r]^{\quad p_{\mathpzc{S}'}} & \mathpzc{S}' \ar[d]^{b}\\
\mani \ar[r]^{\varphi} & \para.
}
\end{eqnarray}
We call this \define{pullback} or \define{base change} along $b \maps \para' \to \para.$ Moreover, given a morphism of families of supermanifolds $f \maps \mani \to \mathpzc{Y} $ over $\para$ as in \eqref{mapfam}, then the map $b^* f \defeq (f, id_{\para'}) \maps b^* \mani \to b^* \mathpzc{Y}$ is a map of families of supermanifolds over $\para'$, \emph{i.e.} the following diagram commutes
\begin{eqnarray}
\xymatrix@R=10pt{
\mani \times_{\para} \para' \ar@/^/[rrrr]^{(\varphi, id_{\para^{\prime}})} \ar[drr]^{p_{\para'}} \ar@/_/[ddr]_{p_{\mani}} & & & &  \mathpzc{Y} \times_{\para} \para' \ar[dll]_{p_{\para'}} \ar@/^/[ddl]^{p_\mathpzc{Y}} \\
& & \para' \ar[dd] |!{[d];[r]}\hole & & \\
& \mani \ar[dr]_{b_\mani} \ar[rr]_{\varphi \qquad }  & & \mathpzc{Y} \ar[dl]^{b_{\mathpzc{Y}}} \\
& & \para.
}
\end{eqnarray}
This says that the base change via $b \maps \para' \to \para$ defines a covariant functor $b^* \maps \mathbf{SMan}_{/\para} \to \mathbf{SMan}_{/ \para'}$.

Finally notice that the property for a family to be trivial is stable under base change, \emph{i.e.} if $ \mani_{/\para}$ is a trivial family over $\para$, then $b^* \mani_{/\para'}$ is a trivial family over $\para'$ for any $b \maps \para' \to \para$. Indeed, if $\mani_{/\para}$ is trivial, then $\mani_{/ \para} \iso \fib \times \para$ for some supermanifold $\fib$. It follows that
\begin{align}
(\fib \times \para) \times_{\para} \para' \iso \fib \times (\para \times_\para \para') \iso \fib \times \para',
\end{align}
having used associativity of the fiber product and that $\para \times_\para
\para' \iso \para' $ for any $\para'$ over $\para$. See Eder's
thesis \cite{Eder} for more details.

As advocated by Deligne and Morgan \cite{DM} and more recently by Kessler
\cite{Kessler}, we will only consider \emph{geometric constructions}, i.e.,
constructions which are compatible with \emph{any} base change.
\end{remark}

\begin{remark}[Reduced Space and Relative Supermanifolds] Whereas when
  dealing with a single supermanifold, the notion of reduced space is fundamental, 
  it must be observed that it is no longer a meaningful construction
  when dealing with families of supermanifolds. \\
  Indeed,
  let $\varphi \maps \mani \to \para$ be a family of supermanifolds, with $\para$
  a supermanifold of dimension $m|n \geq 0|1$. While is it true that
  $\varphi_{\mani_{\mathpzc{red}}} \maps \mani_{\mathpzc{red}} \to
  \para_{\mathpzc{red}}$ is a family over $\para_{\mathpzc{red}},$ the map
  $\mani_{\mathpzc{red}} \to \para$ does
  not define a family of supermanifolds over $\para$. In particular, this implies
  that for a family $\mani_{/ \para},$ the natural embedding of the reduced space
  $\iota \maps \mani_{\mathpzc{red}} \hookrightarrow \mani$ inside the supermanifold
  $\mani$ is \emph{not} a map of supermanifolds over $\para$. To make up for this,
  Kessler introduced the notion of underlying even families, see \cite{Kessler}, which
  substitutes the notion of reduced space in the case that the base supermanifold has odd
  coordinates. Due to its importance in what follows, we will recall the definition here. 
\end{remark}

\begin{definition}[Underlying even manifolds] \label{def:uem} Let $\mani_{/\para} $ be a family of supermanifolds of relative dimension $m|n$ over $\para$. An \define{underlying even manifold} for the family $\mani_{/\para}$ is a pair $(\mani^{\mathpzc{ev}}_{/\para}, \iota_{\mathpzc{ev}})$ where $\mani_{/ \para}^\mathpzc{ev} $ is a family of supermanifolds of relative dimension $m|0$ over $\para$ and $\iota_{\mathpzc{ev}} \maps \mani_{/ \para}^{\mathpzc{ev}} \hookrightarrow \mani_{/\para}$ is an embedding of families of supermanifolds over $\para,$ such that it is the identity on the underlying topological space.\\
Equivalently, $\iota_{\mathpzc{ev}}$ is a morphism of locally ringed spaces over $\para$ whose pullback
$\iota_{\mathpzc{ev}}^\sharp$ identifies $\mathcal O_{\mani_{\mathpzc{ev}}}$ with a subsheaf of
$\mathcal O_\mani$ that contains $\varphi^{-1}(\mathcal O_\para)$ and has no odd generators in the fibers.
\end{definition}

\begin{example}[Underlying even submanifold for $\R^{m|n}_{/\para}$]

  The basic example is given considering the trivial family
  $\R^{m|n} \times \para$. One needs to define an embedding
  $\iota_{\mathpzc{ev}} \maps \R^{m|0} \times \para \to \R^{m|n} \times \para$.
  To this purpose, let $y^a$ be the standard coordinates on $\R^{m|0}$, let
  $X^A = (x^a | \theta^\alpha)$ be the standard coordinates on $\R^{m|n}$ and
  $L^C = (\ell^c | \lambda^\gamma)$ some local coordinates on $\para$. Then an embedding over $\para$ is specified by its pullback on coordinate functions
  \begin{align}
    & \iota_{\mathpzc{ev}}^\sharp x^a = f^a (y, \ell | \lambda) = y^a + \sum_{\underline \gamma} \lambda^{\underline \gamma} f_{\underline \gamma}^a (y, \ell), \\
    & \iota_{\mathpzc{ev}}^\sharp \theta^\alpha = f^\alpha (y, \ell | \lambda).
  \end{align}
  It follows that, provided that $\para \neq \para_{\mathpzc{red}}$, there are
  \emph{infinitely many} underlying even manifolds for a family $\mani_{/ \para}$,
  whilst over a point $\R^{0|0}$ there is just one (canonical) reduced space
  $\mani_{\mathpzc{red}}$. 
\end{example}
On the other hand, let us define a change of coordinates $\xi \in \mbox{Diff}_\para (\R^{m|0}\times \para) $ on the underlying even manifolds $\R^{m|0} \times \para$ (source reparametrization) by declaring the new even coordinates $\tilde y^a$ on $\mathbb R^{m|0}\times \para$ to be
\begin{eqnarray}
\tilde y^a \ \defeq\ f^a(y,\ell\,|\,\lambda).
\end{eqnarray}
Further, define a diffeomorphism $\Xi \in \mbox{Diff}_{\para} (\R^{m|n} \times \para) $ on $\R^{m|n} \times \para$ (target reparametrization) by keeping $x^a$ fixed and shifting the odd coordinates:
\begin{eqnarray}
\tilde x^a \defeq x^a, \qquad \tilde \theta^\alpha \defeq - f^\alpha (x, \ell | \lambda) + \theta^\alpha.
\end{eqnarray}
Then one checks directly on generators that
\begin{align}
(\Xi\circ \iota_{\mathpzc{ev}} \circ \xi)^\sharp(\tilde x^a)=\tilde y^a,\qquad
(\Xi\circ \iota_{\mathpzc{ev}} \circ \xi)^\sharp(\tilde\theta^\alpha)=0.
\end{align}
We will call this composite map the \define{canonical embedding} of the underlying even family, and we will denote it by $\iota_{\can} \maps \mathbb{R}^{m|0}_{/\para} \to \mathbb{R}^{m|n}_{/\para}$.
\begin{eqnarray}
\xymatrix{
\R^{m|0} \times \para \ar@/_1.5pc/[rrr]^{\iota_{\can}} \ar[r]^{\xi } & \R^{m|0} \times \para \ar[r]^{\iota_{\mathpzc{ev}}} & \R^{m|n} \times \para \ar[r]^{\Xi} & \R^{m|n} \times \para.
}
\end{eqnarray}
It is characterized by the conditions
\begin{eqnarray} \label{cancoord}
\iota^\sharp_{\can} x^a = y^a, \qquad \iota^\sharp_{\can} \theta^\alpha = 0,
\end{eqnarray}
More in general, we give the following definition.
\begin{definition}[Canonical even manifold of a trivial family]\label{def:ican-trivial}
Let \(
\varphi  \maps \mathpzc{F}\times \para \to S.
\)
be a trivial family of supermanifolds. Fix a splitting of $\fib$, \emph{i.e.}\ an identification
$\mathcal O_\fib \cong \wedge^\bullet \mathcal{E}^\vee$ for some vector bundle $\mathcal{E}\to \fib_{\red}$.
Let
\(
\mathrm{aug}_\fib \maps O_\fib \longrightarrow \mathcal O_{\fib_{\red}}
\)
be the corresponding augmentation (projection to degree $0$).
The \define{canonical even manifold} of the trivial family is the pair $(\fib_{\mathpzc{red}} \times \para_{/\para}, \iota_{\can})$, where the closed embedding over $\para$
\begin{equation}
\iota_{\can}:\;\fib_{\red}\times \para \hookrightarrow \fib\times \para
\end{equation}
defined on the underlying topological spaces by the identity and on structure sheaves by
\[
\iota_{\can}^\sharp \defeq \mathrm{aug}_\fib \boxtimes \mathrm{id}_{\mathcal O_\para}:
\ \mathcal O_\fib \boxtimes \mathcal O_\para \longrightarrow
\mathcal O_{\fib_{\red}}\boxtimes \mathcal O_\para.
\]
Equivalently, $\iota_{\can}$ is the closed embedding cut out by the ideal sheaf generated by the odd
functions on the fiber $\fib$ (\emph{i.e.}\ the kernel of $\mathrm{aug}_\fib$), tensored with $\mathcal O_\para$.
\end{definition}

For a non-trivial family $\varphi:\mani\to \para$ there is, in general, no intrinsically preferred
choice of an underlying even manifold: it is a choice, and different choices are related by relative diffeomorphisms over $\para$
(\emph{i.e.}\ they form a $\Diff_\para$-torsor).

\section{Differential and integral forms on families of supermanifolds}
\label{sec:forms}

\noindent
We now turn to the definition of differential and integral forms and their complexes
for families of supermanifolds\footnote{Though, the given definitions will make sense in any geometric category.}.
To begin, for a given supermanifold $\mani$, we define the tangent sheaf of $\mani$ as the sheaf of derivations of the structure sheaf $\Oh_X$
\begin{equation}
  \T_\mani \defeq \Der_{\mathbb{R}}(\Oh_\mani, \Oh_\mani) .
\end{equation}
As usual, a global section of this sheaf is called a \define{vector field} on
$\mani$. Given a family of supermanifolds $\varphi \maps \mani \to \para$, one has a surjective morphism of sheaves of
$\mathcal{O}_{\mani}$-modules $d \varphi \maps \mathcal{T}_{\mani} \to \varphi^*
\mathcal{T}_{\para}$. This paves the way for our next definition.

\begin{definition}[Relative Tangent Sheaf]
  Let $\varphi \maps \mani \to \para$ be a family of supermanifolds. We define the \define{relative tangent sheaf}
  of the family $\mani_{/\para}$ as 
    \begin{equation}
  \mathcal{T}_{\mani / \para } \defeq \Der_{\varphi^{-1}(\mathcal O_\para)}(\mathcal O_\mani,\mathcal O_\mani).
  \end{equation} 
  A section of $\T_{\mani / \para}$ is called a \define{vertical vector field} or simply a
  \define{relative vector field} on the family.
\end{definition}
\noindent The relative tangent sheaf $\mathcal{T}_{\mani / \para}$ is a sheaf of $\mathcal{O}_{\mani}$-modules.
For every open $U\subseteq \mani$ one has
\begin{equation}
\mathcal{T}_{\mani/ \para}(U)=\{\,D\in \Der_{\mathbb{R}}(\mathcal O_\mani(U),\mathcal O_\mani(U)) \mid
D(\varphi^\sharp(f))=0\ \text{for all } f\in \mathcal O_\para(V),\ \varphi(U)\subseteq V\,\}.
\end{equation}

\noindent The previous algebraic definition has a geometric counterpart. Indeed, the relative tangent sheaf is the kernel of the
  surjective morphism of sheaves $d\varphi \maps \mathcal{T}_{\mani} \to \varphi^*
  \mathcal{T}_{\para}$. In particular, there is a short exact sequence of sheaves of
  $\mathcal{O}_\mani$-modules as follows
  \begin{eqnarray} \label{rbes} \xymatrix{ 0 \ar[r] & \mathcal{T}_{\mani / \para}
    \ar[r] & \mathcal{T}_{\mani} \ar[r] & \varphi^* \mathcal{T}_{\para} \ar[r] & 0. }
  \end{eqnarray}

\noindent If $\varphi:\mani \to \para$ is family with typical fiber $\mathpzc{F}$, then on any open
$B\subseteq |\para|$ admitting a trivialization $\Phi  \maps \varphi^{-1}(B)\xrightarrow{\sim}\mathpzc{F}\times B$
(over $B$) we may rewrite the local splitting as
\begin{equation}
\mathcal{T}_{\mani / \para}\big|_{\varphi^{-1}(B)}
\;\cong\;
\Phi^*\!\big(\mathrm{pr}_{\mathpzc{F}}^{*}\mathcal{T}_{\mathpzc{F}}\big)
\;=\;
\Phi^*\!\big(\mathcal{T}_{\mathpzc{F}}\boxtimes \mathcal O_B\big).
\end{equation}
In the following, for the sake of notationn, we will slightly abuse notation
and simply write
\begin{equation}
\mathcal{T}_{\mani/ \para}\big|_{\varphi^{-1}(B)} \;\cong\; T_{\fib}\boxtimes \mathcal O_B.
\end{equation}
This isomorphism is the one induced by the trivialization $\Phi$; different choices of $\Phi$ are related on overlaps
by the natural action of the transition maps in $\Diff(\fib)$ on $\mathcal{T}_{\fib}$.

  Accordingly, relative vector fields can be thought of as fields along the fibers of $\varphi \maps \mani
  \to \para$, which are parametrized by the variables which are
  ``transverse'' to the fibers:
  with abuse of notation, forgetting about the
  pull-backs, we will write for a local section of $\mathcal{T}_{\mani / \para}$
  \begin{eqnarray}
    X = \sum_{a = 1}^m f^a( X^A, s^L) \frac{\partial}{\partial x^a} + \sum_{\alpha = 1}^n g^\alpha (X^A, s^L) \frac{\partial}{\partial \theta^\alpha},
  \end{eqnarray}
  where $f^a, g^\alpha \in \mathcal{O}_{\mani} |_{\varphi^{-1} (B)}$, and where $X^A = x^a
  |\theta^\alpha$ are relative local coordinates and $s^L$ are (even and odd) local
  coordinates on the base $\para.$

Dualizing the tangent sheaves of $\mani $ and $\para$ one obtains the sheaves of 1-forms on $\mani$ and
$\para$, respectively, that we call $\Omega^1_\mani =
\HOM_{\Oh_\mani}(\T_\mani, \Oh_\mani)$ and $\Omega^1_{\para} =
\HOM_{\Oh_\para}(\T_\para, \Oh_\para)$. The sheaf of relative 1-forms,
can be introduced similarly to the relative tangent bundle.
\begin{definition}[Sheaf of Relative 1-Forms]
  Let $\varphi \maps \mani \to \para$ be a family of supermanifolds and let
  $\mathcal{T}_{\mani / \para}$ its relative tangent sheaf.
  The \define{relative cotangent sheaf} of the family $\mani_{/\para}$ is the dual
  $\Oh_\mani$-module, $\Omega^1_{\mani / \para} \defeq
  \mathcal{H}om_{\mathcal{O}_\mani} (\mathcal{T}_{\mani / \para},
  \mathcal{O}_\mani)$.
\end{definition}

\noindent
Alternatively, the geometric avatar of the previous definition comes by defining the relative cotangent sheaf as the cokernel sheaf of the map $d\varphi^\vee \maps \varphi^* \Omega^1_{\para} \to \Omega^1_{\mani}$, so that one has the dual of the short exact sequence \eqref{rbes},
\begin{eqnarray} \label{relses}
\xymatrix{
0 \ar[r] & \varphi^* \Omega^1_{\para} \ar[r] & \Omega^1_{\mani} \ar[r] & \Omega^1_{\mani / \para} \ar[r] & 0.
}
\end{eqnarray}
The sheaf $\Omega^1_{\mani / \para}$ is a locally-free sheaf of relative rank $m|n$ if the
  family is of relative dimension $m|n$. Locally, on an open set $\varphi^{-1}(B) \iso \fib \times B$ one has
  \begin{eqnarray}
    \label{split_omega1}
  \Omega^1_{\mani / \para} |_{\varphi^{-1}(B)} \iso  \Omega^1_\mathpzc{F} \boxtimes \mathcal{O}_B.
  \end{eqnarray}
  As with vector fields, equation \eqref{split_omega1} implies that relative 1-forms can be thought of as 1-forms along the fibers of $\varphi \maps \mani \to \para$, parametrized by the variables which are transverse to the fibers, and we will write a local section of $\Omega^1_{\mani / \para}$ as follows
\begin{eqnarray}
\omega = \sum_{a = 1}^m dx^a f_a (X^A , s^L) +  \sum_{\alpha = 1}^n d\theta^\alpha g_\alpha(X^A, s^L),
\end{eqnarray}
where again $f_a$ and $g_\alpha$ are local sections of $\mathcal{O}_{\mani }|_{\varphi^{-1}(B)}$, with $X^A = x^a | \theta^a$ relative local coordinates for $\mani$ and $s^A$ local coordinates for $\para.$

Now that we have $\mathcal{T}_{\mani / \para} $ and $\Omega^1_{\mani / \para}$ in hand, we can define a duality pairing between forms and fields as
\begin{eqnarray}
  \label{eq:pairing}
\xymatrix{
( - , - ) \maps \mathcal{T}_{\mani / \para} \otimes_{\mathcal{O}_{\mani}} \Omega^1_{\mani / \para} \ar[r] & \mathcal{O}_{\mani},
}
\end{eqnarray}
where one has $(fX, g\omega) = (-1)^{|g||X|} fg (X, \omega)$, for $f,g \in \mathcal{O}_{\mani}.$

\begin{remark}[Differential forms and polyvector fields]
  \label{rem:signs}
  Given a sheaf $\mathcal{O}_\mani$-modules $\mathcal{E},$ in the following we will
  often consider its sheaves of exterior and symmetric (super)algebras, which will be
  denoted by the symbols $\bigwedge^\bullet \mathcal{E}$ and $\bigvee^\bullet
  \mathcal{E}$ for any $p \geq 0$. In the specific case $\mathcal{E}$ is the sheaf of
  relative 1-forms or vector fields, we write
  \begin{eqnarray}
    \textstyle \Omega^p_{\mani /\para } \defeq \bigwedge^p \Omega^1_{\mani / \para}, \qquad \PV^{-p}_{\mani / \para} \defeq \bigwedge^p \mathcal{T}_{\mani / \para},
  \end{eqnarray}
  for any $p \geq 0$, and we call their sections \define{$p$-forms} and
  \define{$p$-vector fields}, respectively. More generally, a section of
  $\PV^\bullet_{\mani / \para}$ is called a \define{polyvector field}.

  Both $\Omega^\bullet_{\mani / \para}$ and $\PV^\bullet_{\mani / \para}$ are sheaves
  of $\mathcal{O}_{\mani}$-modules and carry a $\mathbb{Z} \times \mathbb{Z}_2$
  grading. We call the $\Z$-grading the \define{cohomological degree} and the
  $\Z_2$-grading the \define{parity}. Both degrees contribute to the sign rule
  independently, as advocated by Deligne and Morgan \cite[p. 62, Point of View
  I]{DM}. For example, if $\alpha$ and $\beta$ are differential forms of degree $p$
  and $q$ respectively, then we have
  \begin{equation}
    \alpha \wedge \beta = (-1)^{pq + |\alpha| |\beta|} \beta \wedge \alpha ,
  \end{equation}
  where $|\alpha|, |\beta| \in \Z_2$ are the parities, independent of the
  cohomological degrees $p$ and $q$. Moreover, in relative local coordinates, coordinate vector fields and
  1-forms are assigned the following degrees
  $\deg_{\mathbb{Z} \times \mathbb{Z}_2} \cdot = ( \deg_{\mathbb{Z}} \cdot, \deg_{\mathbb{Z}_2} \cdot )$
  in the $\mathbb{Z} \times \mathbb{Z}_2$ grading:
  \begin{eqnarray}
    \begin{tabular}{lrrrrrr}
      \toprule
      &  $x^a$ & $\theta^\alpha$  & $\partial_{x^a}$  & $\partial_{\theta^\alpha}$ & $dx^a$ & $d\theta^\alpha$ \\
      \midrule
      $\deg_{\mathbb{Z} \times \mathbb{Z}_2}$ & $(0,0)$ & $(0,1)$ & $(-1,0)$ & $(-1,1)$ & $(1,0)$ & $(1,1)$ \\
      \bottomrule
    \end{tabular}
  \end{eqnarray}
  This choice leads to the following commutation relations
  \begin{align}
    & dx^a \wedge dx^b = - dx^b \wedge dx^a, \qquad d\theta^\alpha \wedge d\theta^\beta = d\theta^\beta \wedge d\theta^\alpha, \qquad dx^a \wedge d\theta^\alpha = - d \theta^\alpha \wedge dx^a,
  \end{align}
  and similarly for vector fields. It follows that if $\mani_{/ \para}$ is of
  relative dimension $m|n$ with $n\geq 1$, the sheaves of
  $\mathcal{O}_{\mani}$-modules $\Omega^p_{\mani / \para}$ and
  $\PV^p_{\mani / \para}$ are non-zero for all $p\geq 0$. Also, notice that using
  this convention, the pairing \eqref{eq:pairing} is indeed of degree
  0.\footnote{There is a different definition due to Bernstein and Leites for the
    exterior algebra of a supermodule or a sheaf of modules, see \cite{BerLei}. In
    particular, they defined
    $\wedge^\bullet \mathcal{E} \defeq \vee^\bullet \Pi \mathcal{E}$, where $\Pi$ is
    the parity changing functor. The two definitions, the one given by Deligne that
    we employed in the present paper and the one by Bernstein and Leites introduced
    above, are equivalent - in the sense that there is an equivalence of categories
    and an isomorphism of functors, as shown by Deligne--Morgan \cite[p. 64]{DM}.
    Nonetheless, the attentive reader might see that they lead to different sign
    conventions, therefore a great deal of attention must be paid.}
\end{remark}

\begin{remark}[Exterior Derivative] The \define{relative exterior derivative} $d \maps \mathcal{O}_\mani \to \Omega^1_{\mani / \para}$ is defined given an action on relative vector fields. In particular, using the pairing \eqref{eq:pairing} above, one defines
\begin{eqnarray}
( X, d_{\mani / \para} f ) \defeq X(f),
\end{eqnarray}
where $f$ is a section of $\mathcal{O}_\mani$ and $X$ is a section of $\mathcal{T}_{\mani / \para}.$ In relative local coordinates, $X^A \defeq x^a | \theta^\alpha$, the exterior derivative has the following expression
\begin{eqnarray}
d_{\mani / \para} f = \sum_A \left (dx^A \otimes \frac{\partial}{\partial_{X^A}} \right ) f = \sum_{a = 1}^m dx^a \frac{\partial f (X, s)}{\partial x^a} + \sum_{\alpha = 1}^n d\theta^\alpha \frac{\partial f (X, s)}{\partial \theta^\alpha}.
\end{eqnarray}
According to the above definitions, $d_{\mani / \para}$ is of
degree $(1,0)$ in the $\mathbb{Z}\times \mathbb{Z}_2$ grading. The relative exterior derivative can be continued to a $\varphi^{-1}(\mathcal{O}_\para)$-linear derivation $d_{\mani / \para} \maps \Omega^{k}_{\mani / \para} \to \Omega^{k+1}_{\mani / \para},$ which is easily checked to square to zero.  This allows us to give the following definition.
\end{remark}

\begin{definition}[Relative de Rham Complex] Let $\mani_{/\para}$ be a family of supermanifolds. We call the pair $(\Omega^\bullet_{\mani / \para}, d_{\mani / \para})$ the \define{relative de Rham complex} of the family $\mani_{ / \para}$ or de Rham complex of $\mani_{/ \para}$ for short,
\begin{eqnarray} \xymatrix{
0 \ar[r] & \mathcal{O}_{\mani} \ar[r]^{d_{\mani / \para}} & \Omega^1_{\mani / \para} \ar[r]^{d_{\mani / \para}} & \Omega^2_{\mani / \para} \ar[r]^{d_{\mani / \para}} & \ldots \ar[r]^{d_{\mani / \para}} & \Omega^p_{\mani / \para} \ar[r]^{d_{\mani / \para}} & \ldots .
}
\end{eqnarray}
\end{definition}

\noindent
The pair $(\Omega^\bullet_{\mani / \para}, d_{\mani / \para})$ defines a sheaf of
differential graded superalgebras (DGsA).

\begin{remark}[Berezinian Sheaf and Integral Forms]
  \label{rem:ber}
  As observed above, the (relative) de Rham complex is \emph{not} bounded from above.
  In particular, there is no tensor density to integrate over in the de Rham complex
  of a supermanifold. To define a meaningful integration theory, one needs to
  introduce a second complex, which is -- in a sense that we will made precise shortly
  -- dual to the de Rham complex. This new complex is widely known as the complex of
  \emph{integral forms}. 
    Its construction is based on one of the landmark notions in
  supergeometry: the Berezinian sheaf, which serves as the analogue of the
  determinant sheaf in ordinary geometry. The literature features several related
  constructions of the Berezinian sheaf as a natural sheaf on a supermanifold, see
  for example the article of Hern\'andez--Mu\~noz \cite{HM}, which dates
  back to the early days of supergeometry, as well as more recent work \cite{BHP, DM,
    Kessler, NojaRe}. In the case of families of supermanifolds, the Berezinian
  arises from the sheaf of relative 1-forms.
\end{remark}

\begin{definition}[Relative Berezinian Sheaf] Let $\mani_{ / \para}$ be a family of supermanifolds of relative dimension $m|n$. We define the \define{relative Berezinian sheaf} $\bersheaf_{\mani / \para}$ of $\mani_{/ \para}$ as
\begin{eqnarray}
\bersheaf_{\mani / \para} \defeq \mathcal{E}xt^m_{\vee^\bullet (\Omega^1_{\mani / \para} )^\vee } (\Omega^0_{\mani /\para}, \vee^\bullet (\Omega^1_{\mani / \para} )^\vee ).
\end{eqnarray}
\end{definition}

\noindent
The relative Berezinian sheaf is a locally-free sheaf of $\Oh_\mani$-modules of rank $\delta_{n \mbox{\scriptsize{mod}2},0} | \delta_{n\mbox{\scriptsize{mod}2},1}$ \cite{Noja, NojaRe}. A representative of the above cohomology class in a relative coordinate chart of relative dimension $m|n$ is given by $\ber_{\mani / \para} = [\partial_{\theta^1} \ldots \partial_{\theta^n} \otimes dx^1\ldots dx^m].$ Notice that the parity of $\bersheaf_{\mani / \para}$ is only related to the derivations $\partial_{\theta}$'s in the representative of $\ber.$

On the other hand, we assign $\bersheaf_{\mani / \para}$ cohomological degree $m$, so that one has $\deg_{\mathbb{Z} \times \mathbb{Z}_2} \ber_{\mani / \para} = (m, n\, \mbox{mod} 2)$. This is in agreement with the convention used in \cite{DM} -- also, given the definition via $\mathcal{E}{xt}$-group, the assignment is natural from a derived point of view.

The Berezinian carries a natural \emph{right} action of vector fields in $\mathcal{T}_{\mani / \para}$, via its Lie derivative \cite{Manin, Noja, Penkov}. Given any vector fields $X \in \mathcal{T}_{\mani / \para}$, we will denote this action as
\begin{align}
\ber_{\mani / \para} \cdot X.
\end{align} In particular, this action makes the Berezinian sheaf into a sheaf of \emph{right} $\mathcal{D}_{\mani / \para}$-modules\footnote{Analogously, up to a sign, the Lie derivative endows the Berezinian with a flat right connection.}. Building on $\bersheaf_{\mani / \para} $, we give the following definition.

\begin{definition} Let $\mani_{ / \para}$ be a family of supermanifolds of relative
  dimension $m|n$. We define \define{relative integral $p$-forms} on $\mani_{/\para}$
  to be the sections of the tensor product of sheaves given by
\begin{eqnarray}
\bersheaf^{m-p}_{\mani / \para} \defeq \bersheaf_{\mani / \para} \otimes_{\mathcal{O}_{\mani}} \PV^{-p}_{\mani / \para}.
\end{eqnarray}
\end{definition}

\noindent
The parity of an integral form is assigned in the obvious way, taking into account the parity of $\ber_{\mani / \para} \in \bersheaf_{\mani / \para}$. On the other hand, the cohomological degree takes into account the previous convention that $p$-vector fields sit in degree $-p$ and the Berezinian sheaf sits in degree $m$ if $\mani_{/\para}$ is of relative dimension $m|n$. In other words, one has that if $\sigma \in \bersheaf_{\mani / \para}^{m-p}$, then $\deg_{\mathbb{Z}} (\sigma) = m - p,$ for any $p \geq 0.$ In particular, observe that integral forms are of negative degree whenever $p>m.$

Relative integral forms are defined as sheaves of $\Oh_\mani$-modules. While they are obviously not sheaves of algebras due to the presence of the Berezinian sheaf in their definition (indeed, multiplying two integral forms does not in general yield an integral form), an obvious action of polyvector fields $\PV^{\bullet}_{\mani / \para}$ and, more importantly, forms $\Omega_{\mani / \para}^\bullet$ can be given on them. We will denote the action of forms in $\Omega^\bullet_{\mani / \para}$ on $\Be^\bullet_{\mani / \para}$ by writing $
\sigma \cdot \omega$
for any integral form $\sigma$ and differential form $\omega.$

A differential can be defined for relative integral forms, though its definition is not straightforward, see \cite{CNR, Manin, Noja}, and originally given only in local coordinates \cite{BerLei, Penkov}. Here we provide an intrinsic definition.  
Let $\varphi \maps \mani \to \para$ be a family of supermanifolds of relative even dimension $m$. For $\ber \in \bersheaf_{\mani / \para}$ and homogeneous vector fields $Y_1,\dots,Y_k \in \mathcal{T}_{\mani / \para}$, let us consider the integral form $\ber \otimes Y_1\wedge\cdots\wedge Y_k \in \bersheaf_{\mani / \para}^{m-k}$, and define the operator $\delta_{\mani / \para} \maps \bersheaf^{m-k}_{\mani / \para} \to \bersheaf^{m-k+1}_{\mani / \para} $ as
\begin{align}
&\delta_{\mani / \para}\bigl(\ber \otimes Y_1\wedge\cdots\wedge Y_k\bigr)
=
\sum_{i=1}^k (-1)^{i+1}\,
(-1)^{|Y_i|\,(|\ber|+\sum_{j<i}|Y_j|)}
\,(\ber \cdot Y_i)\otimes Y_1\wedge\cdots\widehat{Y}_i \cdots\wedge Y_k
\label{eq:delta} \nonumber\\
&\quad+\sum_{1\le i<j\le k} (-1)^{i+j}\,
(-1)^{|Y_i|\sum_{l<i}|Y_l| + |Y_j|\sum_{l<j,l\neq i}|Y_l|}
\,\ber\otimes [Y_i,Y_j]\wedge Y_1\wedge\cdots\widehat{Y}_i \cdots \widehat{Y}_j \cdots \wedge Y_k.
\end{align}
This definition is particularly useful, as it makes clear that there exists a Cartan calculus on integral forms, as we shall discuss in Appendix \ref{app:Cartan}. Notice that according to the given definition also $\delta_{\mani/ \para}$ is of degree $(1,0)$ in the $\mathbb{Z}\times \mathbb{Z}_2$-grading, and, crucially, a direct calculation shows that $\delta_{\mani / \para}$ squares to zero, hence justifying the following definition.

\begin{definition} Let $\mani_{/\para}$ be a family of supermanifolds over $\para$.
  We call the pair $(\bersheaf^\bullet_{\mani / \para}, \delta_{\mani / \para})$ the
  \define{relative Spencer complex} of the family $\mani_{ / \para}$ or Spencer
  complex of $\mani_{/ \para}$ for short,
\begin{eqnarray} \xymatrix{
\ldots \ar[r]^{\delta_{\mani / \para} \qquad \;\; } & \ber_{\mani / \para} \otimes_{\mathcal{O}_{\mani}} \wedge^k\mathcal{T}_{\mani / \para} \ar[r]^{\qquad \quad \delta_{\mani / \para}} & \ldots \ar[r]^{\delta_{\mani / \para}\qquad \;} & \ber_{\mani / \para} \otimes_{\mathcal{O}_{\mani }} \mathcal{T}_{\mani / \para} \ar[r]^{\qquad \delta_{\mani / \para}} & \ber_{\mani / \para} \ar[r] & 0 .
}
\end{eqnarray}
\end{definition}

\noindent
It is important to note that also $\delta_{\mani / \para}$ is $\varphi^{-1} (\mathcal{O}_\para)$-linear: notice though, that in contrast to the relative de Rham complex, the pair $(\bersheaf^\bullet_{\mani / \para}, \delta_{\mani / \para})$ does not define a sheaf of differential graded superalgebras, but a sheaf of differential graded $\mathcal{O}_\mani$-supermodules instead. \\
Further, the $\Omega^\bullet_{\mani / \para}$-action on $\Be^\bullet_{\mani / \para}$ induced by contracting polyfields with differential forms is compatible with the differential $\delta_{\mani / \para}$, more precisely $\delta_{\mani / \para}$ is a derivator with respect to the $\Omega^\bullet_{\mani / \para}$-action
\begin{equation}
\delta_{\mani / \para}(\sigma \cdot \omega) = \delta_{\mani / \para} \sigma \cdot \omega +(-1)^{|\sigma|} \sigma \cdot d_{\mani / \para} \omega, 
\end{equation} 
for any integral form $\sigma$ and differential form $\omega$. More generally, this makes the relative Spencer complex into a sheaf of differential graded $\Omega^\bullet_{\mani / \para}$-modules. 
Finally, it is worth comparing the relative de Rham and Spencer complexes from the point of view of their cohomological degree.
\begin{eqnarray}
\xymatrix@R=10pt{
& 0 \ar[r] & \Omega^0_{\mani /\para}  \ar[r]\ar@{.}[d] & \Omega^1_{\mani /\para} \ar[r]\ar@{.}[d] & \cdots \ar[r]  & \Omega^m_{\mani /\para} \ar[r]\ar@{.}[d] & \Omega^{m+1}_{\mani /\para} \ar[r] & \ldots \\
\ldots \ar[r] & \bersheaf^{-1}_{\mani /\para} \ar[r] & \bersheaf^0_{\mani /\para} \ar[r] & \bersheaf^1_{\mani /\para} \ar[r] & \cdots \ar[r] & \bersheaf^m_{\mani /\para} \ar[r] & 0.
\save "1,3"."2,6"*[F]\frm{}
\restore
}
\end{eqnarray}
In the given degree convention, where $\bersheaf_{\mani / \para}$ is in degree $m$, the framed part of the above diagram is where both the sheaves of differential and integral forms are non-zero. We will see that this is useful, since the cohomology of these complexes can only be non-zero---in fact isomorphic---inside this highlighted frame.

\section{Cohomology of differential and integral forms}
\label{sec:cohomology}

\subsection{Cohomology and Poincar\'e lemmas}

In this section, we take a step back to study the cohomology of complexes of differential and integral forms on a single supermanifold $\mani$\footnote{Of course, a single supermanifold is nothing but a ``family'' over a point, that is a collapse map $c \maps \mani \to \R^{0|0}$, so that the definitions of the (absolute) de Rham $\Omega^\bullet_\mani$ and Spencer complex $\bersheaf^\bullet_\mani$ can be easily inferred from the previous section, or just read out from the literature -- see for example \cite{Manin} and the recent \cite{Noja} or \cite{Pol1} for the reader interested in the algebraic setting}. We will then leverage on these result to study cohomology of families of supermanifolds.
\begin{definition}[de Rham and Spencer Cohomology] Let $\mani$ be a supermanifold and let $\Omega^\bullet_{\mani}$ and $\bersheaf^\bullet_\mani$ be their de Rham and Spencer complex respectively. We define \define{de Rham cohomology} of $\mani$ to be the hypercohomology of the de Rham complex $\Omega^\bullet_\mani$, and we write
\begin{eqnarray}
H_{\mathpzc{dR}}^\bullet (\mani) \defeq \mathbb{H}^\bullet (\mani, \Omega^\bullet_\mani).
\end{eqnarray}
Likewise, we define the \define{Spencer cohomology} of $\mani$ to be the hypercohomology of the Spencer complex $\bersheaf^\bullet_\mani$, and we write
\begin{eqnarray}
H_{\mathpzc{Sp}}^\bullet (\mani) \defeq \mathbb{H}^\bullet (\mani, \bersheaf^\bullet_\mani).
\end{eqnarray}
\end{definition}

In the smooth category, since $\Omega^k_\mani$ and $\Be^k_\mani$ are fine sheaves on $|\mani|$, hypercohomology 
is computed by global sections. Hence, these definitions amount to computing $H_{\mathpzc{dR}}^\bullet
(\mani) = H^\bullet (\Gamma (\mani, \Omega^\bullet )$ and $H_{\mathpzc{Sp}}^\bullet (\mani) =
H^\bullet (\Gamma (\mani, \bersheaf^\bullet))$.

As it turns out, the de Rham and Spencer cohomology of $\mani$ compute the cohomology
of the reduced space $\mani_{\mathpzc{red}}$ underlying $\mani$, \cite{CNR, DM, Noja,
  Penkov, Pol1}. More precisely, the relevant Poincar\'e lemmas and their
consequences are collected in the following theorem.

\begin{theorem}[Poincar\'e Lemmas \& Cohomology]
  \label{thm:PL}
  The Poincar\'e lemmas for de Rham and Spencer cohomology read
\begin{eqnarray}
H^k_{\mathpzc{dR}} (\R^{m|n}) \iso H^k_{\mathpzc{Sp}} (\R^{m|n})\iso  \left \{ \begin{array}{lll}
\R & &  k = 0\\
0 & & k \neq 0.
\end{array}
\right.
\end{eqnarray}
In particular, $H^0_{\mathpzc{dR}} (\R^{m|n})$ is generated by $1 \in \Omega^0(\R^{m|n})$, and $H^0_{\mathpzc{Sp}} (\R^{m|n})$ is generated by
\begin{equation}
  \label{eq:generator}
  Y \defeq \ber^\c \otimes \, \theta^1 \theta^2 \cdots \theta^n \pd{x^1} \wedge
  \pd{x^2} \wedge \cdots \wedge \pd{x^m} \in \Ber^0(\R^{m|n}),
\end{equation}
where $\ber^c \in \Ber^m(\R^{m|n})$ is the coordinate Berezinian, $\ber^c = [\partial_{\theta^1} \ldots \partial_{\theta^n} \otimes dx^1\ldots dx^m]$.

More generally, let $\mani$ be a smooth supermanifold. The de Rham and Spencer
complexes are quasi-isomorphic and they define the same object in the derived
category of sheaves of $\mathbb{R}$-modules $\mbox{\sffamily{\emph{D}}}(\mani, \mathbb{R})$,
\begin{align}
  \mbox{\sffamily{\emph{D}}}(\mani) \owns (\Omega^\bullet_{\mani}, d) \cong (\R_\mani, 0) \cong (\bersheaf^\bullet_\mani, \delta)
\end{align}
In particular, for any $k$, one has
\begin{eqnarray}
  H^k_{\mathpzc{dR}} (\mani) \iso H^k ({\mani}, \R_{\mani}) \iso H^k_{\mathpzc{Sp}} (\mani).
\end{eqnarray}
In other words, both the de Rham and Spencer complex compute the cohomology of the
underlying topological space $|\mani|$, with real coefficients.
\end{theorem}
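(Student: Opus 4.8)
The plan is to split the statement into a de Rham part, a Spencer part, and a globalisation step, deducing everything from the two Poincar\'e lemmas on $\R^{m|n}$. The de Rham assertions are already the special case $\para=\R^{0|0}$ of the Lemma just proved: there $\varphi$ is the collapse map and $\varphi^{-1}(\Oh_\para)=\R_\mani$ is the constant sheaf, so $\Omega^\bullet_\mani\eqv\R_\mani$ in $\mbox{\sffamily{D}}(\mani)$; since each $\Omega^p_\mani$ is a fine sheaf this gives $H^k_\dR(\mani)=\mathbb{H}^k(\mani,\R_\mani)=H^k(|\mani|,\R)$, and specialising to $\mani=\R^{m|n}$ recovers the (super) Poincar\'e lemma with $H^0$ generated by $1\in\Omega^0(\R^{m|n})$. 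It remains to establish the analogous statements for the Spencer complex.

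For the Spencer Poincar\'e lemma on $\R^{m|n}$ I would factor $\R^{m|n}=\R^m\times\R^{0|n}$. Since the structure, tangent and Berezinian sheaves of a product split as external tensor products, the Spencer complex of $\R^{m|n}$ is, over $\R$, the tensor product of the Spencer complexes of the two factors with the sign-twisted total differential, so a K\"unneth argument reduces the computation to each factor. On $\R^m$ the contraction isomorphism $\bersheaf_{\R^m}\otimes\bigwedge^p\T_{\R^m}\isoto\Omega^{m-p}_{\R^m}$ identifies the Spencer complex with the ordinary de Rham complex of $\R^m$, which is acyclic except in cohomological degree $0$, where $H^0=\R$ is generated by $\ber^c\otimes\partial_{x^1}\wedge\cdots\wedge\partial_{x^m}$. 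On $\R^{0|n}$ the Spencer complex is the purely algebraic complex $\ber^c\otimes\R[\theta^1,\dots,\theta^n]\otimes\R[\partial_{\theta^1},\dots,\partial_{\theta^n}]$ (unbounded below; the $\theta^\alpha$ are Grassmann and the $\partial_{\theta^\alpha}$ polynomial) with differential $\delta=\sum_\alpha\frac{\partial}{\partial\theta^\alpha}\,\frac{\partial}{\partial(\partial_{\theta^\alpha})}$; one recognises a Koszul-type complex attached to the nondegenerate pairing between the $\theta^\alpha$ and the $\partial_{\theta^\alpha}$, and proves acyclicity away from degree $0$ either via the Koszul resolution or by exhibiting the explicit contracting homotopy $\kappa(\ber^c\otimes V)=\ber^c\otimes\sum_\alpha\partial_{\theta^\alpha}\,\frac{\partial V}{\partial\theta^\alpha}$ and checking that $\delta\kappa+\kappa\delta$ is a diagonalisable degree operator, invertible off the line spanned by $\ber^c\otimes\theta^1\cdots\theta^n$; this last class generates $H^0$. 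Tensoring the two factors then gives $H^k_\Sp(\R^{m|n})\cong\R$ for $k=0$ and $0$ otherwise, with $H^0$ generated by $\ber^c\otimes\theta^1\cdots\theta^n\,\partial_{x^1}\wedge\cdots\wedge\partial_{x^m}$, exactly as stated.

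To globalise, I would apply the Spencer Poincar\'e lemma on a basis for the topology of $|\mani|$ consisting of coordinate balls, each diffeomorphic to a ball in $\R^{m|n}$ and hence with Spencer cohomology as just computed. This shows that the cohomology sheaves of $\bersheaf^\bullet_\mani$ vanish in nonzero degree and that $h^0(\bersheaf^\bullet_\mani)$ is a rank-one local system; a short computation with the transformation rule of the Berezinian identifies it with $\R_\mani$, so the canonical augmentation $\R_\mani\inclusion\bersheaf^\bullet_\mani$ is a quasi-isomorphism of complexes of sheaves, and therefore $\bersheaf^\bullet_\mani\eqv\R_\mani\eqv\Omega^\bullet_\mani$ in $\mbox{\sffamily{D}}(\mani)$. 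Finally each $\bersheaf^p_\mani$ is an $\Oh_\mani$-module, hence fine and $\Gamma$-acyclic, so, exactly as in Remark \ref{derivedremark}, the derived pushforward of the Spencer complex coincides with its complex of global sections and $H^k_\Sp(\mani)=\mathbb{H}^k(\mani,\bersheaf^\bullet_\mani)=H^k(|\mani|,\R_\mani)=H^k(|\mani|,\R)$, which completes the argument.

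I expect the delicate point to be the Spencer Poincar\'e lemma over $\R^{0|n}$: the parity bookkeeping is subtle, since the $\theta^\alpha$ are odd while the dual symbols $\partial_{\theta^\alpha}$ carry total degree $0$ and hence commute, so getting the signs right in $\kappa$ and in $\delta\kappa+\kappa\delta$ --- and then assembling this homotopy with the classical one in the even directions --- requires care. A secondary subtlety is the identification of $h^0(\bersheaf^\bullet_\mani)$ with the \emph{trivial} line $\R_\mani$ rather than merely a rank-one local system; this can be sidestepped by assuming $\mani_{\mathpzc{red}}$ orientable, cf.\ the standing hypotheses in Section \ref{sec:fiberint}.
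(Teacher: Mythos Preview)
Your proof is correct. Note, however, that the paper does not actually prove this theorem in the main text: it is stated as a summary of known results and referenced to \cite{CNR, Deligne, Noja, Penkov, Pol1}. The commented-out material in the source reveals the authors' intended approach, which is that of \cite{CNR, Noja}: one introduces the double complex $\mathcal{C}^{\bullet\bullet}_\mani = \Omega^\bullet_\mani \otimes_{\Oh_\mani} \mathcal{D}_\mani \otimes_{\Oh_\mani} \mathcal{PV}^\bullet_\mani$ with two commuting differentials $\hat d$ and $\hat\delta$, and the two associated spectral sequences yield at $E_1$ the de Rham and Spencer complexes respectively, both converging at $E_2$ to $\R_\mani$ with the stated generators.

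Your route is genuinely different and more elementary. You avoid $\mathcal{D}$-modules entirely: the de Rham side is the specialisation $\para=\R^{0|0}$ of the preceding Lemma, and for the Spencer side the K\"unneth factorisation $\R^{m|n}=\R^m\times\R^{0|n}$ together with the explicit Koszul homotopy on the odd factor is direct and transparent. What the double-complex method buys is symmetry --- both Poincar\'e lemmas and the quasi-isomorphism $\Omega^\bullet_\mani\simeq\bersheaf^\bullet_\mani$ fall out of a single construction --- whereas your argument treats the two complexes independently and reassembles them only at the end via the common resolution of $\R_\mani$.

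On your second flagged subtlety: orientability is not needed. Already for $n=0$ the class $[dx^1\cdots dx^m\otimes\partial_{x^1}\wedge\cdots\wedge\partial_{x^m}]$ is coordinate-invariant because the Jacobian determinant from the Berezinian cancels against its inverse from the polyfield; in the super case the Berezinian of the Jacobian cancels in the same way against the combined transformation of the $\theta$'s and the $\partial_x$'s, consistently with the paper stating the theorem without any orientability hypothesis.
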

Before we go on, it is very important to stress the role of the non-zero class in the Poincar\'e lemma for Spencer cohomology, the element $Y \in \Be^0 (\mathbb{R}^{m|n}).$ First off, it is easy to see 
that $Y$ is globally-defined, closed and not exact, on every (connected) supermanifold $\mani$ -- not only the model supermanifold $\mathbb{R}^{m|n}$ --, \emph{i.e.}\ $Y \in H^0_{\Sp} (\mani)$, see \cite{Carlo, Noja}. Pairing 
a differential form with $Y$ concretely realizes the quasi-isomorphism between the de Rham and Spencer complex. More precisely, we have the following result.
\begin{theorem}[$Y$ as a quasi-iso]\label{thm:pairing-with-Y-global}
Let $\mani$ be a smooth supermanifold of dimension $m|n$.
Let $\mathcal{Y}  \maps \Omega^\bullet_\mani \to \Be^\bullet_\mani$ be the morphism of complexes defined by pairing a differential form with $Y$,
\begin{equation}
\xymatrix@R=1.5pt{
\mathcal{Y}  \maps \Omega^\bullet_\mani \ar[r] & \Be^\bullet_\mani \\
\omega \ar@{|->}[r] & Y \cdot \omega
}
\end{equation}
Then $\mathcal{Y}$ is a cochain map, \emph{i.e.} $\delta \circ \mathcal{Y} = (-1)^{|\mathcal{Y}|} \mathcal{Y} \circ d$. In particular, it is a quasi-isomorphism.
\end{theorem}
\begin{proof}
Let $U\subseteq \mani$ be open and $\omega\in \Omega_X^k(U)$.
Since $Y$ is a global section, we may restrict it to $U$ and compute using the fact that $\delta$ is a derivator, 
\[
\delta\big(\mathcal{Y}(\omega)\big)
=\delta\big(Y\cdot \omega\big)
=\delta(Y)\cdot \omega + (-1)^{|Y|}\,Y\cdot d\omega = (-1)^{|Y|}\,Y\cdot d\omega = (-1)^{|\mathcal{Y}|} \mathcal{Y} (d\omega),
\]
having used that $\delta (Y) = 0.$
That $\mathcal{Y}$ is a quasi-isomorphism is a local statement on $\mani$. Indeed, a morphism of complexes of sheaves is a quasi-isomorphism
if and only if it is a quasi-isomorphism on all stalks. To see this, fix $x\in |\mani|$ and choose a coordinate neighborhood $U\ni x$ with $U\simeq \mathbb{R}^{m|n}$.
By restriction, $\mathcal{Y}|_U$ is the model map
\(
(\Omega^\bullet(\mathbb{R}^{m|n}),d)\rightarrow (\Be^{\bullet}(\mathbb{R}^{m|n}),\delta) 
\) still given by 
\( \omega\longmapsto Y\cdot \omega,
\)
where $Y$ restricts to the standard local representative as above (up to an overall unit).
It follows from Theorem \ref{thm:PL} that on $\mathbb{R}^{m|n}$ this map is a quasi-isomorphism: both complexes have cohomology concentrated
in degree $0$ and $\mathcal{Y}(1)=Y$ identifies the corresponding generators. Therefore $\mathcal{Y}_x$ is a quasi-isomorphism of stalk complexes for every $x$, hence $\mathcal{Y}$ is a
quasi-isomorphism of complexes of sheaves on $\mani$.
\end{proof}
Before we go on, we observe in passing that the sign in the cochain map could also be absorbed by redefining $\mathcal{Y} (\omega) = (-1)^{m|\omega|} Y \cdot \omega$ for $\omega$ a differential form of degree $k$.

\subsection{Compactly supported cohomology and Poincar\'e lemmas}
\label{sec:compactsupport}

In this section we will introduce the compactly supported variant of the ordinary de
Rham and Spencer cohomology, and we will compute the related Poincar\'e lemmas.
\begin{definition}[Compactly Supported de Rham and Spencer Cohomology] Let $\mani$ be a supermanifold and let $\Omega^\bullet_{\mani}$ and $\bersheaf^\bullet_\mani$ be their de Rham and Spencer complex respectively. We define compactly supported de Rham cohomology of $\mani$ to be the compactly supported hypercohomology of the de Rham complex $\Omega^\bullet_\mani$, and we write
\begin{eqnarray}
H_{\mathpzc{dR}, \mathpzc{c}}^\bullet (\mani) \defeq \mathbb{H}^\bullet_{\mathpzc{c}} (X, \Omega^\bullet_\mani).
\end{eqnarray}
Likewise, we define the compactly supported Spencer cohomology of $\mani$ to be the compactly supported hypercohomology of the Spencer complex $\bersheaf^\bullet_\mani$, and we write
\begin{eqnarray}
H_{\mathpzc{Sp}, \mathpzc{c}}^\bullet (\mani) \defeq \mathbb{H}^\bullet_{\mathpzc{c}} (\mani, \bersheaf^\bullet_\mani).
\end{eqnarray}
\end{definition}

\noindent
These amount to computing the cohomologies $H_{\mathpzc{dR}, \mathpzc{c}}^\bullet (\mani) = H^k (\Gamma_{\mathpzc{c}} (\mani, \Omega^\bullet_{\mani}))$, and $H_{\mathpzc{Sp}, \mathpzc{c}}^\bullet (\mani) = H^k (\Gamma_{\mathpzc{c}} (\mani, \bersheaf^\bullet_{\mani} ))$.

As we discussed, the Poincar\'e lemmas for ordinary de Rham and Spencer cohomology are well-established in the literature, the same is not true---to our best knowledge---for their compactly supported counterparts. We remedy this here, by leveraging on our previous result, namely Theorem \ref{thm:pairing-with-Y-global}. Indeed, one can put to good use the quasi-isomorphism $\mathcal{Y}$ upon observing the following.
\begin{lemma}\label{prop:compact-support}
Let $X$ be a smooth supermanifold. The cochain map
$\mathcal{Y}:\Omega_\mani^\bullet\to \Be_\mani^{\bullet}$ given by $\mathcal{Y}(\omega)=Y\cdot\omega$
induces a quasi-isomorphism on compactly supported sections:
\[
\Gamma_\mathpzc{c}(\mani,\Omega_\mani^\bullet)\xrightarrow{\ \simeq\ }\Gamma_{\mathpzc{c}}(\mani,\Be_\mani^{\bullet}).
\]
Consequently,
\[
H^\bullet_{\mathpzc{dR,c}}(\mani)\cong H^\bullet_{\mathpzc{c}} (\mani , \mathbb{R}_\mani) \cong H^\bullet_{\mathpzc{Sp,c}}(\mani).
\]
\end{lemma}

\begin{proof}
Both $\Omega_\mani^k$ and $\Be_\mani^{k}$ are fine sheaves on $|\mani|$. Since fine sheaves are $\mathpzc{c}$-soft, the functor $\Gamma_\mathpzc{c}(\mani,-)$ is exact, hence it preserves quasi-isomorphism
between complexes of fine sheaves. In particular, applying $\Gamma_\mathpzc{c}(\mani,-)$ to the quasi-isomorphism $\mathcal{Y}$ yields a quasi-isomorphism of complexes.
\end{proof}
This allows us to prove the following theorem.

\begin{theorem}[Compactly Supported Poincar\'e Lemmas \& Cohomology] \label{thm:compsupp}
  The Poincar\'e lemmas for the compactly supported de Rham and Spencer cohomology
  read
  \begin{eqnarray}
    H^k_{\mathpzc{dR}, \mathpzc{c}} (\R^{m|n}) \iso H^k_{\mathpzc{Sp}, \mathpzc{c}} (\R^{m|n})\iso  \left \{ \begin{array}{lll}
      \R & &  k = m\\
      0 & & k \neq m.
    \end{array}
    \right.
  \end{eqnarray}
  In particular, $H^m_{\mathpzc{dR}, \mathpzc{c}} (\R^{m|n})$ is generated by
  $\mathpzc{B}_m dx_1 \ldots dx_m \in \Omega^m_\mathpzc{c}(\R^{m|n})$, where $\mathpzc{B}_m$ is a bump function that integrates to 1 on $\R^m$ and
  $H^m_{\mathpzc{Sp}, \mathpzc{c}} (\R^{m|n})$ is generated by
  \begin{equation}
  \ber^c \otimes \, \theta_1 \ldots \theta_n\mathpzc{B}_m \in \Ber^m_\mathpzc{c}(\R^{m|n}),
  \end{equation}
  where $\ber^c \in \Ber^m(\R^{m|n})$ is the coordinate Berezinian,
  $\ber^c = [\partial_{\theta^1} \ldots \partial_{\theta^n} \otimes dx^1\ldots dx^m]$.

  More generally, let $\mani$ be a smooth supermanifold. Then, for any $k$, one has
  \begin{eqnarray}
    H^k_{\mathpzc{dR}, \mathpzc{c}} (\mani) \iso H^k_\mathpzc{c} (|{\mani}|, \R_\mani) \iso H^k_{\mathpzc{Sp}, \mathpzc{c}} (\mani).
  \end{eqnarray}
  In other words, the compactly supported de Rham and Spencer complexes both compute
  the compactly supported cohomology of the underlying topological space $|\mani|$
  with real coefficients.
\end{theorem}
\begin{proof}
The compactly supported Poincar\'e lemma for differential forms can be obtained
exactly as in the classic textbook of Bott--Tu \cite{BottTu} upon paying some extra
care with signs: namely, on $\R^{m|n}$, the compactly supported cohomology agrees with that of $\R^m$, and the representative mirrors the
classical one. Next, Lemma \ref{prop:compact-support} yields the results for the compactly supported cohomology for integral forms. In particular, the only non-zero representative 
in $H^m_{\mathpzc{Sp,c}} (\mani)$ is obtained by applying $\mathcal{Y}$ to the only non-zero representative in $H^m_{\mathpzc{dR,c}} (\mani)$. Finally, the last part follows again from Lemma \ref{prop:compact-support}.
 \end{proof}
We note that the compactly supported Poincar\'e lemma proved above could also have been established without using the quasi-isomorphism $\mathcal{Y}$, by means of a direct computation exhibiting an explicit homotopy for the complex. Since this calculation is interesting and nontrivial, we present it in Appendix \ref{app:csintegral}.

Finally, before we move back to families of supermanifolds we stress that the representative $\ber^c \otimes \, \theta_1 \ldots \theta_n\mathpzc{B}_m \in H^m_{\mathpzc{Sp,c}} (\mani)$ is the key to define the notion of Berezin integral, as we shall see.



\section{Families of supermanifolds and cohomology}
\label{sec:families}

In this section we go back to families of supermanifolds $\varphi \maps \mani \rightarrow \para$. Namely, we will be concerned with relative differential and integral forms together with their cohomology. 
\begin{definition}[Relative de Rham and Spencer Cohomology] \label{relcoho} Let $\varphi\maps\mani \to \para$ be a family of supermanifolds and let $\Omega^\bullet_{\mani / \para}$ and $\bersheaf^\bullet_{\mani / \para}$ be the de relative Rham and Spencer complex respectively. We define the \define{relative de Rham cohomology} of the family $\varphi \maps \mani \rightarrow \para$ to be the hypercohomology of $\mathbf{R}\varphi_* \Omega^\bullet_{\mani / \para}$, and we write
\begin{eqnarray}
{H}_{\mathpzc{dR}}^\bullet (\mani{/\para}) \defeq \mathbb{H}^\bullet (\mani , \Omega^\bullet_{\mani / \para} ). 
\end{eqnarray}
Likewise, we define the \define{relative Spencer cohomology} of the family $\varphi \maps \mani \rightarrow \para$ to be the hypercohomology of the complex $\mathbf{R} \varphi_* \bersheaf^\bullet_{\mani / \para}$, and we write
\begin{eqnarray}
{H}_{\mathpzc{Sp}}^\bullet (\mani{/ \para}) \defeq \mathbb{H}^\bullet (\mani , \Be^\bullet_{\mani / \para} ). 
\end{eqnarray}
The same definitions can be given for the compactly supported case, simply considering compactly supported hypercohomology. Accordingly, we will denote the compactly supported relative de Rham and Spencer cohomology by ${H}_{\mathpzc{dR}, \mathpzc{c}}^\bullet (\mani{/\para})$ and ${H}_{\mathpzc{Sp}, \mathpzc{c}}^\bullet (\mani{/\para})$ respectively\footnote{These reduce to the usual definitions when $\para$ is a point.}.
\end{definition}
With reference to the previous definition, we observe that differentials $d_{\mani / \para}$ and $\delta_{\mani / \para}$ are $\varphi^{-1}(\mathcal{O}_\para)$-linear, and hence both the relative de Rham and relative Spencer complex define objects in the related derived category, \emph{i.e.}
\begin{equation}
\Omega^\bullet_{\mani / \para} \in \D^+({\varphi}^{-1} (\mathcal{O}_{\para})) \quad \mbox{and} \quad \Be^\bullet_{\mani / \para} \in \D^+({\varphi}^{-1} (\mathcal{O}_{\para})),
  \end{equation}
  where by $\D^+ (\varphi^{-1} (\mathcal{O}_S))$ we denote the derived category of (bounded below) $\varphi^{-1} ( \mathcal{O}_\para)$-modules and hence (note that this implies \emph{a fortiori} that they are also objects in the derived category of Abelian sheaves $\D^+ (\mani, \mathbb{R})$ on $\mani$ -- hence the previous definition Definition \ref{relcoho} is well-given).\\
  Applying the (derived) direct image $\mathbf{R}\varphi_{*} $ functor, we get objects in the derived category of $\mathcal{O}_\para$-modules, \emph{i.e.}
  \begin{align}
\mathbf{R}\varphi_* \Omega^\bullet_{\mani / \para} \in \D^+(\mathcal{O}_{\para}) \quad &\mbox{and} \quad \mathbf{R}\varphi_{*}\Be^\bullet_{\mani / \para} \in \D^+(\mathcal{O}_{\para}), 
  \end{align}    
and similarly applying the (derived) exceptional direct image (or direct image with compact support) $\mathbf{R}\varphi_{!}$ functor. Taking hypercohomology, one has the following identification in the derived category of $\mathbb{R}$-vector spaces $\D^+(\mathbb{R})$
\begin{equation}
\mathbb{H}^\bullet (\para, \mathbf{R} \varphi_* \bersheaf^\bullet_{\mani / \para}) \cong  \mathbb{H}^\bullet (\mani , \Omega^\bullet_{\mani / \para} ),  \qquad \mathbb{H}^\bullet (\para, \mathbf{R} \varphi_* \bersheaf^\bullet_{\mani / \para}) = \mathbb{H}^\bullet (\mani , \Be^\bullet_{\mani / \para}),
\end{equation}
and similarly for the compactly supported case. Note that since all of the terms of $\Omega^\bullet_{\mani/\para}$ and $\Be^\bullet_{\mani /\para}$ are fine sheaves in
the smooth category, these hypercohomology groups are computed by global sections, \emph{i.e.}
\begin{equation}
\mathbb{H}^\bullet (\mani , \Omega^\bullet_{\mani / \para} ) = H^\bullet (\Gamma (\mani, \Omega^\bullet_{\mani / \para})), \qquad \mathbb{H}^\bullet (\mani , \Be^\bullet_{\mani / \para} ) = H^\bullet (\Gamma (\mani, \Be^\bullet_{\mani / \para})),\end{equation}
and similarly for the compactly supported case.

Our first step will be to prove a relative version of the Poincar\'e lemma.  
 \begin{lemma}[Poincar\'e lemma for families of supermanifolds] \label{rel-poinc}
  Let $\varphi  \maps \mani \to \para$ be a family of supermanifolds. The inclusion
  $    \varphi^{-1}(\Oh_\para) \inclusion \Omega^\bullet_{\mani/\para}
  $  is a quasi-isomorphism. \\
  Hence, there is a
  canonical isomorphisms in $\mbox{\sffamily{\emph{D}}} (\varphi^{-1} (\mathcal{O}_\para))$
  \begin{equation}
\Omega^\bullet_{\mani
  / \para}\iso  \varphi^{-1} (\mathcal{O}_\para) \iso  \Be^\bullet_{\mani / \para}.
  \end{equation}
  \end{lemma}
\begin{proof}
Since $\varphi:\mani \to \para$ is a (locally trivial) submersion, for every point $x\in \mani$ there exists an
open neighborhood $U\ni x$ and an open set $B\subseteq S$ with $\varphi(U)\subseteq B$ such that
$\varphi|_U$ trivializes as the projection
\[
U \;\cong\; F\times B \xrightarrow{\ \mathrm{pr}_2\ } B,
\]
where $W$ is an open domain in the fiber $F \subset \fib$. Shrinking $F$ if necessary, we may assume that $F$ is
contractible. On such a trivializing open set, with obvious notation, we have a natural identification of complexes
\[
\Omega^\bullet_{\mani / \para}(U)\;\cong\;\Omega^\bullet(F) \widehat{\otimes}_{\mathbb{R}}\mathcal O_\para(B),
\]
where the differential is $d_{\mani/ \para}=d_F\otimes \mathrm{id}$, \emph{i.e.}\ the differential acts only along the fiber
directions\footnote{Here $\widehat{\otimes}$ denotes the completed (Fr\'echet) tensor product; since $C^\infty$-spaces
are nuclear, the completion is canonical and will not play any role in the quasi-isomorphism
arguments below.}. Under this identification, the canonical inclusion
\[
\varphi^{-1}(\mathcal O_\para)(U)=\mathcal O_\para(B)\longrightarrow \Omega^\bullet_{\mani / \para}(U)
\]
corresponds to the map $\mathcal O_\para(B)\to \Omega^\bullet(F)\otimes \mathcal O_\para(B)$ given by
$f\mapsto 1\otimes f$. Since $W$ is contractible, the Poincar\'e lemma, Theorem \ref{thm:PL}, yields a quasi-isomorphism
$\mathbb{R} \cong \Omega^\bullet(F),$ hence after tensoring with $\mathcal O_\para(B)$ we obtain a quasi-isomorphism
\[
\mathcal O_\para(B)\;\cong\;\Omega^\bullet(F)\widehat{\otimes}_{\mathbb{R}}\mathcal O_\para(B)
\;\cong\;\Omega^\bullet_{\mani/\para}(U).
\]
Therefore the inclusion $\varphi^{-1}(\mathcal O_\para)\hookrightarrow \Omega^\bullet_{\mani/\para}$ is a
quasi-isomorphism on a basis of the topology of $\mani$, hence it is a quasi-isomorphism of complexes of
sheaves. This yields the claimed canonical isomorphism
$\varphi^{-1}(\mathcal O_S)\simeq \Omega^\bullet_{X/S}$ in $\mbox{\sffamily{{D}}} (\varphi^{-1} (\mathcal{O}_\para))$.\\
For the second isomorphism, it suffices to show that $\Omega^\bullet_{\mani /\para}\cong \Be^\bullet_{\mani /\para}$. For this, in the same setting as above, one has
\[
\Be^\bullet_{\mani/\para}(U)\;\cong\;\Be^\bullet(F)\widehat{\otimes}_{\mathbb{R}}\mathcal O_\para(B).
\]
On the fiber, we have $F$ we have the absolute quasi-isomorphism 
\(
\mathcal{Y}  \maps \Omega^\bullet(F)\xrightarrow{\ \iso\ }\Be^\bullet(F),
\)
given by pairing with the cocycle $Y$ on $F$ as in Theorem \ref{thm:pairing-with-Y-global}.
Tensoring with $\mathcal O_\para(B)$ yields a quasi-isomorphism
\[
\Omega^\bullet_{\mani/\para}(U)\;\xrightarrow{\ \cong\ }\;\Be^\bullet_{\mani/\para}(U).
\]
Since the previous construction is compatible with restrictions, we obtain a quasi-isomorphism of
complexes of sheaves on $U$. As $\mani$ is covered by such trivializing open sets, it follows that
$\Omega^\bullet_{\mani/\para}\cong \Be^\bullet_{\mani/\para}$ as complexes of sheaves on $\mani$, completing the proof.
\end{proof} 
Upon observing that $\mathbf{R}\varphi_*$ and $\mathbf{R}\varphi_!$ send quasi-isomorphisms to quasi-isomorphisms (since they are derived functors), the previous theorem has the following immediate consequence. 
\begin{theorem}[Relative quasi-isomorphism and hypercohomology] Let $\varphi  \maps \mani \to \para$ be a family of supermanifolds. The following are canonical
isomorphisms in $\mbox{\sffamily{\emph{D}}} (\mathcal O_\para)$
\begin{equation}
\mathbf{R}\varphi_*\Omega^\bullet_{\mani/\para}\;\cong\;\mathbf{R}\varphi_*\varphi^{-1}\mathcal O_\para
\;\cong\;\mathbf{R}\varphi_*\Be^\bullet_{\mani/\para},
\qquad
\mathbf{R}\varphi_!\Omega^\bullet_{\mani/\para}\;\cong\;\mathbf{R}\varphi_!\Be^\bullet_{\mani/\para}.
\end{equation}
In particular, the relative de~Rham and Spencer cohomologies (with or without compact supports)
are canonically isomorphic, \emph{i.e.}\
\begin{equation}
H^\bullet_{\dR}(\mani/\para)\cong H^\bullet_{\Sp}(\mani/\para),
\qquad
H^\bullet_{\dR\mathpzc{,c}}(\mani/\para)\cong H^\bullet_{\Sp\mathpzc{,c}}(\mani/\para).
\end{equation}

\end{theorem}

In the rest of this section, we will exploit the powerful formalism of Grothendieck--Verdier duality in order to establish a Poincar\'e duality-like theorem for families of supermanifolds. We will then interpret this result through fiber integration and, in turn, unveil its deep connection to the formalism of picture changing operators used in theoretical physics. This leads to the first rigorous mathematical formulation of the picture changing operators' framework, providing a crucial bridge between supergeometry and physics.

\subsection{Grothendieck--Verdier duality and the dualizing complex}
\label{sec:gv}

Although the Grothendieck--Verdier duality framework extends to a much broader
context\footnote{For a comprehensive treatment of the formalism in its most general form, we refer the reader to thorough treatise \cite{KashSha}.}, here we will only briefly review it in the specific setting of families of
(smooth) supermanifolds $\varphi \maps \mani \to \para$, viewed as morphisms of locally ringed spaces as detailed in Remark \ref{rem:family-ringed}. 

We will systematically work with sheaves of $\varphi^{-1}(\mathcal O_\para)$-modules on $\mani$ (rather than $\mathcal O_\mani$-modules),
since the complexes of interest (\emph{e.g.}\ $\Omega^\bullet_{\mani/\para}$ and $\Be^\bullet_{\mani/\para}$) have
$\varphi^{-1}(\mathcal O_\para)$-linear differentials, as underlined above. We let $\mathbf{Mod}(\varphi^{-1}(\mathcal O_\para))$ be the abelian category of $\varphi^{-1}(\mathcal O_\para)$-modules on
$\mani$, and let $\mathbf{Mod}(\mathcal O_\para)$ be the abelian category of $\mathcal O_\para$-modules on $\para$.
We write
\(
\D^+\!\bigl(\varphi^{-1}(\mathcal O_\para)\bigr),\;
\D^-\!\bigl(\varphi^{-1}(\mathcal O_\para)\bigr),\;
\D^b\!\bigl(\varphi^{-1}(\mathcal O_\para)\bigr)
\)
for the derived categories of bounded below, bounded above, and bounded complexes, respectively
(and similarly $\D^\pm(\mathcal O_\para)$, $\D^b(\mathcal O_\para)$).
Unless otherwise stated, all derived functors below are taken in these categories, and expressions
are understood only under the relevant boundedness hypotheses.

Grothendieck--Verdier duality asserts that $\mathbf{R}\varphi_!:\D^+\!\bigl(\varphi^{-1}(\mathcal O_\para)\bigr)\rightarrow \D^+(\mathcal O_\para)
$ admits a right adjoint
\[
\varphi^!:\D^+(\mathcal O_\para)\longrightarrow \D^+\!\bigl(\varphi^{-1}(\mathcal O_\para)\bigr),
\]
called the \emph{exceptional inverse image}, characterized by a canonical isomorphism
\begin{equation}\label{eq:GV-adjunction-rel}
\mathbf{R}\mathrm{Hom}_{\mathcal O_S}\!\bigl(R\varphi_!(\mathcal M^\bullet),\,\mathcal N^\bullet\bigr)
\;\cong\;
\mathbf{R}\mathrm{Hom}_{\varphi^{-1}\mathcal O_S}\!\bigl(\mathcal M^\bullet,\,\varphi^!(\mathcal N^\bullet)\bigr),
\end{equation}
functorial in $\mathcal M^\bullet\in D^-\!\bigl(\varphi^{-1}(\mathcal O_\para)\bigr)$ and
$\mathcal N^\bullet\in D^+(\mathcal O_\para)$.
Equivalently, \eqref{eq:GV-adjunction-rel} says that $\varphi^!$ is a right adjoint of $R\varphi_!$.

The complex obtained from applying the exceptional inverse image functor $\varphi^!$ to the structure sheaf $\Oh_\para$ is so important to bear its own name.
\begin{definition}[Dualizing Complex] Let $\varphi\maps \mani \rightarrow \para$ be a family of supermanifolds. We call $\varphi^{!} (\Oh_\para) \in \D^+ (\varphi^{-1} (\Oh_\para))$ the \define{dualizing complex} of the family $\varphi \maps \mani \rightarrow \para$ and we denote it by $\omega^\bullet_{\mani / \para}$.
\end{definition}
Specializing \eqref{eq:GV-adjunction-rel} to $\mathcal N^\bullet=\mathcal O_S$ yields
\begin{equation}\label{eq:dualizing-adjunction-rel}
\mathbf{R}\mathrm{Hom}_{\mathcal O_\para}\!\bigl(\mathbf{R}\varphi_!(\mathcal M^\bullet),\,\mathcal O_\para\bigr)
\;\cong\;
\mathbf{R}\mathrm{Hom}_{\varphi^{-1}(\mathcal O_\para)}\!\bigl(\mathcal M^\bullet,\,\omega^\bullet_{\mani/\para}\bigr),
\end{equation}
for $\mathcal M^\bullet\in \D^-\!\bigl(\varphi^{-1}(\mathcal O_\para) \bigr).$

The following lemma provides an explicit expression for the dualizing complex $\omega^\bullet_{\mani / \para}.$

\begin{theorem}[Dualizing complex and relative Berezinian]\label{prop:omega-is-ber-relative}
Let $\varphi \maps \mani\to \para$ be a family of supermanifolds of relative dimension $m|n$. Then there is a canonical isomorphism in $\mbox{\sffamily{\emph{D}}} (\varphi^{-1} (\mathcal{O}_\para))$
\begin{equation}
\omega^\bullet_{\mani/\para}\;\cong\;\Be_{\mani/\para}[m].
\end{equation}
\end{theorem}

\begin{proof}
The statement is local on $\mani$. Fix $x\in \mani$ and choose a trivializing neighborhood
$U\subset \mani$ and an open $B\subset |\para|$ with $\varphi(U)\subseteq B$ such that
$U\;\cong\;F\times B,$ for $F$ an open subset of the typical fiber, and where $\varphi|_U=\mathrm{pr}_B$, the projection onto $B$.
Shrinking $U$ if necessary, we may assume that $F$
is contractible. It suffices to construct on each such $U$ a canonical isomorphism
$\omega^\bullet_{\mani/\para}|_U\simeq \Ber_{\mani/\para}|_U[m]$ compatible with restriction; these local
isomorphisms then glue.

On $U\simeq F\times B$, for $\varphi=\mathrm{pr}_B$, Grothendieck--Verdier duality gives the adjunction
\begin{equation}
\mathbf{R}\mathrm{Hom}_{\mathcal O_B}\!\bigl(\mathbf{R}\mathrm{pr}_{B\,!}(\mathcal M^\bullet),\mathcal O_B\bigr)
\cong
\mathbf{R}\mathrm{Hom}_{\mathrm{pr}_B^{-1}(\mathcal O_B)}\!\bigl(\mathcal M^\bullet,\mathrm{pr}_B^!(\mathcal O_B)\bigr),
\end{equation}
for $\mathcal M^\bullet\in \D^-\!\bigl(\mathrm{pr}_B^{-1}(\mathcal O_B)\bigr)$.
Taking $\mathcal M^\bullet=\mathrm{pr}_B^{-1}\mathcal O_B$ and using $\mathrm{pr}_B^!\mathcal O_B=\omega^\bullet_{X/S}|_U$,
we obtain in $\D(\mathcal O_B)$ an identification
\begin{equation}\label{eq:omega-as-dual}
\Gamma\!\bigl(U,\omega^\bullet_{X/S}\bigr)\;\cong\;
\mathbf{R}\mathrm{Hom}_{\mathcal O_B}\!\bigl(\mathbf{R}\Gamma_\mathpzc{c}(U/B,\mathrm{pr}_B^{-1}(\mathcal O_B)),\,\mathcal O_B(B)\bigr),
\end{equation}
where $\Gamma_\mathpzc{c}(U/B,-)$ denotes sections with compact support in the $F$-direction (fiberwise compact
support). In the smooth category, the relative Poincar\'e lemma identifies
$\mathrm{pr}_B^{-1}\mathcal O_B\simeq \Omega^\bullet_{U/B}$, hence $\mathbf{R}\Gamma_\mathpzc{c}(U/B,\mathrm{pr}_B^{-1}(\mathcal O_B))$ is
computed by the complex of compactly supported relative forms $\Omega^\bullet_{\mathpzc{c}}(F)\widehat{\otimes}\mathcal O_B(B)$.
Thus \eqref{eq:omega-as-dual} becomes
\begin{equation}\label{eq:omega-as-RHom}
\Gamma\!\bigl(U,\omega^\bullet_{\mani / \para}\bigr)\;\cong\;
\mathbf{R}\mathrm{Hom}_{\mathbb R}\!\bigl(\Omega^\bullet_{\mathpzc{c}}(F),\mathbb R\bigr)\;\widehat{\otimes}_{\mathbb R}\;\mathcal O_B(B)
\;\cong\;
\mathbf{R}\mathrm{Hom}_{\mathbb R}\!\bigl(\Omega^\bullet_{\mathpzc{c}}(F),\mathcal O_B(B)\bigr).
\end{equation}

Since $\Be^m(W)=\Ber(W)$, Berezin integration provides a canonical pairing,\footnote{This will be discussed in details in the next section.}
$
\Ber(F)\otimes_{\mathbb R}\Omega^0_\mathpzc{c}(F)\rightarrow \mathbb R,
$
given by 
\begin{equation}
(\be,f)\longmapsto \int_F \be f,
\end{equation}
where $\be f$ denotes multiplication of a Berezinian top integral form by a compactly supported function. 
Using this pairing, define a morphism of complexes
\begin{equation}\label{eq:Psi}
\Psi_F:\ \Ber_F[m]\longrightarrow \mathbf{R}\mathrm{Hom}_{\mathbb R}\!\bigl(\Omega^\bullet_\mathpzc{c}(F),\mathbb R\bigr)
\end{equation}
by sending $\be[m]$ to the cochain $\Psi_F(\be[m])$ which is zero on $\Omega^k_\mathpzc{c}(F)$ for $k\neq 0$
and on $\Omega^0_\mathpzc{c}(F)$ is given by $f\mapsto \int_F \be f$.
This is a cochain map, and it follows from compactly supported Poincar\'e lemma that it is a quasi-isomorphism.

By the previous discussion, on $U\cong F\times B$,
one has
$\Gamma\!\bigl(U,\omega^\bullet_{\mani/\para}\bigr)\cong
\mathbf{R}\mathrm{Hom}_{\mathbb{R}}\!\bigl(\Omega^\bullet_\mathpzc{c}(F),\,\mathcal O_B(B)\bigr),
$ so that, the canonical morphism $\Psi_F$ induces a canonical quasi-isomorphism
\begin{equation}
\mathbf{R}\mathrm{Hom}_{\mathbb{R}}\!\bigl(\Omega^\bullet_\mathpzc{c}(F),\,\mathcal O_B(B)\bigr)\;\cong\;
\Ber_F[m]\;\widehat{\otimes}_{\mathbb R}\;\mathcal O_B(B),
\end{equation}
and the right-hand side identifies with $\Gamma\!\bigl(U,\Ber_{\mani/\para}[m]\bigr)$ under the trivialization
$U\simeq F\times B$.
Therefore, on $U$ we obtain an isomorphism in $\D(\mathcal O_B)$
\begin{equation}
\Gamma\!\bigl(U,\omega^\bullet_{\mani/\para}\bigr)\;\cong\;\Gamma\!\bigl(U,\Ber_{\mani/\para}[m]\bigr).
\end{equation}
Equivalently, as complexes of sheaves of $\varphi^{-1}(\mathcal O_\para)$-modules on $U$, one has that in $\D \bigl(\varphi^{-1}\mathcal O_\para|_U\bigr)$
\begin{equation}
\omega^\bullet_{\mani/\para}\big|_U\;\cong\;\Ber_{\mani/\para}\big|_U[m].
\end{equation}
Finally, these local identifications are compatible with restriction to smaller trivializations and
with changes of fiber coordinates, hence they glue on
overlaps and yield a global isomorphism
$
\omega^\bullet_{\mani/\para}\cong\Ber_{\mani/\para}[m]$
in $\D \bigl(\varphi^{-1}(\mathcal O_\para)\bigr)$, as claimed. \end{proof}

The next result lists a series of consequences of the previous theorem.

\begin{lemma}[Triviality of the dualizing complex / relative Berezinian]\label{lem:dualizing-vs-berezinian}
Let $\varphi  \maps \mani \to \para$ be a family of supermanifolds of relative dimension $m|n$, and
let $\omega^\bullet_{\mani/\para}$ be the dualizing complex.
Then the following are equivalent:
\begin{enumerate}
\item there is an isomorphism $\omega^\bullet_{\mani/\para}\cong \varphi^{-1}(\mathcal O_\para)[m]$ in
$\mbox{\sffamily{\emph{D}}} (\varphi^{-1} (\mathcal{O}_\para))$;
\item the relative Berezinian line bundle $\Be_{\mani / \para }$ is trivial as a $\varphi^{-1}\mathcal (O_\para)$-module;
\item there exists a nowhere-vanishing relative Berezinian
$\hat \be \in \Gamma(\mani,\Be_{\mani/\para})$.
\end{enumerate}
\end{lemma}

\begin{proof} By Theorem \ref{prop:omega-is-ber-relative}, the dualizing complex is canonically identified as
$\omega^\bullet_{\mani/\para}\cong\Be_{\mani/\para}[m]$
in $\D(\varphi^{-1}\mathcal O_\para).$ Therefore $\omega^\bullet_{\mani/\para}\cong \varphi^{-1}(\mathcal O_\para)[m]$ holds if and only if
$\Be_{\mani/\para}\cong \varphi^{-1}(\mathcal O_\para)$, which in turn holds if and only if $\Ber_{\mani/\para}$
admits a nowhere-vanishing global section.
\end{proof}

\noindent
As a word of comment on the previous results, we stress that Theorem \ref{prop:omega-is-ber-relative} and the following Lemma \ref{lem:dualizing-vs-berezinian} hold in the same form also for more general morphisms (\emph{e.g.}\ surjective submersion). On the other hand, we stress that the orientability of the individual fibers is not sufficient to get results of the form of Lemma \ref{lem:dualizing-vs-berezinian}. Indeed, what is needed is a set of orientations that vary continuously (locally-constantly) along the base -- this is what is given by the nowhere-vanishing section of the Berezinian. In fact, one has the following sufficient condition.

\begin{corollary}\label{cor:orientable-implies-trivial-dualizing}
If both $\mani$ and $\para$ are orientable, then there is an isomorphism in $\mbox{\sffamily{\emph{D}}} (\varphi^{-1} (\mathcal{O}_\para))$
\begin{equation}
\omega^\bullet_{\mani/\para}\;\cong\;\varphi^{-1}(\mathcal O_\para)[m].
\end{equation}
\end{corollary}
\begin{proof}
Orientability of $\mani$ and $\para$ implies that $\Be_\mani$ and $\Be_\para$ are trivial line bundles.
On the other hand, for a smooth family one has the standard factorization $
\Be_\mani \cong \Be_{\mani/\para} \otimes \varphi^*(\Be_\para).$ Hence $\Be_{\mani/\para}$ is trivial, so Lemma~\ref{lem:dualizing-vs-berezinian} applies.
\end{proof}

\noindent
Finally, we stress that while $\Omega^\bullet_{\mani / \para} \cong \varphi^{-1} (\mathcal{O}_\para) \cong  \bersheaf^\bullet_{\mani}$ were canonical identifications, the previous one is not.

\subsection{Relative Poincar\'e duality}
\label{sec:relpoinc}

The results of the previous section enable one to establish the following theorem, which can be interpreted as a Poincar\'e-Verdier duality for families of supermanifolds.

\begin{theorem}[Poincar\'e--Verdier Duality] \label{PV} Let $\varphi \maps \mani \to \para$ be a family of supermanifolds of relative dimension $m|n$.  
There is a canonical isomorphism in $\mbox{\sffamily{\emph{D}}}(\mathcal{O}_\para)$
\begin{equation}\label{eq:PV-general-final}
\mathbf{R}\mathcal H \mbox{\emph{om}}_{\mathcal O_\para}\!\bigl(\mathbf{R}\varphi_!\,\varphi^{-1}\mathcal O_\para, \mathcal O_\para\bigr)
\;\cong\;
\mathbf{R}\varphi_*\omega^\bullet_{X/S},
\end{equation}
where $\omega^\bullet_{\mani / \para} \cong \Be_{\mani / \para}[m]$ is the dualizing complex. In particular, if $\mani$ and $\para$ are orientable, upon choosing a relative orientation, the previous isomorphism takes the form 
\begin{equation}\label{eq:PV-orientable-final}
\mathbf{R}\mathcal H \mbox{\emph{om}}_{\mathcal O_\para}\!\bigl(\mathbf{R}\varphi_!\,\Be^\bullet_{\mani/\para}, \mathcal O_\para\bigr)
\;\cong\;
\mathbf{R}\varphi_*\Omega^\bullet_{\mani/\para}[m].
\end{equation}
\end{theorem}

\begin{proof}
The sheafified form of Grothendieck--Verdier duality yields a canonical isomorphism in $\D(\mathcal O_\para)$
\begin{equation}\label{eq:GV-sheafified-used}
\mathbf{R}\mathcal H \mbox{om}_{\mathcal O_\para}\!\bigl(\mathbf{R}\varphi_!(\mathcal M^\bullet), \mathcal N^\bullet\bigr)
\;\cong\;
\mathbf{R}\varphi_*\,\mathbf{R}\mathcal H \mbox{om}_{\varphi^{-1}(\mathcal O_\para)}\!\bigl(\mathcal M^\bullet, \varphi^!\mathcal N^\bullet\bigr),
\end{equation}
functorial in $\mathcal M^\bullet\in \D^-\!\bigl(\varphi^{-1}(\mathcal O_\para)\bigr)$ and
$\mathcal N^\bullet\in \D^+(\mathcal O_\para)$.
Now, take $\mathcal N^\bullet=\mathcal O_\para$ and $\mathcal M^\bullet=\varphi^{-1}(\mathcal O_\para)$. Since
\[
\mathbf{R}\mathcal H om_{\varphi^{-1}\mathcal O_\para}\!\bigl(\varphi^{-1}(\mathcal O_\para), \varphi^!\mathcal O_\para\bigr)
\;\cong \;
\varphi^!\mathcal O_\para \;=\; \omega^\bullet_{\mani/\para},
\]
\eqref{eq:GV-sheafified-used} yields equation \eqref{eq:PV-general-final}, as claimed.

Assume now that both $\mani$ and $\para$ are orientable. Then, by Corollary \ref{cor:orientable-implies-trivial-dualizing}, we have an isomorphism
$\omega^\bullet_{\mani/\para}\cong\varphi^{-1}\mathcal O_\para[m]$ in $\D(\varphi^{-1}(\mathcal O_\para ))$.
Moreover, the relative Poincar\'e lemmas give quasi-isomorphisms
$
 \Omega^\bullet_{\mani/\para} \cong  \varphi^{-1}\mathcal O_\para \cong \Be^\bullet_{\mani/\para}.
 $
Substituting these into \eqref{eq:PV-general-final} (and using functoriality of $\mathbf{R}\varphi_!$ and
$\mathbf{R}\mathcal H \mbox{om}$) yields \eqref{eq:PV-orientable-final}.
\end{proof}

We remark that, analogously, for $\mani$ and $\para$ orientable, we could also obtain the isomorphism 
\begin{equation}\label{eq:PV-orientable-final2}
\mathbf{R}\mathcal H \mbox{{om}}_{\mathcal O_\para}\!\bigl(\mathbf{R}\varphi_!\,\Omega^\bullet_{\mani/\para}, \mathcal O_\para\bigr)
\;\cong\;
\mathbf{R}\varphi_*\Be^\bullet_{\mani/\para}[m].
\end{equation}

The effect of cohomology on the previous quasi-isomorphisms is studied in the next theorem, which in fact yields a relative notion of Poincar\'e duality, and clearly relate relative de Rham and Spencer cohomologies in the orientable case.

\begin{theorem}[Relative Poincar\'e duality]\label{thm:relative-PD-sheaf}
Let $\varphi  \maps \mani \to \para$ be a smooth family of supermanifolds of relative dimension $m|n$.
There are canonical isomorphisms of $\mathcal O_\para$-modules (equivalently,
of local systems) for all $i\in\mathbb Z$:
\begin{equation}\label{eq:relPD-sheaves}
\mathbf{R}^{i}\varphi_*\bigl(\varphi^{-1}(\mathcal O_\para ) \bigr)
\;\cong\;
\mathcal{H}\mbox{\emph{om}}_{\mathcal{O}_\para} \left (\mathbf{R}^{m-i}\varphi_!\Be_{\mani/\para}, \mathcal{O}_\para \right).
\end{equation}
Equivalently, there is a canonical perfect pairing of sheaves
\begin{equation}
\mathbf{R}^{i}\varphi_*\bigl(\varphi^{-1}(\mathcal O_\para) \bigr) \otimes_{\mathcal O_S}
\mathbf{R}^{m-i}\varphi_!\Be_{\mani/\para} \longrightarrow \mathcal O_S.
\end{equation}
In particular, if both $\mani$ and $\para$ are orientable, upon choosing a relative orientation, the previous isomorphism takes the form
\begin{equation} \label{eq:relPD-sheaves-orient}
\mathbf{R}^{m-i}\varphi_*\Omega^\bullet_{\mani / \para} \, \cong \, \mathcal{H}\mbox{\emph{om}}_{\mathcal{O}_\para}\bigl (\mathbf{R}^{i}\varphi_!\Be^\bullet_{\mani / \para}, \mathcal{O}_\para \bigr ).
\end{equation}
\end{theorem}

\begin{proof}
Take cohomology sheaves on both sides of
\ref{eq:PV-general-final}. Since $\omega^\bullet_{\mani / \para} \cong \Be_{\mani/\para}[m]$, we have
\begin{equation}
\mathcal H^{i}\!\bigl(\mathbf{R}\varphi_*\Be_{\mani/\para}[m]\bigr)\;\cong\;\mathbf{R}^{i+m}\varphi_*\Be_{\mani/\para}.
\end{equation}
On the other hand, for any $\mathcal{A}\in \D^+(\mathcal O_\para)$,
\begin{equation}
\mathcal H^{i}\!\bigl(\mathbf{R}\mathcal H \mbox{om}_{\mathcal O_\para}(\mathcal{A},\mathcal O_\para)\bigr)
\;\cong\;
\mathcal E \mbox{xt}^{-i}_{\mathcal O_\para}(\mathcal{A},\mathcal O_\para).
\end{equation}
Applying this to $\mathcal{A}=\mathbf{R}\varphi_!\,\varphi^{-1}\mathcal O_\para$ gives
\begin{equation}
\mathcal H^{i}\!\bigl(\mathbf{R}\mathcal H \mbox{om}_{\mathcal O_\para}\bigl(\mathbf{R}\varphi_!\varphi^{-1}\mathcal O_\para,\mathcal O_\para\bigr)\bigr)
\;\cong\;
\mathcal E \mbox{xt}^{-i}_{\mathcal O_\para}\!\bigl(\mathbf{R}\varphi_!\varphi^{-1}(\mathcal O_\para),\mathcal O_\para\bigr).
\end{equation}
In our smooth setting, the cohomology sheaves $\mathbf{R}^j\varphi_!\varphi^{-1}(\mathcal O_\para)$ are locally free
of finite rank (local systems), hence the derived $\mathcal E \mbox{xt}$ reduces to $\mathcal{H}\mbox{om}$ on
cohomology sheaves:
\begin{equation}
\mathcal E \mbox{xt}^{q}_{\mathcal O_\para}\!\bigl(\mathbf{R}\varphi_!\varphi^{-1}(\mathcal O_\para),\mathcal O_\para\bigr)
\;\cong\;
\mathcal H \mbox{om}_{\mathcal O_\para}\!\bigl(\mathcal H^{-q}(\mathbf{R}\varphi_!\varphi^{-1}(\mathcal O_\para)),\mathcal O_\para\bigr)
\;=\;
\mathcal{H}\mbox{om}_{\mathcal{O}_\para}\left (\mathbf{R}^{-q}\varphi_!\varphi^{-1}(\mathcal O_\para), \mathcal{O}_\para\right).
\end{equation}
Putting $q=-i$ we obtain
\begin{equation}
\mathcal H^{i}\!\bigl(\mathbf{R}\mathcal H om_{\mathcal O_\para}\bigl(R\varphi_!\varphi^{-1}(\mathcal O_\para),\mathcal O_\para\bigr)\bigr)
\;\cong\;
\mathcal{H}\mbox{om}_{\mathcal{O}_\para} \left (\mathbf{R}^{i}\varphi_!\varphi^{-1}(\mathcal O_\para), \mathcal{O}_\para \right ).
\end{equation}
Comparing cohomology sheaves in \eqref{eq:PV-general-final} therefore yields, for every $i$,
\begin{equation}
\mathcal{H}\mbox{om}_{\mathcal{O}_\para}\left (\mathbf{R}^{i}\varphi_!\varphi^{-1}(\mathcal O_\para), \mathcal{O}_\para \right )\;\cong\;\mathbf{R}^{m-i}\varphi_*\Be_{\mani/\para},
\end{equation}
which is equivalent to \eqref{eq:relPD-sheaves} upon re-indexing. The pairing formulation is the evaluation pairing associated with the duality isomorphism.

Finally, in the orientable case choose a trivialization $\Be_{\mani/\para}\cong \varphi^{-1}(\mathcal O_\para)$ and resolve via relative Poincar\'e lemmas to get equation \ref{eq:relPD-sheaves-orient}.
\end{proof}

If one further assumes that the base supermanifold $\para$ is simply connnected, then relative Poincar\'e duality becomes a neat statement, relating de Rham and Spencer cohomology of the fiber supermanifold.

\begin{lemma}[Relative Poincar\'e lemma -- simply connected base]\label{cor:relPD-simply-connected}
Let $\varphi  \maps \mani \to \para$ be a family of supermanifolds of relative dimension $m|n$ such that $\mani$ and $\para$ are orientable and $\para$ is simply connected.
Then, upon choosing a relative orientation, one has that 
 \begin{equation} \label{eq:relPD-simplycon}
 H^{m-i}_{\mathpzc{dR}} (\fib) \otimes \mathcal{O}_\para \; \iso \; \mbox{\emph{Hom}}_{\mathbb{R}} ( H^{i}_{\mathpzc{Sp}, \mathpzc{c}} (\fib), \mathbb{R} ) \otimes \mathcal{O}_\para.
\end{equation}
The same holds true for a trivial family $\varphi  \maps \fib \times \para \to \para$ without assuming simply-connectedness of $\para$.
\end{lemma}
\begin{proof}  First, a trivializing open set $B\subset |\para|$, we identify $\varphi^{-1}(B)$ 
with $\fib \times B$, for $\fib$ the typical fiber. One has 
$\Gamma(\fib \times B,\Omega^\bullet_{\mani/\para})
\cong
\Omega^\bullet(\fib)\,\widehat{\otimes}_{\mathbb R}\,\mathcal O_\para(B),
$
with differential $d_\fib \otimes \mathrm{id}$.
Taking cohomology yields
\begin{equation}
H^i\Gamma(\fib \times B,\Omega^\bullet_{\mani/\para})
\;\cong\;
H^i_{\mathpzc{dR}}(\fib)\otimes_{\mathbb R}\mathcal O_\para(B).
\end{equation}
Finally, by definition of derived direct image,
\begin{equation}
\bigl(R^i\varphi_*\Omega^\bullet_{\mani/\para}\bigr)(B)
\;=\;
\mathbb H^i\!\bigl(\varphi^{-1}(B),\Omega^\bullet_{\mani/\para}\bigr)
\;\cong\;
H^i\Gamma\!\bigl(\varphi^{-1}(B),\Omega^\bullet_{\mani/\para}\bigr),
\end{equation}
where the last identification uses that the terms of $\Omega^\bullet_{X/S}$ are fine sheaves in the
smooth category. Consequently, for each $i$ one has a canonical isomorphism
\begin{equation}
\bigl(\mathbf{R}^i\varphi_*\Omega^\bullet_{\mani/\para}\bigr)\big|_{B}
\;\cong\;
H^i_{\mathpzc{dR}}(F)\otimes_{\mathbb R}\mathcal O_\para\big|_{B}.
\end{equation}
Then, the sheaf $\mathbf{R}^i\varphi_*\Omega^\bullet_{\mani/\para}$ is a
locally free $\mathcal O_\para$-module with locally constant transition functions, \emph{i.e.}\ it is a (Gauss-Manin) local system
of $\mathbb R$-vector spaces tensored with $\mathcal O_\para$. On a simply-connected base, every 
local system is constant, hence one has the global identification 
\begin{equation}
\mathbf{R}^i\varphi_*\Omega^\bullet_{\mani/\para}\;\cong\;H^i_{\mathpzc{dR}}(\fib)\otimes_{\mathbb R}\mathcal O_\para.
\end{equation}
Identifying also the sheaf at the right hand side of \eqref{eq:relPD-sheaves-orient} in the very same fashion, one concludes the verification.

For trivial families, all local systems are constant without further assumptions, as there can be no non-trivial monodromy.
\end{proof}

 Finally, note that for a collapse map $\varphi  \maps \mani \to \mathbb{R}^{0|0}$ to a point, the previous reduces to the usual Poincar\'e duality on a single supermanifold, as stated in \cite{Manin, Noja}.
\begin{corollary}[Poincar\'e Duality] \label{cor:PDpoint} Let $\mani$ be an orientable supermanifold of dimension $m|n$.  Then one has
\begin{eqnarray}
H^i_{\mathpzc{dR}} (\mani) \iso \mathrm{Hom}_{\mathbb{R}} (H^{m-i}_{\mathpzc{Sp}, \mathpzc{c}} (\mani) , \mathbb{R}). 
\end{eqnarray}
\end{corollary}

\subsection{Berezin fiber integration}
\label{sec:fiberint}

The duality results of the previous section admit a clear and concrete interpretation as a supergeometric analog of a (relative) wedge-and-integration pairing, once a notion of Berezin fiber integration is introduced.

As we shall see, in the supergeometric setting, the wedge-and-integration pairing --- referred to here as the Poincar\'e pairing --- must be specifically realized as a pairing between differential and integral forms; otherwise, a meaningful notion of integration cannot be defined.
Before proceeding, and for future reference, we recall some fundamental aspects of ordinary Berezin integration --- see \cite{Noja} and chapter 8 of \cite{Kessler} for further details.

\subsubsection*{\textbf{Interlude on Berezin integration}}

We recall that, given a connected and orientable supermanifold $\mani$ of dimension $m|n$, denoting with $\bersheaf^m_{\mathpzc{c}} (\mani) \defeq \Gamma_{\mathpzc{c}} (\mani, \bersheaf_\mani)$ the $\R$-module of the compactly supported sections of the Berezinian sheaf, then the Berezin integral is the $\mathbb{R}$-linear map
\begin{eqnarray} \label{BerezinInt1}
\int_\mani \maps \bersheaf^m_{\mathpzc{c}} (\mani) \longrightarrow \R,
\end{eqnarray}
which is defined in local coordinates $X^A \defeq \{ x_i | \theta^\alpha \}$ with $i = 1\ldots,m$, $\alpha = 1, \ldots, n$ on a domain $U$ by
\begin{eqnarray} \label{BerezinU}
\int_U \ber_\mathpzc{c} \defeq \int_{U_{\mathpzc{red}}} dx_{1, \mathpzc{red}}\ldots dx_{m, \mathpzc{red}} f^{1\ldots 1}_\mathpzc{c} (x_{1, \mathpzc{red}} \ldots x_{m, \mathpzc{red}}),
\end{eqnarray}
where $ \be_\mathpzc{c} (x, \theta) = \sum_{\underline
\epsilon} \ber^c \theta^1_{\epsilon_1} \ldots \theta^n_{\epsilon_n} f^{\underline
\epsilon}_\mathpzc{c} (x_1, \ldots, x_m)$ for $\epsilon_j =\{0, 1\}$ and $\ber^c$ a
constant Berezinian. Equation \eqref{BerezinU} is then extended via a partition of
unity to the whole supermanifold. 
More invariantly, it is possible to introduce Berezin integration by observing the
following:
\begin{enumerate}[leftmargin=+13pt]
\item the $\R$-module of compactly supported sections of the Berezinian decomposes as
\begin{eqnarray}
\bersheaf^{m}_{\mathpzc{c}} (\mani) = \mathcal{J}^m_\mani \bersheaf^{m}_\mathpzc{c} (\mani) + \delta (\bersheaf^{m-1}_{\mathpzc{c}} (\mani)),
\end{eqnarray}
where $\mathcal{J}_\mani$ are the nilpotent sections in $\mathcal{O}_\mani;$
\item the isomorphism of {sheaves} of $\mathcal{O}_{\mani_{\mathpzc{red}}}$-modules $\phi \maps \mathcal{J}^m_\mani \bersheaf_\mani \to \Omega^m_{\mani_{\mathpzc{red}}}$ specializes to the previous decomposition to give in the intersection
 \begin{eqnarray}
\mathcal{J}^m_\mani \bersheaf^{m}_\mathpzc{c} (\mani) \cap \delta (\bersheaf^{m-1}_{\mathpzc{c}} (\mani)) \iso d (\Omega^m_{\mathpzc{c}} (\mani_{\mathpzc{red}})).
\end{eqnarray}
\end{enumerate}
One can then look at the Berezin integral as the map given by the composition,
\begin{eqnarray} \label{compomap}
\xymatrix{
{\bersheaf^m_\mathpzc{c} (\mani)}/{\delta (\bersheaf^{m-1}_{\mathpzc{c}} (\mani))} \ar[r]^{\hat{\phi}\;\;} & {\Omega_{\mathpzc{c}}^m} (\mani_{\mathpzc{red}})/{d(\Omega^{m-1}_{\mathpzc{c}} (\mani_{\mathpzc{red}}))} \ar[r]^{\qquad \qquad \int} & \R,
}\end{eqnarray}
where the first map $\hat \phi $ is in fact an isomorphism and the second map is the ordinary integration. It follows that the Berezin integral yields a linear isomorphism
\begin{eqnarray} \label{BerezinInt}
\int_\mani \maps H^m_{\mathpzc{Sp}, \mathpzc{c}} (\mani) \stackrel{\cong}{\longrightarrow} \R,
\end{eqnarray}
provided that $\mani$ is connected and orientable---see for example the recent \cite{Noja} for details.

In order to generalize the above construction to families of supermanifolds and introduce a related notion of \emph{fiber} integration, we first
recall that the Grothendieck--Verdier we discussed in the previous section can be seen as an adjunction, namely
\begin{equation}
\mathbf{R}\varphi_!:\D^+\!\bigl(\varphi^{-1}(\mathcal O_\para ) \bigr) \rightleftarrows \D^+(\mathcal O_\para ):\ \varphi^!.
\end{equation}
Further, we stress that in this section, we will always assume $\mani$ and $\para$ to be orientable, and we choose a relative orientation, \emph{i.e.}\ a trivialization
$\Be_{\mani /\para }\cong \varphi^{-1}(\mathcal O_\para)$, hence $\omega^\bullet_{\mani/\para} \cong \Be^m_{\mani / \para}$ is such that $\omega_{\mani / \para}^\bullet \cong \varphi^{-1}(\mathcal O_\para)[m]$, and in turn
-- resolving by relative Poincar\'e lemma -- we can write $\omega^\bullet_{\mani / \para} \cong \Be^\bullet_{\mani / \para} [m].$ Finally, we will always take the typical fiber supermanifold $\fib$ to be connected. \\

\begin{definition}[Berezin fiber integration]\label{def:fiber-integration} Let $\varphi  \maps \mani \to \para$ be a family of supermanifolds of
relative dimension $m|n$. The \define{fiber integration} morphism is the counit of the adjunction $\mathbf{R}\varphi_! \dashv \varphi^!$
evaluated at $\mathcal O_\para$:
\begin{equation} \label{fibintgen}
\int_{\mani/\para}: \mathbf{R}\varphi_!\,\omega^\bullet_{\mani/\para}\longrightarrow \mathcal O_\para
\end{equation}
in $ \mbox{\sffamily{{D}}}(\mathcal{O}_\para)$, where $\omega^\bullet_{\mani / \para} \cong \Be^m_{\mani / \para}[m]$. \\
For $\mani $ and $\para$ orientable, transporting $\int_{\mani / \para}$ along the identification $\omega^\bullet_{\mani / \para} \cong \Be^\bullet_{\mani / \para}[m]$, 
yields a (non-canonical) morphism in $ \mbox{\sffamily{{D}}}(\mathcal{O}_\para)$, which we call the \define{Berezin fiber integration}
\begin{equation} \label{berfibint}
\int_{\mani / \para}:  \mathbf{R}\varphi_!\,\Be^\bullet_{\mani / \para}[m]\longrightarrow \mathcal O_\para.
\end{equation}
\end{definition}

\begin{remark}[Local formula for fiber integration]\label{rem:fiber-integral-local}
On a local trivialization $U\simeq F \times B$ with $\varphi|_U=\mathrm{pr}_B$, the morphism
$\int_{\mani / \para}$ restricts to the usual Berezin integral along the typical fiber $\fib$ on compact supports, tensored with
$\mathcal O_B$. 
\end{remark}

The previous definition is related to the ordinary Berezin integration via the following.

\begin{lemma}[Induced morphism on cohomology sheaves]\label{lem:int-on-sheaves}
Let $\varphi  \maps \mani \to \para $ be a family of supermanifolds of relative dimension $m|n$. The morphism $\int_{\mani / \para}$ induces on cohomology sheaves
a canonical map
\begin{equation}
\mathcal H^0\!\bigl(\mathbf{R}\varphi_!\Be^\bullet_{\mani/\para}[m]\bigr)=\mathbf{R}^{m}\varphi_!\Be^\bullet_{\mani/\para}
\ \longrightarrow\ \mathcal O_\para.
\end{equation}
For every point $|S|$, the induced map on stalks identifies with the (absolute) Berezin
integration isomorphism of Equation \ref{BerezinInt} on the fiber $\fib$:
\begin{equation}
\int_\fib : H^m_{\mathpzc{Sp,c}}(\fib)\ \xrightarrow{\ \cong\ }\ \mathbb R.
\end{equation}
\end{lemma}

\begin{proof}
Applying $\mathcal H^0(-)$ to $\int_{\mani / \para}$ yields the displayed morphism
$\mathbf{R}^m\varphi_!\Be^\bullet_{\mani/\para}\to \mathcal O_\para$.
To identify the stalk at a point $s \in |\para|$, choose a trivializing neighborhood $B\ni s$ so that
$\varphi^{-1}(B)\simeq \fib \times B$.
By Remark~\ref{rem:fiber-integral-local}, the restriction of $\int_{\mani / \para}$ to $B$ is given by
fiberwise Berezin integration, hence the stalk map is precisely the (absolute) Berezin integral on $\fib$,
which is an isomorphism.
\end{proof}

Now, recalling that $\Be^\bullet_{\mani / \para}$ is a $\Omega^\bullet_{\mani / \para}$-module, we can give the following definition.

\begin{definition}[Relative Poincar\'e pairing]\label{def:relative-Poincare-pairing-Be-Omega}
Let $\varphi  \maps \mani \to \para $ be a family of supermanifolds of relative dimension $m|n$. For each $i$ we define the \define{relative Poincar\'e pairing}
as the composite morphism of sheaves on $\para$
\begin{equation}
\PP_{\mani / \para} : \mathbf{R}^i\varphi_*\Omega^\bullet_{\mani/\para} \otimes_{\mathcal O_\para} \mathbf{R}^{m-i}\varphi_!\Be^\bullet_{\mani/\para}
\ \xrightarrow{ \cdot   }\ \mathbf{R}^m\varphi_!\Be^\bullet_{\mani/\para}
\ \xrightarrow{\ \mathcal H^0(\int_{\mani / \para})\ }\ \mathcal O_\para,
\end{equation}
where the first arrow is induced by the $\Omega^\bullet_{\mani/\para}$-module structure on $\Be^\bullet_{\mani/\para}$, and the second arrow is the map of
Lemma~\ref{lem:int-on-sheaves}.
\end{definition}

\begin{remark}[Local expression of the relative Poincar\'e pairing]\label{rem:local-relative-pairing}
Let $U\subset \mani$ be a trivializing open set for $\varphi  \maps \mani \to \para$, so that
$U \cong F\times B$ and $ \varphi|_{U}=\mbox{pr}_{B}$,
with $F$ an open domain in the fiber and $B\subset \para$ open.
On $U$ we have the usual identifications
$
\Omega^\bullet_{\mani/\para}(U)\;\cong\;\Omega^\bullet(F)\,\widehat{\otimes}_{\mathbb R}\,\mathcal O_B(B),
$ and $
\Be^\bullet_{X/S}(U)\;\cong\;\Be^\bullet(F)\,\widehat{\otimes}_{\mathbb R}\,\mathcal O_B(B),
$
where the differentials act only along the $F$-factor and $\widehat{\otimes}$ denotes the completed tensor product.
Then the relative Poincar\'e pairing $\PP_{\mani / \para}$  is given locally on $B$ by the following formula: if $[\omega]$ is represented by
$
\omega\in \Omega^i(F)\,\widehat{\otimes}_{\mathbb R}\,\mathcal O_B(B),
$
and $[\sigma]$ is represented by
$
\alpha\in \Be^{m-i}_c(F)\,\widehat{\otimes}_{\mathbb R}\,\mathcal O_B(B),
$
then
\begin{equation}\label{eq:local-pairing}
\PP_{\mani / \para} ( [\omega],[\sigma ]) |_{U}
\;=\;
\int_{F} \sigma \cdot \omega.
\end{equation}
By construction $\sigma \cdot\omega\in \Be^m_\mathpzc{c}(F)$, hence the integral is well-defined.
\end{remark}

The relative Poincar\'e pairing via Berezin fiber integration is precisely what appears in Grothendieck--Verdier duality. More precisely, the following holds.

\begin{theorem}\label{prop:pairing-is-GV-orientable}
Let $\varphi  \maps \mani \to \para$ be a family of supermanifolds of relative dimension $m|n$ such that $\mani$ and $\para$ are orientable.
Then the relative Poincar\'e pairing $\mbox{\sffamily{\emph{P}}}_{\mani / \para}$ of Definition~\ref{def:relative-Poincare-pairing-Be-Omega} coincides with the
pairing induced by Grothendieck--Verdier duality, \emph{i.e.}\ it identifies
\begin{equation}
\mathbf{R}^i\varphi_*\Omega^\bullet_{\mani/\para}
\ \cong\
\mathcal{H}\mbox{\emph{om}}_{\mathcal{O}_\para} \bigl(\mathbf{R}^{m-i}\varphi_!\Be^\bullet_{\mani/\para}, \mathcal{O}_\para\bigr ),
\end{equation}
and hence realizes the relative Poincar\'e duality isomorphisms on cohomology sheaves.
\end{theorem}

\begin{proof}
By Poincar\'e--Verdier duality in the orientable case, one has an isomorphism in $\D(\mathcal O_\para)$
\[
\mathbf{R}\mathcal H \mathrm{om}_{\mathcal O_\para}\!\bigl(\mathbf{R}\varphi_!\,\Be^\bullet_{\mani/\para}, \mathcal O_\para\bigr)
\;\cong\;
\mathbf{R}\varphi_*\Omega^\bullet_{\mani/\para}[m].
\]
Taking cohomology sheaves yields the claimed isomorphism. The corresponding evaluation pairing is obtained by composing
the natural product map with proper supports
\begin{equation}
\mathbf{R}^i\varphi_*\Omega^\bullet_{\mani/\para} \otimes_{\mathcal O_\para} \mathbf{R}^{m-i}\varphi_!\Be^{\bullet}_{\mani/\para}
\longrightarrow
\mathbf{R}^m\varphi_!\Be^\bullet_{\mani/\para},
\end{equation}
induced by the $\Omega^\bullet_{\mani/\para}$-module structure on $\Be^\bullet_{\mani/\para}$, with
the fiber integration map $\mathbf{R}^m\varphi_!\Be^\bullet_{\mani/\para}\to \mathcal O_\para$ coming from the counit of
the adjunction. This is precisely the pairing $\PP_{\mani / \para}.$ 
\end{proof}

Finally, it is worth specializing the above constructions over a simply connected base supermanifold $\para$, or a trivial family.

\begin{corollary}[Relative Poincar\'e pairing -- simply connected base]\label{cor:pairing-simply-connected-orientable}
Let $\varphi  \maps \mani \to \para$ be a family of supermanifolds and let $\para$ be simply connected.
Then the relative Poincar\'e pairing is the $\mathcal O_\para$-linear extension of the classical
fiberwise pairing 
\begin{equation} \label{PDpairF}
\xymatrix@R=1.5pt{
H^i_{\mathpzc{dR}}(\fib)\ \otimes\ H^{m-i}_{\mathpzc{Sp,c}}(\fib)\ \ar[r] &\ \mathbb R \\
([\omega] , [\sigma])\ar@{|->}[r] & \int_\fib \sigma \cdot \omega,
}
\end{equation}
where $\sigma \cdot \omega \in \Be^m_{\mathpzc{c}}(F)$ is integrated via (absolute) Berezin integral.\\
The same holds true for a trivial family $\varphi  \maps \fib \times \para \to \para$ without assuming simply-connectedness of $\para$.
\end{corollary}

\begin{proof}
Follows from Lemma \ref{cor:relPD-simply-connected} together with Definition~\ref{def:relative-Poincare-pairing-Be-Omega} and Remark \ref{rem:local-relative-pairing}.
\end{proof}

Notice that the above fiberwise pairing of Equation \eqref{PDpairF} is exactly the Poincar\'e pairing behind the (absolute) Poincar\'e duality of Corollary \ref{PDpairF}.

\subsection{Trivial families and even submanifolds}\label{subsec:trivial-families-even-submanifolds}

In subsection~\ref{sec:fiberint} we introduced Berezin fiber integration for a
family $\varphi:X\to S$ as the counit of the Grothendieck--Verdier adjunction
$R\varphi_!\dashv \varphi^!$, and we showed that, after identifying the dualizing complex, 
this counit may be viewed as a morphism \eqref{berfibint}.
In particular, fiber integration produces $\mathcal O_\para$-linear functionals on compactly supported
integral forms along the fibers, and it is the intrinsic avatar of ``integrating out the
odd directions'' in families.

The goal of the present section is to reinterpret this construction in a geometric way for
\emph{trivial families}. 
Specifically, let $\mani_{/ \para} = \fib \times \para_{/ \para}$ be a trivial family
with oriented, connected fiber of dimension $m|n$, and let $\iota \maps
\mani^\ev_{/ \para} \inclusion \mani_{/ \para}$ be an even subfamily. While we cannot
integrate an $m$-form $\omega \in \Gamma_{\mathpzc{c}} ( \mani, \Omega_{\mani / \para} )$ over
$\mani$, we can integrate its pullback $\iota^* \omega$ over $\mani^\ev_{/ \para}$.
Poincar\'e duality then gives us an integral form $Y_\iota \in \Ber^0_{\mani /
\para}(\mani)$, which is \emph{dual to the even submanifold}, in the sense that, for all $\omega$:
\begin{equation}
  \int_{\mani^\ev / \para} \iota^* \omega = \int_{\mani / \para} Y_\iota \cdot \omega .
\end{equation}
We will prove that the cohomology class $[Y_\iota]$ is independent of $\iota$ -- and indeed the map $\omega \mapsto \int_{\mani^\ev / \para} \iota^\ast \omega$ is fiber integration in disguise. 

\subsubsection*{\textbf{Interlude on projections}} Before we begin, let us introduce some additional notions related to underlying even manifolds, as defined in Definition \ref{def:uem}. 
\begin{defn}
  Let $\iota \maps \mani^\ev_{/ \para} \inclusion \mani_{/
  \para}$ be an even manifold, a map $p \maps \mani_{/ \para} \to \mani^\ev_{/ \para}$ is a
  \define{projection} with respect to $\iota$ if $p \circ \iota = 1_{\mani^\ev}$. If
  $\iota$ admits a projection, we say $\mani_{/ \para}$ is \define{projected} with
  respect to $\iota$.
\end{defn}

Although an even manifold is a choice, for a trivial family
$\mani_{/\para} = \fib \times \para_{/\para}$ there is a canonical choice, as given in Definition \ref{def:ican-trivial}, which we now recall
\begin{equation}
  \mani^\ev_{/\para} = \fib_\red \times \para_{/\para}, \quad \iota_\can \maps
  \fib_\red \times \para_{/\para} \inclusion \fib \times \para_{/\para} .
\end{equation}
Here, $\iota_\can = \iota_\red \times 1_\para$, where
$\iota_\red \maps \fib_\red \inclusion \fib$ is the inclusion of the reduced
submanifold of the fiber $\fib$. With slight abuse of notation, we call $\iota_\can$ the {canonical even
  manifold} of a trivial family.

\begin{prop}
  \label{prop:trivproj}
  Any trivial family $\mani_{/\para} = \fib \times \para_{/\para}$ is projected with respect to the canonical even manifold $\iota_\can \maps \fib_\red \times \para_{/\para} \inclusion \fib \times \para_{/\para}$.
\end{prop}

\begin{proof}
  Because $\fib$ is a smooth supermanifold, Batchelor's theorem \cite{Bat} says we
  can choose a projection $p_\red \maps \fib \to \fib_\red$ of the reduced
  submanifold of the fiber, \emph{i.e.}, a map such that $p_\red \circ \iota_\red =
  1_{\fib_\red}$. Thus $p = p_\red \times 1_\para$ is a projection with respect
  to $\iota_\can$.
\end{proof}

\noindent
Note that $p$, like $\iota$, is necessarily the identity on the underlying
topological space. We will need to know how projection with respect to a given
embedding composes with a different embedding.

\begin{lemma}
  \label{prop:changeemb}
  Suppose $p$ is a projection of $\mani_{/\para}$ with respect to the even
  manifold $\iota \maps \mani^\ev_{/\para} \inclusion \mani_{/\para}$, and let
  $\iota' \maps \mani^\ev_{/\para} \inclusion \mani_{/\para}$ be another choice of
  embedding. Then $\xi = p  \circ \iota' \maps \mani^\ev_{/\para} \to
  \mani^\ev_{/\para}$ is a diffeomorphism which is the identity on the underlying
  topological space.
\end{lemma}

\begin{proof}
  The map $\xi = p \circ \iota'$ is the identity on the underlying topological space
  because $p$ and $\iota'$ are. In particular, $\xi_\red$ is a diffeomorphism. Thus,
  to check that $\xi$ is a diffeomorphism, it suffices to check that it is a local
  diffeomorphism, see \cite{Leites}.
  Indeed, let $x^a$ be relative local coordinates for $\mani^\ev_{/\para}$. Then
  because $\xi$ is the identity on the underlying topological space, it has the form
  $\xi^* \maps x^a \mapsto x^a + f^a$ for $f^a$ even, nilpotent elements, hence $\xi$ has 
  a local inverse by the inverse. 
  \end{proof}

\subsubsection*{\textbf{Independence of embedding}}
\label{sec:independence}

Having introduced projections, we will now
prove that the integral of a closed, compactly support $m$-form over an even
manifold of a trivial family is independent of the choice of embedding of that
even manifold. First, we define this integration more precisely.

\begin{definition}[Integration on $\mani^{\ev}$]
  Let $\mani_{/\para} = \fib \times \para_{/\para}$ be a trivial family of
  supermanifolds of relative dimension $m|n$ such that $\fib$ is connected and oriented, and let $(\mani_{/ \para}^\mathpzc{ev},
  \iota)$ be an underlying even submanifold.\footnote{Note that we may take $\mani^\ev/\para =
  \fib_\red \times \para / \para$, where $\fib_\red$ is the reduced submanifold of
  the fiber, but we take the embedding $\iota$ to be arbitrary.} Then, for any $\omega \in
  \Gamma_\mathpzc{c} (\mani, \Omega^m_{\mani / \para} ) = \Omega^m_{\mathpzc{c}}(\mathpzc{F})
  \hat \otimes \mathcal{O}_\para (\para)$, the integral
 \begin{equation}
   \int_{\mani{/ \para}^\mathpzc{ev}}{\iota^*\omega}\in\mathcal{O}_\para(\para)
 \end{equation}
 is defined as the composition of the pull-back and the integration of an ordinary
 top-degree differential form on the smooth manifold $\mathpzc{F}_\red$.
 \begin{equation}
   \xymatrix@R=1.5pt{
     \Omega^m_{\mathpzc{c}}(\mathpzc{F})
  \hat \otimes \mathcal{O}_\para (\para)  \ar[rr]^{\iota^*} && \Omega_{\mathpzc{c}}^m (\fib_{\red}) \hat \otimes \mathcal{O}_\para(\para)  \ar[rr]^{\quad \int_{ \mathpzc{F}_\red} \otimes 1} && \mathcal{O}_\para (\para).
   }
 \end{equation}
 In particular, this induces a morphism of $\mathcal{O}_\para (\para)$-modules in
 cohomology
 \begin{equation}
   \label{eq:Phi}
   \xymatrix@R=1.5pt{
     \Phi_\iota : H^m_{\mathpzc{dR}, \mathpzc{c}} (\mathpzc{F}) \otimes \mathcal{O}_\para (\para) \ar[r]  &\mathcal{O}_\para (\para)  \\
     [\omega] \ar@{|->}[r] & \displaystyle \int_{\mani^{\mathpzc{ev}}{/\para}} \iota^* \omega.
   }
 \end{equation}
\end{definition}
Note the map $\Phi_\iota$ can be seen as coming from Poincar\'e--Verdier duality, or -- more precisely -- fiber integration. Indeed, consider equation \eqref{fibintgen} in the definition of fiber integration: resolving $\omega^\bullet_{\mani / \para}$ with relative differential forms, one get a morphism in $\D(\mathcal O_\para)$
\begin{equation}
\int_{\mani / \para } : \mathbf{R}\varphi_!\Omega^\bullet_{\mani/\para}[m]\longrightarrow \mathcal O_\para.
\end{equation}
Hence, taking cohomology, a morphism of sheaves of $\mathcal{O}_\para$-modules
$
\mathbf{R}^m \varphi_!\Omega^\bullet_{\mani/\para} \rightarrow \mathcal O_\para.
$
In the trivial case $\varphi \maps \fib \times \para \to \para$, one has a canonical identification
$\mathbf{R}^m\varphi_!\Omega^\bullet_{\mani / \para}\cong H^m_{\mathpzc{dR, c}}(\fib)\otimes_{\mathbb R}\mathcal O_\para$,
therefore taking global sections we obtain an $\mathcal O_\para(\para)$-linear map
\begin{equation} \label{phimap}
\Phi:\ H^m_{\mathpzc{dR,c}}(\fib)\otimes_{\mathbb R}\mathcal O_\para(\para)\ \cong\ \Gamma\!\bigl(\para , \mathbf{R}^m\varphi_!\Omega^\bullet_{\mani/\para}\bigr)
\ \xrightarrow{\ \Gamma(\para,\mathcal H^0(\int_{\mani / \para}))\ }\  \Gamma ( \para , O_\para ).
\end{equation}
This suggests that $\Phi_\iota$ 
must be independent of the chosen embedding $\iota$, as we shall prove.

\begin{lemma}
  \label{lem:indy}

  Let $\mani_{/\para} = \fib \times \para_{/\para}$ be a trivial family of
  supermanifolds with oriented fiber of dimension $m|n$. For any two embeddings of
  the underlying even submanifold,
  $\iota \maps \mani_{/ \para}^\mathpzc{ev}\hookrightarrow\mani_{/\para}$ and
  $\iota' \maps \mani_{/ \para}^\mathpzc{ev}\hookrightarrow\mani_{/\para}$, we have
  $\Phi_{\iota}=\Phi_{\iota'}$. In other words, the morphism
  $\Phi_{\iota}$ is independent of the choice of the embedding of the underlying
  even manifold $\mani_{/\para}^{\mathpzc{ev}}$ and any such choice defines the
  map
  \begin{equation}
    \mathbf{\Phi}_{\mani^\mathpzc{ev}/\para} : \Gamma\!\bigl(\para , \mathbf{R}^m\pr_{\para \, !} \Omega^\bullet_{\mani/\para}\bigr) \longrightarrow \Gamma (\para, \mathcal{O}_\para),
  \end{equation}
where $\pr_{\para} \maps \fib \times \para \to \para.$
\end{lemma}

\begin{proof}
  Fix a closed, compactly supported $m$-form $\omega \in
  \Omega^m_{\mani/\para,\c}(\mani)$. Since the family is trivial, we have the
  canonical embedding $\iota_\can \maps \mani^\ev_{/\para} \inclusion
  \mani_{/\para}$. It therefore suffices to prove, for any other embedding $\iota
  \maps \mani^\ev_{/\para} \inclusion \mani_{/\para}$, we have
  $\int_{\mani^\ev/\para} \iota^* \omega = \int_{\mani^\ev/\para} \iota_\can^*
  \omega$.

  By Proposition \ref{prop:trivproj}, we can choose a projection $p_\can \maps
  \mani_{/\para} \to \mani_{/\para}$ with respect to $\iota_\can$. This splits the
  short exact sequence
  \begin{equation}
    0 \longrightarrow K^\bullet \longrightarrow \Omega^\bullet_{\mani / \para, \c}(\mani) \stackrel{\iota_\can^*}{\longrightarrow} \Omega^\bullet_{\mani^\ev / \para, \c}(\mani^\ev) \longrightarrow 0 .
  \end{equation}
  Here, the subcomplex $K^\bullet$ is the kernel of $\iota_\can^*$, which is is acyclic. 
  We can thus decompose the closed
  $m$-form $\omega$ into $\omega = p^*_\can \omega_\ev + d \eta$, where
  $\omega_\ev \in \Omega^m_{\mani^\ev/\para, \c}(\mani^\ev)$ is closed and
  $d \eta \in K^m$ is exact. By construction,
  $\int_{\mani^\ev_{/\para}} \iota^*_\can \omega = \int_{\mani^\ev_{/ \para}} \omega_\ev$.
  On the other hand,
  \begin{eqnarray}
    \int_{\mani^\ev{/\para}} \iota^* \omega = \int_{\mani^\ev{/\para}} \left( \iota^* p_\can^* \omega_\ev + \iota^* d \eta \right)  = 
    \int_{\mani^\ev{/\para}} \xi^* \omega_\ev + \int_{\mani^\ev{/\para}} d \iota^* \eta  = 
    \int_{\mani^\ev{/\para}} \omega_\ev .
  \end{eqnarray}
  In the above, $\xi = p_\can  \iota$ is a diffeomorphism of
  $\mani^\ev_{/\para}$, by Prop. \ref{prop:changeemb}. Said diffeomorphism is
  orientation preserving because it is the identity on the underlying topological
  space, and it thus preserves the integral.
\end{proof}

Next, we would like to construct an integral form $Y_\iota \in
\Gamma ( \mani, \Ber^0_{\mani/\para} ) $ that is \emph{Poincar\'e dual} to the embedding $\iota \maps
\mani^\ev_{/\para} \inclusion \mani_{/\para}$, in the sense that integration over the
even submanifold is given by the Poincar\'e pairing with $Y_\iota$. That is, we want
\begin{equation}
  \int_{\mani^\ev{/\para}} {\iota^* \omega} = \int_{\mani{/\para}} Y_\iota \cdot
  \omega ,
\end{equation}
for all compactly supported $\omega$. Because the Poincar\'e pairing is injective,
such $Y_\iota$ is unique. But we need to show it exists.

Just like above, a good candidate is singled out by the Poincar\'e-Verdier dual side of the morphism
$\mathbf{R}^m \varphi_!\Omega^\bullet_{\mani/\para} \rightarrow \mathcal O_\para$ (that was related to the map $\Phi_{\iota}$, as we discussed). 
This is given by $\mathbf{R}^0 \varphi_* \Be^\bullet_{\mani / \para}$, that for a trivial family is isomorphic to $H^0_{\mathpzc{Sp}} (\fib) \otimes \mathcal{O}_\para$.
Taking global sections, one finds $H^0_{\mathpzc{Sp}} (\fib) \otimes \Gamma (\para , \mathcal{O}_\para),$ which suggests that the integral form that we are after 
is the avatar of $[1] \in H^0_{\mathpzc{dR}} (\fib)$ in Spencer cohomology, namely it is given by Equation \eqref{eq:generator} in Theorem
\ref{thm:PL} (see also Theorem \ref{thm:pairing-with-Y-global} above), and -- once again -- independent of the chosen embedding of 
the underlying even manifold. \\

To see this without appealing to known result, let $U_\lambda$ be a coordinate chart where our embedding takes the form
$\iota^* x^a = x^a$, $\iota^* \theta^\alpha = 0$ (note that these coordinates always
exist: see the discussion before equation \ref{cancoord}). In these coordinates, define the 
integral form
\begin{equation}
  \label{eq:canonical}
  Y_\lambda \defeq \ber^\c \otimes \, \theta^1 \theta^2 \cdots \theta^n \pd{x^1} \wedge
  \pd{x^2} \wedge \cdots \wedge \pd{x^m} ,
\end{equation}
where $\ber^c = [\partial_{\theta^1} \ldots \partial^{\theta_n}\otimes dx^1 \cdots dx^m]$ is the coordinate
Berezinian. It is easy to check that $Y_{\lambda}$ defines a globally defined cocycle, 
\emph{i.e.}\ $Y_\lambda = Y_{\iota} \in \Gamma (\fib , \Be^0_{\fib})$ -- in fact its class $[Y_\iota]\otimes 1_\para$ is the generator of $\Gamma (\para, \mathbf{R}^0 \pr_{\para *} \Be^\bullet_{\mani / \para} )  =  H^0_\mathpzc{Sp} (\fib)\otimes \Gamma (\para, \mathcal{O}_\para)$ -- 
and therefore a globally defined functional 
\begin{equation}
\Gamma_{\mathpzc{c}} (\mani, \Omega^m_{\mani / \para})\ni \omega \longmapsto \int_{\mani{/ \para}} Y_\iota \cdot \omega \in \Gamma (\para, \mathcal{O}_\para).
\end{equation}
Notice, though, that this global existence does not, by itself, imply independence of $[Y_\iota]$ from the choice of
the embedding $\iota$: we prove this in the following theorem.

\begin{theorem}
  \label{thm:dual}
  Let $\mani_{/\para} = \fib \times \para_{/\para}$ be a trivial family of
  supermanifolds of relative dimension $m|n$, with oriented fiber. If
  $(\mani_{/ \para}^\mathpzc{ev}, \iota)$ is an underlying even manifold with
  embedding $\iota \maps \mani_{/ \para}^\mathpzc{ev}\hookrightarrow\mani_{/\para}$,
  then there exists a unique integral form $Y_{\iota} \in
  \Ber^0_{\mani/\para}(\mani)$ such that
  \begin{equation} \label{eq:equalpair}
    \int_{\mani{/\para}} Y_\iota \cdot \omega  = \int_{\mani^\ev/ \para} {\iota^*\omega},
  \end{equation}
  for every compactly supported $m$-form $\omega \in
  \Omega^m_{\mani/\para,\c}(\mani)$. Moreover, $Y_\iota$ is closed, and its class
  $[Y_{\iota}] \in H^0_{\mathpzc{Sp}} (\fib) \otimes \Gamma (\para, \mathcal{O}_\para)$ is independent of the choice of embedding
  $\iota \maps \mani_{/ \para}^\mathpzc{ev}\hookrightarrow\mani_{/\para}$.
  \end{theorem}

\begin{proof}
  Cover $\mani_{/\para}$ with relative coordinate charts
  $\{U_\lambda \}_{\lambda \in \Lambda}$ adapted to the embedding $\iota$. That is,
  over each $U_\lambda$, we have $\iota^* x^a = x^a$, $\iota^* \theta^\alpha = 0$,
  where $x^a|\theta^\alpha$ is the coordinate system on this chart. Choose a
  partition of unity $\{ \varphi_\lambda \in \Oh(\mani)\}_{\lambda \in \Lambda}$
  subordinate to $\{U_\lambda\}_{\lambda \in \Lambda}$. For each $\lambda$, we write
  $U^{\ev}_\lambda$ for the underlying even submanifold of $U_\lambda$, which has
  coordinates $x^a$. The collection of charts $\{U^\ev_\lambda\}$ covers
  $\mani^\ev_{/\para}$.

It is an easy computation to verity that the integral form $Y_\lambda \in
  \Ber^0_{\mani/\para}(U_\lambda)$ given in coordinates by Eq. \ref{eq:canonical}
  satisfies the desired property over $U_\lambda$, for $\omega$ with compact support
  in $U_\lambda$:
  \begin{equation}
    \int_{U^{\ev}_\lambda} \iota^* \omega = \int_{U_\lambda} Y_\lambda \cdot \omega.
  \end{equation}
  Hence, we define $Y_\iota = \sum_{\lambda \in \Lambda} \varphi_\lambda Y_\lambda$,
  and note that, for all compactly supported $\omega$, we have
  \begin{eqnarray}
    \int_{\mani/\para} Y_\iota \cdot \omega = \sum_{\lambda \in \Lambda} \int_{U_\lambda} \varphi_\lambda Y_\lambda \cdot \omega 
                                             =  \sum_{\lambda \in \Lambda} \int_{U^\ev_\lambda} \iota^* (\varphi_\lambda \omega) 
                                            =  \int_{\mani^\ev/\para} \iota^* \omega ,
  \end{eqnarray}
  as desired. This establishes the existence of $Y_\iota$. Uniqueness follows from
  the injectivity of the Poincar\'e pairing.
  To check that $Y_\iota$ is closed, let $\alpha \in
  \Omega^{m-1}_{\mani/\para,\c}(\mani)$ be a compactly supported $(m-1)$-form. Then
  $Y_\iota \cdot \alpha$ is a compactly supported integral form in degree $m-1$.
  Stokes' theorem for integral forms says that
  \begin{equation}
    \int_{\mani/\para} \delta ( Y_\iota \cdot \alpha) = 0 .
  \end{equation}
  Expanding the integrand using the Leibniz rule, $\delta (Y_\iota \cdot \alpha) =
  \delta Y_\iota \cdot \alpha + Y_\iota \cdot d \alpha$, this says that
  $\int_{\mani/\para} \delta Y_\iota \cdot \alpha = -\int_{\mani/\para} Y_\iota \cdot
  d \alpha$. The right hand side of this equation vanishes by Stokes' theorem for
  differential forms, since $\int_{\mani/\para} Y_\iota \cdot d \alpha =
  \int_{\mani^\ev/\para} \iota^* d \alpha = 0$. Hence for all $\alpha$, we have:
  \begin{equation}
    \int_{\mani/\para} \delta Y_\iota \cdot \alpha = 0 .
  \end{equation}
  By the nondegeneracy of the Poincar\'e pairing, we conclude $\delta Y_\iota = 0$.

Finally, let $\iota,\iota'$ be two embeddings of the even manifold.
By Lemma~\ref{lem:indy}, the right-hand side of \eqref{eq:equalpair}
is the same for $\iota$ and for $\iota'$, hence
\begin{equation}
\int_{\mani/\para} (Y_\iota-Y_{\iota'})\cdot \omega \;=\; 0
\end{equation}
for all $\omega \in \Gamma_\mathpzc{c} (\mani, \Omega^m_{\mani / \para}). $ Again by non-degeneracy of the pairing, $Y_\iota-Y_{\iota'}$ is $\delta$-exact, so
$[Y_\iota]=[Y_{\iota'}]\in H^0_{\mathpzc{Sp}} (\fib) \otimes \Gamma (\para, \mathcal{O}_\para)$.
\end{proof}

\noindent
We call the integral 0-form $Y_\iota$ from the above theorem the \define{Poincar\'e
  dual} of the underlying even manifold
$\iota \maps \mani^\ev_{/\para} \inclusion \mani_{/\para}$. It plays a key role in
the next section, where we apply it to supergravity.

\section{Poincar\'e duality in supergravity}
\label{sec:supergeosupergra}

We now turn from mathematics, our beloved home, to the foreign land of physics, in
order to visit an exotic region known by the name of geometric supergravity. While
there, we will spend some of our mathematical capital to make some ideas in this
formalism more precise. Our techniques have value here because in geometric
supergravity, the Lagrangian is a form on a supermanifold, called superspace, while
the action is defined by integrating over an ordinary manifold, spacetime, embedded
in the supermanifold. Following ideas of Castellani, Catenacci, and Grassi
\cite{CCG}, we can thus define the action as an integral over the entire superspace
through the use of a Poincar\'e dual integral form called a `picture changing
operator.' This idea allowed Castellani \emph{et al.\ }to compare different
formulations of supergravity in 3d, but their work raises some questions which our
techniques may help to answer.

Our techniques enter because, in $d$-dimensional geometric supergravity, the
Lagrangian $\Lag(\rheo{\varphi})$ is a $d$-form on a trivial family of supermanifolds
$\mani_{/\para}$ of dimension $d|s$, depending on the fields $\rheo{\varphi}$ on
$\mani_{/\para}$. We would like to integrate this over spacetime to define an action,
but this differential form cannot be integrated over $\mani_{/\para}$. Because this
is a trivial family, however, we have the canonical embedding of the even
submanifold, $\iota_{\can} \maps \mani^{\ev}_{/\para} \inclusion \mani_{/\para}$, so
we can define the action of the supergravity theory as the integral of the pullback:
\begin{equation}
  S(\rheo{\varphi}) = \int_{\mani^{\ev}/\para} \iota^{*}_{\can} \L(\rheo{\varphi}) \in \Oh(\para) .
\end{equation}
As we have seen, there are many other embeddings besides the canonical one. A key
idea of geometric supergravity is that we should vary the action with respect to
these embeddings as well as with respect to the fields:
\begin{equation}
  S(\rheo{\varphi}, \iota) = \int_{\mani^{\ev}/\para} \iota^{*} \L(\rheo{\varphi}) .
\end{equation}

Castellani \emph{et al.\ }\cite{CCG} took this a step further, by first
transferring the dependence on the embedding $\iota$ to its Poincar\'e dual integral
form, as we do in Theorem \ref{thm:dual},
$S(\rheo{\varphi}, \iota) = \int_{\mani/\para} Y_{\iota} \cdot \L(\rheo{\varphi})$,
and then allowing the integral form, and hence the embedding, to depend on the
fields:
\begin{equation}
  \label{eq:mother}
  S(\rheo{\varphi}) = \int_{\mani / \para} \Y(\rheo{\varphi}) \cdot \L(\rheo{\varphi}) .
\end{equation}
An integral form $\Y(\rheo{\varphi})$ in the same cohomology class as
$Y_{\iota_{\can}}$ is called a \define{picture changing operator}. By choosing
different picture changing operators, one gets equivalent, but superficially
different, actions for the supergravity theory. Castellani \emph{et al.\ }\cite{CCG}
used this idea to argue that the component, geometric, and superspace formulations of
3d supergravity all have actions of the above form, Eq. \ref{eq:mother}. 
While the basic idea of rewriting the superspace action using integral forms goes back to
to~\cite{CCG}, our goal in this section is to put this technique on a
fully geometric and cohomological footing, so that the comparison of formulations becomes a
mathematically controlled statement rather than a formal manipulation. Concretely, we strengthen
and clarify many points that were left implicit in the physics literature. 

First, we provide a first rigorous definition picture changing operators, and treat them functorially in families. This provides a
canonical framework in which ``changing the PCO'' is a cohomological statement, and allows us to
state and prove general invariance results under the precise hypotheses that are needed.

Moreover, we work systematically with sheaves of fields (and their constraints), and we make explicit the
morphisms relating superspace, geometric, and component descriptions. In particular, we compare
the three formulations at the level of \emph{spaces of fields} (and not merely at the
level of their action functionals), which is essential for a precise identification of ``equivalent''
descriptions.

Crucially, we provide proofs for statements that are often used informally in the physics literature, including
the cohomological equivalence of standard and supersymmetric PCO constructions. Along the way,
we isolate the exact assumptions under which such equivalences hold and clarify some steps where
additional hypotheses are required.



Finally, we will give a framework, broader than geometric
supergravity, where picture changing operators can be used to prove an equivalence of
Lagrangian field theories defined over families of supermanifolds, and highlight how
3d supergravity fits into this framework.


\subsection{3d supergravity basics}
\label{sec:basics}

We will compare three different approaches to 3d, $\cN = 1$ supergravity: the
component, superspace, and geometric formulations. The component and superspace
approaches are the most well known. In the component approach, the fields are defined
on spacetime, an ordinary manifold, and their physical meaning is clear\footnote{At
  least to a physicist.}, but supersymmetry is not manifest. In the superspace
approach, the `superfields' are defined on an extension of spacetime to a
supermanifold called superspace. While their physical meaning is more obscure,
supersymmetry is implemented via diffeomorphisms of superspace and is geometric in
nature. The geometric formulation, also known as the rheonomic formulation, attempts
to find a happy medium between the component and superspace versions of supergravity,
in which supersymmetry is given by diffeomorphisms on superspace, while the fields on
superspace, also called superfields, correspond to the component fields on spacetime
in a one-to-one fashion.

Before we dig into the details of these approaches, a word on our setting. In this
section, we will work locally, considering only a single, contractible, relative
coordinate chart, but we will retain an arbitrary supermanifold $\para$ as the base
of our family. In other words, \define{spacetime} means the trivial family
$\R^{3} \times \para_{/\para}$ in this section, and \define{superspace} means
$\R^{3|2} \times \para_{/\para}$, where the odd dimension 2 is the dimension of a
relevant spinor representation. To save space, we will shorten our notation to
$\R^{3}_{/\para}$ and $\R^{3|2}_{/\para}$ for spacetime and superspace, respectively.

The geometric and superspace formulations of supergravity both use so-called
superfields: these are fields defined on superspace, $\R^{3|2}_{/\para}$. In the
cases of interest to us, such fields are sections of vector bundles over
$\R^{3|2}_{/\para}$, subject to a natural nondegeneracy condition. In other words,
\define{superfields} are sections of subsheaves of
$\Oh_{\R^{3|2} \times \para}$-modules. We will denote a sheaf of superfields with an
uppercase script letter, such as $\Geo$.

As we mentioned about, the point of superfield formulations is to provide a manifest,
geometric meaning to supersymmetry. So, we can expect Lie superalgebra of
supersymmetries, which we denote $\p$, to act on the space of superfields $\Geo$ in a
nice, geometric fashion, which will leave the equations of motion of supergravity
invariant.

However, the physical intuition for superfields is generally less clear. So, every
theory of supergravity using superfields comes with a second formulation, using
so-called component fields. In contrast to superfields, these are fields defined on
spacetime, $\R^3_{/\para}$. In cases of interest to us, such fields are sections of
vector bundles over $\R^3_{/\para}$, subject to a natural nondegeneracy condition. In
other words, \define{component fields} are sections of subsheaves of
$\Oh_{\R^3 \times \para}$-modules. We will denote a sheaf of component fields with an
uppercase calligraphic letter, such as $\GeoComp$.

Both $\Geo$ and $\GeoComp$ are sheaves on the same topological space,
$\R^3 \times |\para|$. To compare the two spaces of fields, we have a map of sheaves,
\begin{equation}
  \kappa \maps \Sup \to \SupComp .
\end{equation}
In general, $\kappa$ is a differential operator. It is never an isomorphism.

We want our superfield and component theories to be equivalent, however. In
particular, we want an isomorphism between the spaces of fields. In order to achieve
this, we impose \emph{constraints} on the superfields. For the geometric formulation,
these are the infamous `rheonomic constraints'. For the superspace formulation, these
are called `conventional constraints'. In either case, we end up with a subsheaf of
the space of superfields, $\Conv \subseteq \Geo$, consisting of the superfields which
satisfy the constraints.

The key property is that the constrained superfields should be isomorphic to the
component fields via the component map, $\kappa$. Putting all three sheaves together
into a commutative triangle, we see the general shape of superfield formalisms:
\begin{figure}[H]
  \[
    \begin{tikzcd}[column sep=0.3em]
      \mbox{superfields}  & \Geo \ar[drrrrrrr, "\kappa \quad" below] &&&&&&& \Conv \ar[lllllll, hook'] \ar[d, "\iso"]  & \mbox{on } \R^{3|2}_{/\para} , \\
      \phantom{\mbox{abcde}} \mbox{component fields}  &                        &&&&&&& \GeoComp & \mbox{on } \R^{3}_{/\para} .  \\
    \end{tikzcd}
  \]
  \caption{A commutative triangle of sheaves on $\R^3 \times |\para|$.}
  \label{fig:shape}
\end{figure}

\noindent
Both of the superfield formulations of 3d supergravity we will discuss, the geometric
and superspace formulations, will have a commutative triangle of sheaves like this
one. However, in the superspace formulation, we will need to upgrade from sheaves to
stacks in order to account for gauge equivalence. In the geometric formulation, this
is not needed, which suggests that it is somehow more rigid.

\begin{rem}[Notational convention]
  \label{rem:notation}
  As an aid, we indicate the sheaf where a given field $\varphi$ lives with some
  decoration on the symbol.
  \begin{itemize}
    \item $\varphi$, with no decoration, is a superfield, i.e., a global section of
          $\Geo$;
    \item $\rheo{\varphi}$ in boldface is a constrained superfield, i.e., a global
          section of $\Conv$;
    \item $\comp{\varphi}$ with a subscript 0 is a component field, i.e., a global
          section of $\GeoComp$.
  \end{itemize}
\end{rem}

To conclude, let us preview the three approaches for 3d, $\cN = 1$ supergravity we
shall describe. In the following table, we give the main features of the component,
geometric, and superspace formulations, respectively. In this table we will see the
main fields of all three formulations: the spin connection $\omega$, the coframe $e$,
and the gravitino $\psi$, all decorated as in Remark \ref{rem:notation}. Besides
these fields, the component and geometric formulations include auxiliary fields,
either a scalar, $\scal$, or a 2-form, $B$.
\begin{center}
  \renewcommand{\arraystretch}{1.2}
  \begin{tabular}{l|ccc}
    \hline
    \multicolumn{4}{|c|}{\textbf{\boldmath{3d $\cN = 1$ supergravity}}} \\
    \hline
    \hline
    Formulation & component & geometric & superspace \\
    \hline
    Model space & $\R^{3}$ & $\R^{3|2}$ & $\R^{3|2}$ \\
    Fields & $\comp{\omega}$, $\comp{e}$, $\comp{\psi}$ & $\rheo{\omega}$, $\rheo{e}$, $\rheo{\psi}$  & $\conv{\omega}$, $\conv{e}$, $\conv{\psi}$ \\
    Aux fields & $\comp{\scal}$ or $\comp{B}$ & $\rheo{B}$ & --- \\
    Constraints & --- & rheonomic & conventional \\
    Lagrangian $\Lag$ & differential 3-form & differential 3-form & integral 3-form \\
    PCO $\Y$ & --- & integral 0-form & --- \\[0.5ex]
    Action & $\displaystyle \int_{\R^3} \Lag $  & $\displaystyle \int_{\R^{3|2}} \Y \cdot \Lag$ & $\displaystyle \int_{\R^{3|2}} \Lag $ \\[2ex]
    \hline
  \end{tabular}
\end{center}

\subsection{The super-Poincar\'e algebra}
\label{sec:superalg}

To talk about supergravity, as with all supersymmetric theories in physics, we first
need to fix the supersymmetry algebra, a Lie superalgebra that plays an essential
role in all that follows. In our case, this is the 3d, $\cN = 1$ super-Poincar\'e
algebra, and it is best introduced in stages, beginning with the supersymmetric
analogue of infinitesimal translations. Spinors enter our story here, so we first
recall a tiny bit of representation theory.

\begin{defn}
  The \define{Lorentz algebra} $\so(2,1)$ is the Lie algebra preserving the
  Minkowski inner product on $\R^{2,1}$. We call $V \defeq \R^{2,1}$ the
  \define{vector representation} of $\so(2,1)$.
\end{defn}

\noindent
It is a famous fact that the Lorentz algebra $\so(2,1)$ is isomorphic to
$\mathfrak{sl}(2,\R)$, the Lie algebra of traceless $2 \times 2$ real matrices
\cite{Lie, DSpinors}. Taking this isomorphism as given, we get another representation of
$\so(2,1)$.

\begin{defn}
  The \define{spinor representation} $S$ of $\so(2,1)$ is the defining representation
  $\R^{2}$ of $\mathfrak{sl}(2,\R)$, viewed as an $\so(2,1)$ representation via the
  isomorphism $\so(2,1) \iso \mathfrak{sl}(2,\R)$.
\end{defn}

\noindent
The spinor representation $S$ of $\so(2,1)$ is quite famous in physics---in the
physics literature, elements of $S$ are known as 3d \define{Majorana spinors}
\cite{Figueroa}.

With these two representations of $\so(2,1)$ in hand, we introduce the translation
part of our supersymmetry algebra. For this step, we note one final fact: under the
isomorphism $\so(2,1) \iso \mathfrak{sl}(2,\R)$, the vector representation turns out
to be the symmetric square of the defining representation,
$V \iso \bigvee^{2} \R^{2}$ \cite[Ch 2]{DFSuper}. In other words, as $\so(2,1)$
representations, the vector representation is the symmetric square of the spinor
representation:
\begin{equation}
  \textstyle V \iso \bigvee^{2} S .
\end{equation}

\begin{defn}
  The 3d, $\cN = 1$ \define{supertranslation algebra} is the Lie superalgebra $\t$
  with even part given by the vector representation of $\so(2,1)$, and odd part given
  by the spinor representation:
  \begin{equation}
    \t_{0} \defeq V, \quad \t_{1} \defeq S .
  \end{equation}
  This becomes a Lie superalgebra when equipped with a Lie bracket:
  \begin{equation}
    [-,-] \maps \t \otimes \t \to \t .
  \end{equation}
  By definition, the only nonzero part of this Lie bracket takes a pair of spinors and produces a vector, using the symmetrizing map:
  \begin{equation}
    \renewcommand{\arraycolsep}{1pt}
    \begin{array}{cccc}
      [-,-] \maps & S \otimes S & \to & V, \\
                  & \psi \otimes \phi & \mapsto & \psi \vee \phi ,
    \end{array}
  \end{equation}
  where we have made use of the aforementioned isomorphism, $V \iso \bigvee^{2} S$.
\end{defn}

\noindent
Note that the above bracket does indeed make $\t$ into a Lie superalgebra: it
satisfies the Jacobi identity trivially, because all nested brackets vanish, and it
is graded-skew symmetric because it is symmetric on the spinors, which are odd.

We constructed the Lie superalgebra $\t$ in a way that makes the following fact
immediate.

\begin{prop}
  \label{prop:equivariant}
  The supertranslation algebra $\t$ is an $\so(2,1)$ representation, and the Lie
  bracket on $\t$ is $\so(2,1)$-equivariant.
\end{prop}

\noindent
For the aid of our physics readers, we also give the definition of $\t$ in terms of a
basis. Let $\{Q_{\alpha}\}$ be a basis for the spinors $S$, and $\{P_{a}\}$ a basis
for the vectors $V$. In terms of gamma matrices, the bracket on $\t$ is given by the
formula
\begin{equation}
  \label{eq:susycharges}
  [Q_{\alpha}, Q_{\beta}] = 2 \gamma^{a}_{\alpha \beta} P_{a} ,
\end{equation}
where $\gamma^{a}$ is a real 3d gamma matrix.

With this rather lengthy definition of the supertranslation algebra out of the way,
the full supersymmetry algebra is easy to define: it is the super-Poincar\'e algebra.

\begin{defn}
  The 3d, $\cN = 1$ \define{super-Poincar\'e algebra} $\p$ is defined as the
  semidirect product of the Lorentz algebra $\so(2,1)$ and the supertranslation
  algebra $\t$,
  \begin{equation}
    \p \defeq \so(2,1) \ltimes \t .
  \end{equation}
\end{defn}

\noindent
Note that this semidirect product makes sense: by construction, $\t$ is a
representation of $\so(2,1)$, and the Lie bracket on $\t$ is $\so(2,1)$-equivariant,
as noted in Proposition \ref{prop:equivariant}.

As the reader will have noted, this section is all about the representation theory of
the Lorentz algebra $\so(2,1)$. Before we move on, we will briefly make note of
several $\so(2,1)$-equivariant maps that will be useful later, particularly in
constructing Lagrangians for 3d supergravity.

First, note that the action of $\so(2,1)$ on a representation is an equivariant map,
where $\so(2,1)$ acts on itself via the adjoint action. We will use the vector and
spinor representations in this way later:
\begin{equation}
  \rho_{V} \maps \so(2,1) \otimes V \to V, \quad \rho_{S} \maps \so(2,1) \otimes S \to S.
\end{equation}

Second, both the vector and spinor representations are self-dual, and thus have an
invariant pairing. The invariant pairing on the vector representation is simply the
Minkowski metric, which we denote by $g$:
\begin{equation}
  g \maps V \otimes V \to \R .
\end{equation}
The spinors have an invariant, antisymmetric pairing, given by contracting with a
volume form on $S = \R^{2}$ preserved by $\mathfrak{sl}(2,\R) \iso \so(2,1)$. We
denote this pairing by $\ip{-,-}$ and call it the \define{spinor pairing}:
\begin{equation}
  \ip{-,-} \maps S \otimes S \to \R .
\end{equation}

Third, as we have already noted, the Lie bracket is equivariant:
\begin{equation}
  [-,-] \maps S \otimes S \to V .
\end{equation}

Fourth, we can use the invariant pairings to dualize $V$ and $S$ in the Lie bracket
and obtain a new equivariant map, called the \define{Clifford action}:
\begin{equation}
  \gamma \maps V \otimes S \to S .
\end{equation}
As befitting an action, we will write $\gamma(v) \psi$ for the image of
$v \otimes \psi \in V \otimes S$ under $\gamma$. Explicitly, we take $\gamma$ to be
the unique map such that the following identity holds:
\begin{equation}
  g(v, [\psi, \phi]) = 2 \ip{\psi, \gamma(v) \phi} ,
\end{equation}
for all $v \in V$, $\psi, \phi \in S$. The factor of 2 makes this definition
consistent with the $\gamma$ matrix in Eq. \ref{eq:susycharges}, so
$\gamma_a \defeq \gamma(P_a)$ really is a $\gamma$ matrix in the usual sense.

Fifth, it is a famous fact that the special orthogonal Lie algebra consists of
skew-symmetric matrices. Hence, there is an isomorphism between the adjoint
representation and the exterior square of the vector representation. We denote this
by $\sharp$ and call it the \define{musical morphism}, in honor of the fact that it
comes from using the Minkowski metric $g$ to ``raise an index'' on an element of
$\so(2,1)$:
\begin{equation}
  \textstyle (-)^\sharp \maps \so(2,1) \isoto \bigwedge^{2} V .
\end{equation}

Sixth, because $V$ has a volume form and a metric, we have the Hodge duality
isomorphism $V \iso \bigwedge^2 V$. Combining this with the musical morphism, we have
an isomorphism between the vector representation and the adjoint representation,
which we call the \define{Hodge star} and write as a superscript:
\begin{equation}
  (-)^\star \maps V \isoto \so(2,1) .
\end{equation}

Finally, the Lorentz algebra preserves the canonical volume form on $V$ coming from
the Minkowski metric and a choice of orientation:
\begin{equation}
  \textstyle \vol \maps \bigwedge^{3} V \to \R .
\end{equation}
The volume form is normalized so that, for $\{P_a\}$ an oriented orthonormal basis
of $V$, we have $\vol(P_0, P_1, P_2) = 1$. More generally, the components of $\vol$
in such a basis are given by the Levi-Civita alternating symbol,
$\vol(P_a, P_b, P_c) = \epsilon_{abc}$, with $\epsilon_{012} = 1$.

We collect these maps in the following table.

\begin{table}[H]
  \begin{tabular}{lr@{$\maps$}l}
    \hline
    Description & \multicolumn{2}{c}{Map} \\
    \hline
    Vector representation   & $\rho_{_{V}}$ & $\so(2,1) \otimes V \longrightarrow V$ \\
    Spinor representation   & $\rho_{_{S}}$ & $\so(2,1) \otimes S \longrightarrow S$ \\
    Minkowski inner product & $g$          & $V \otimes V \longrightarrow \R$ \\
    Spinor pairing          & $\ip{-,-}$   & $S \otimes S \longrightarrow \R$ \\
    Lie bracket             & $[-,-]$      & $S \otimes S \longrightarrow V$ \\
    Clifford action         & $\gamma$     & $V \otimes S \longrightarrow S$ \\
    Musical morphism        & $(-)^\sharp$     & $\so(2,1) \isoto \bigwedge^{2} V$ \\
    Hodge star              & $(-)^\star$   & $V \isoto \so(2,1)$ \\
    Volume form             & $\vol$       & $\bigwedge^{3} V \longrightarrow \R$ \\
    \hline
  \end{tabular}
  \caption{The $\so(2,1)$-equivariant maps.}
  \label{tab:maps}
\end{table}

\noindent
In the coming sections, we will use the super-Poincar\'e algebra in the construction
of 3d, $\cN = 1$ supergravity in the component, geometric and superspace formulations.

\subsection{3d supergravity -- the geometric formulation}
\label{sec:geom}

The component and superspace approaches to supergravity are the most well known. In
the component approach, the fields are defined on spacetime, an ordinary manifold,
and their physical meaning is typically clear, but supersymmetry is not manifest. In
the superspace approach, the fields are defined on an extension of spacetime to a
supermanifold called superspace. While their physical meaning is more obscure,
supersymmetry is implemented via diffeomorphisms of superspace and is geometric in
nature.

The key idea of the geometric formulation of supergravity is to find a happy medium
between the component and superspace formulations, in which supersymmetry is given by
diffeomorphisms on superspace, while the fields on superspace correspond to the
component fields on spacetime in a one-to-one fashion. This works via the so-called
`rheonomic parametrization': for any component field $\comp{\varphi}$ on spacetime
$\R^3_{/\para}$, the rheonomic parametrization yields a unique extension
$\rheo{\varphi}$ to superspace $\R^{3|2}_{/\para}$, with the property that
$\iota^{*}_{\can} \rheo{\varphi} = \comp{\varphi}$.

Finally, the reader may wonder why we confine ourselves to trivial families when
talking about supergravity, and why the canonical embedding $\iota_{\can}$ plays a
distinguished role. This is because, to the best of our understanding, physicists
working in superspace only work (albeit implicitly) with trivial families, and use
the canonical embedding to single out the actual, physical spacetime.

\subsubsection{\textbf{Fields}}
\label{sec:fields}

After these algebraic preliminaries, we are finally ready to introduce geometric
supergravity. We begin by introducing the fields.

\begin{defn}
  \label{def:fields}
  We define the \define{geometric superfields of 3d, $\cN = 1$ supergravity} to be
  the following:
  \begin{itemize}
    \item an $\so(2,1)$-valued 1-form
          $\omega \in \Omega^{1}(\R^{3|2}_{/\para}, \so(2,1))$, the
          \define{superfield spin connection};
    \item a vector-valued 1-form $e \in \Omega^{1}(\R^{3|2}_{/\para}, V)$,
          the \define{superfield coframe};
    \item a spinor-valued 1-form $\psi \in \Omega^{1}(\R^{3|2}_{/\para}, S)$,
          the \define{superfield gravitino};
    \item a 2-form $B \in \Omega^{2}(\R^{3|2}_{/\para})$, the \define{2-form
          gauge superfield}.
  \end{itemize}
  In addition, we require the $\t$-valued 1-form given by the sum
  $E = e + \psi$ to be nondegenerate in the sense that
  $E$ trivializes the tangent bundle. That is, we have an isomorphism of
  $\Oh(\R^{3|2} \times \para)$-modules:
  \begin{equation}
    E \maps \T \R^{3|2}_{/\para} \isoto \Oh(\R^{3|2} \times \para) \otimes \t,
  \end{equation}
  where $\T \R^{3|2}_{/\para}$ denotes the module of vertical vector fields on our
  family. This supertranslation-valued 1-form
  $E \in \Omega^{1}(\R^{3|2}_{/\para}, \t)$ is called the
  \define{supervielbein}, and plays an important role in the superspace formulation
  of supergravity we will introduce later on.
\end{defn}

\noindent
Because we are working in a single chart, we can work directly with forms on
superspace. A nontrivial topology on superspace would force us to think about the
principal bundle that encodes gauge transformations between coordinate patches, but
we consider this a distraction to our aims in this section.

We have defined our fields globally, but they naturally form a sheaf, which we denote
$\Geo_{\R^{3|2}/\para}$. For an open set $U \subseteq \R^{3|2} \times \para$, the
sections are:
\begin{equation}
  \Geo_{\R^{3|2}/\para}(U) = \{ (\omega, e, \psi, B) \mbox{ over } U \, : \,  E = e + \psi \mbox{ nondegenerate} \} .
\end{equation}
Moreover, because $\Geo_{\R^{3|2}/\para}$ is a sheaf of differential forms, we can
apply the operation of pullback to sections of this sheaf. For instance, given an
even submanifold $\iota \maps \R^3_{/\para} \inclusion \R^{3|2}_{/\para}$, we get a
map $\iota^*$ of sheaves:
\begin{equation}
  \iota^* \maps \Geo_{\R^{3|2}/\para} \to \GeoComp_{\R^3/\para} ,
\end{equation}
where $\GeoComp_{\R^3/\para}$ is the sheaf of `component fields' we will define
later; the latter is also a sheaf of differential forms, but now on spacetime,
$\R^3_{/\para}$. This operation of pullback will play a critical role when we
introduce the `rheonomic parametrization' in Section \ref{sec:constraints}.

This collection of four fields is fairly unwieldy, so one might want to organize them
in a way that makes their conceptual meaning clearer. This is possible within the
framework of Cartan geometry, which describes geometries by comparison with a
suitable homogeneous space \cite{Sharpe}. In our case, the super-Poincar\'e-valued
1-form $A \defeq \omega + e + \psi$ turns out to be the local
data for the Cartan connection of a Cartan geometry modeled on the homogeneous
supermanifold called \define{super-Minkowski spacetime},
\begin{equation}
  \mathbb{M}^{2,1|2} \defeq \mathcal{P}/\Spin(2,1) .
\end{equation}
Here, $\mathcal{P}$ denotes the \define{super-Poincar\'e group}, the connected Lie
supergroup whose Lie superalgebra is the super-Poincar\'e algebra, $\p$, and
$\Spin(2,1) \iso \SL(2,\R)$ denotes the connected spin group. Super-Minkowski spacetime is the
supergeometric analogue of Minkowski spacetime $\mathbb{M}^{2,1}$, an analogy we can
make visible by identifying $\M^{2,1}$ with the homogeneous space $P / \SO(2,1)$, for
$P$ the ordinary Poincar\'e group.

One can take this idea of Cartan geometry even further by moving to what is called
higher Cartan geometry: the 1-form \emph{and} 2-form fields
$\A = \omega + e + \psi + B$ turn out to be the local data for a `higher Cartan
connection' valued in an `$L_{\infty}$-algebra'. It turns out that there is an
$L_{\infty}$-algebra that extends the super-Poincar\'e algebra $\p$. This
$L_\infty$-algebra, $\sugra$, is defined on a cochain complex with two terms:
\begin{equation}
  \R \stackrel{d}{\longrightarrow} \p .
\end{equation}
Here, $\R$ is in degree $-1$ and $\p$ is in degree 0. We regard $\A$ as a
$\sugra$-valued form of total degree 1, which foreshadows the fact that $\A$ is a
kind of connection. For much more, see the article of Sati, Schreiber, and Stasheff
on $L_{\infty}$-algebra-valued connections \cite{Sati:2009}.

In the physics literature, the $L_{\infty}$-algebra $\sugra$ is presented in terms of
a so-called `free differential algebra'. Translating between the physics perspective,
where many examples have been computed, and the mathematical perspective, where
powerful theorems connect $L_{\infty}$-algebras to other topics, has already been
quite fruitful, and much remains to be done! For more on this topic from the physics
side, see the textbook of Castellani, D'Auria, and Fr\'e \cite{CDF}. From the
mathematical side, see the work of Urs Schreiber and collaborators, starting with the
$n$Lab entry on the D'Auria--Fr\'e formulation of supergravity \cite{nLab}.

Of course, in order to apply Cartan geometry or higher Cartan geometry to our
setting, one should spell out how to adapt these frameworks to work with families of
supermanifolds. For Cartan geometry, this was done in Eder's PhD thesis \cite{Eder}.
For higher Cartan geometry, parts of the formalism have been developed in a high
level of generality in the setting of higher topos theory \cite{Schreiber} and modal
homotopy type theory \cite{Cherubini}, though we are unaware of a treatment that
specifically works with families. We leave that to future work.

\subsubsection{\textbf{Interlude on the supervielbein and related fields}}
\label{sec:interlude}

In what follows, including in the Lagrangian we introduce in the next section, we
will need a few facts about supervielbein $E$, its inverse $E^{-1}$, and some related
fields. First recall that the supervielbein $E = e + \psi$ is the $\t$-valued 1-form
obtained from the sum of the superfield coframe $e$ and the gravitino $\psi$.
Regarded as a map of $\Oh(\R^{3|2} \times \para)$-modules, we demand that it be an
isomorphism:
\begin{equation}
  E \maps \T \R^{3|2}_{/\para} \isoto \Oh(\R^{3|2} \times \para) \otimes \t,
\end{equation}
where $\T \R^{3|2}_{/\para}$ denotes the module of vertical vector fields on our
family.

So, we want to work with a map, $E$, linear over the supercommutative algebra
$\Oh(\R^{3|2} \times \para)$. Before getting to specifics, we recall a few
conventions to minimize signs when working with such maps.

Let $A$ be a supercommutative ring, and $M$ a left $A$-module. Because $A$ is
supercommutative, we can regard $M$ as a right $A$-module, where the right action of
$a \in A$ on $m \in M$ is defined in terms of the left as follows:
\begin{equation}
  ma \defeq (-1)^{|a||m|} am .
\end{equation}
Here $|a|, \, |m| \in \Z_2$ denote the parity of the homogeneous elements $a$ and
$m$. Thus every left module is naturally also a right module, and in fact the
category of left modules is equivalent to that of right modules. Since it involves a
sign, however, we will try to avoid using this equivalence, and thereby avoid
introducing the signs. The cost of doing so will be that, at times, we will need to
distinguish carefully between left and right modules. Here is our convention:

\begin{convention}
  In any bilinear map pairing a left module with a right module, the element of the
  left module should appear on the left and the right module on the right.
\end{convention}

\noindent
An example of this convention can be seen in the pairing between vector fields and
1-forms, Eq.\ \ref{eq:pairing}, which we recall now:
\begin{eqnarray}
\xymatrix{
( - , - ) \maps \mathcal{T}_{\mani / \para} \otimes_{\mathcal{O}_{\mani}} \Omega^1_{\mani / \para} \ar[r] & \mathcal{O}_{\mani},
}
\end{eqnarray}
Here, we view the tangent bundle $\T_{\mani / \para}$ as a left $\Oh_\mani$-module,
and its dual $\Omega^1_{\mani / \para}$ as a right $\Oh_\mani$-module. Doing so
reduces signs when pulling out coefficients, since we have,
\begin{equation}
  (fv, \omega g) = f(v, \omega)g ,
\end{equation}
for all vector fields $v$, 1-forms $\omega$, and structure sheaf sections $f$ and
$g$.

We will pursue this convention relentlessly, even going so far as to modify the order
in which we write function evaluation and composition. Specifically, for the map $E$
of interest to us now, we denote the value of $E$ on a vector field $v$ by $vE$.
Moreover, in composing $E$ with its inverse $E^{-1}$, $EE^{-1}$ denotes the operation
of first applying $E$ and then $E^{-1}$, and the reverse for $E^{-1} E$, contrary to
the more usual order.

With these conventions laid out, let us analyze $E$ and $E^{-1}$ more carefully.
While we have ruthlessly avoided a choice of basis, such a choice serves us well now.
So, let $\{P_a\}_{a \in \{0,1,2\}}$ be a basis for the vectors $V$, and
$\{Q_{\alpha}\}_{\alpha \in \{+,-\}}$ a basis for the spinors $S$. We use upper case
Latin indices $A, B$ for indices that can run over both the vector and spinor
indices, $A, B \in \{0,1,2, +, -\}$, and write $\{u_A\}_{A \in \{0,1,2,+,-\}}$ for
the basis of $\t = V \oplus S$ such that $u_a = P_a$ and $u_\alpha = Q_\alpha$. We
will also need the dual basis $\{u^A\}$ of $\t^*$, which is the unique basis such
that $(u_A, u^B) = \delta^B_A$. Note that this duality relation implies that
$\{u^a\}$ and $\{u^\alpha\}$ form the bases of $V^*$ and $S^*$ dual to $\{P_a\}$ and
$\{Q_\alpha\}$, respectively, so we also write $u^a = P^a$ and $u^\alpha = Q^\alpha$.

Since $E$ is $\t$-valued, we can find 1-forms $E^A$ such that $E = E^A u_A$. This
means that for any vector field $v \in \T \R^{3|2}_{/\para}$, we have:
\begin{equation}
  vE = (v, E^A) u_A ,
\end{equation}
where as usual $(-,-)$ denotes the canonical pairing between vector fields and
1-forms. For the inverse, we can find vector fields $E_A$ such that
$E^{-1} = u^A E_A$. This means that for any element $X \in \t$ in the
supertranslation algebra, we have:
\begin{equation}
  X E^{-1} = (X, u^A) E_A ,
\end{equation}
where we extend this to $\Oh(\R^{3|2} \times \para) \otimes \t$ by linearity, and now
$(-,-) \maps \t \otimes \t^* \to \R$ denotes the canonical pairing between $\t$ and
its dual $\t^*$. It is a quick computation to check that $E^{-1}$ is the inverse of
$E$ if and only if the vector fields $\{E_A\}$ and 1-forms $\{E^A\}$ constitute dual
bases of $\T\R^{3|2}_{/\para}$ and $\Omega^1(\R^{3|2}_{/\para})$, respectively:
\begin{equation}
  (E_A, E^B) = \delta^B_A .
\end{equation}

As we know, the supervielbein is the sum $E = e + \psi$ of the superfield coframe $e$
and the superfield gravitino $\psi$. In terms of components, this means:
\begin{equation}
  e^a = E^a, \quad \psi^\alpha = E^\alpha .
\end{equation}
In parallel, we shall define the following distinguished vector fields:
\begin{equation}
  e_a \defeq E_a, \quad \psi_\alpha \defeq E_\alpha .
\end{equation}
Of course, we can use the Minkowski metric and spinor pairing to raise the indices of
$e_a$ and $\psi_\alpha$, respectively, and thus we define:
\begin{itemize}
  \item a $V$-valued vector field, $e^{-1} \defeq g^{ab} P_a e_b $, the \define{superfield
        frame};
  \item an $S$-valued vector field,
        $\psi^{-1} \defeq \epsilon^{\alpha \beta} Q_\alpha \psi_\beta$, the
        \define{inverse superfield gravitino}; here
        $\epsilon^{\alpha \beta} = \ip{Q^\alpha, Q^\beta}$ are the components of the
        inverse spinor pairing, and the order matters, since $\ip{-,-}$ is
        antisymmetric.
\end{itemize}

Both $e^{-1}$ and $\psi^{-1}$ will play a role in what follows. Specifically, we need
to build a pair of 3-vector fields from them; that is, we want to combine $e^{-1}$
and $\psi^{-1}$ into certain sections of $\bigwedge^3 \T\R^{3|2}_{/\para}$. We give
these 3-vector fields in the following definitions.

\begin{defn}
  Apply the volume form $\vol \maps \bigwedge^3 V \to \R$ to the
  $\bigwedge^{3} V$-valued 3-vector $e^{-1} \wedge e^{-1} \wedge e^{-1}$ to obtain
  the \define{volume 3-vector}:
  \begin{equation}
    \vol(e^{-1}) \defeq \vol(e^{-1} \wedge e^{-1} \wedge e^{-1}) \in \bigwedge^3 \T \R^{3|2}_{/\para}.
  \end{equation}
\end{defn}

To clarify the volume 3-vector, let us expand it in a basis. Recall that when
$\{P_a\}$ is an oriented orthonormal basis of $V$, the volume form has components
given by the Levi-Civita alternating symbol, $\vol(P_a, P_b, P_c) = \epsilon_{abc}$,
where $\epsilon_{012} = 1$. Thus the volume 3-vector becomes
\begin{equation}
  \vol(e^{-1}) = \frac{1}{3!} \epsilon^{abc} e_a \wedge e_b \wedge e_c = - e_0 \wedge e_1 \wedge e_2.
\end{equation}
The minus sign is due to the fact that $\epsilon^{012} = -1$, thanks to raising the
indices with the Minkowski metric.

\begin{defn}
  Use the bracket to produce a $V$-valued 2-vector $[\psi^{-1}, \psi^{-1}]$, and then
  contract with $e^{-1}$ using the metric to obtain the \define{supersymmetric 3-vector}:
  \begin{equation}
    \textstyle \Theta \defeq g(e^{-1}, [\psi^{-1}, \psi^{-1}]) \in \bigwedge^3 \T\R^{3|2}_{/\para} .
  \end{equation}
\end{defn}

Again, let us expand in terms of a basis. Recall that
$[Q_\alpha, Q_\beta] = 2\gamma^a_{\alpha\beta} P_a$, where $\gamma^a$ is the 3d real
gamma matrix. Using the spinor pairing to raise the spinor indices, we see that
$\Theta$ has the form:
\begin{equation}
  \Theta = 2 \gamma^{a \alpha \beta} e_a \wedge \psi_\alpha \wedge \psi_\beta.
\end{equation}

\subsubsection{\textbf{The Lagrangian}}
\label{sec:lag}

Geometric supergravity is a Lagrangian field theory, whose equations of motion are
PDE coming from a variational principle. As we have mentioned, the Lagrangian will be
a 3-form on superspace. It will be expressed using the curvatures of the geometric
superfields, which we now define.

\begin{defn}
  \label{def:curvatures}
  By the \define{curvatures} of the geometric superfields, we mean the following
  differential forms:
  \begin{itemize}
    \item The \define{spin curvature} of the superfield spin connection,
          $\curv \defeq d\omega + \frac{1}{2} [\omega, \omega]$, which is an
          $\so(2,1)$-valued 2-form;
    \item the \define{torsion} of the coframe,
          $T \defeq d_{\omega} e + \frac{1}{2}[\psi,\psi]$, which is a
          vector-valued 2-form;
    \item the \define{gravitino field-strength}, $\rho \defeq d_{\omega} \psi$, which
          is a spinor-valued 2-form;
    \item the \define{3-form curvature}, $H \defeq dB - \frac{1}{2} g(e, [\psi, \psi])$, which is
          a 3-form.
  \end{itemize}
  In the above equations, we have adopted the following conventions:
  \begin{itemize}
    \item $d_{\omega}$ denotes the covariant exterior derivative built from the spin
          connection $\omega$. For a $p$-form $\alpha$ valued in a representation $U$
          of $\so(2,1)$, $d_{\omega} \alpha$ is the $U$-valued $(p+1)$-form given by
          the familiar formula
          \begin{equation}
            d_{\omega} \alpha \defeq d \alpha + \omega \wedge \alpha ,
          \end{equation}
          where $d$ is the usual exterior derivative, and the wedge product involves
          the wedge of forms and the action of $\so(2,1)$ on $U$;
    \item in $[\psi, \psi]$, we bracket the spinor values of $\psi$ and wedge the
          forms, to produce a vector-valued 2-form;
    \item in $g(e, [\psi, \psi])$, we apply the Minkowski inner product $g$ to the
          wedge of the vector-valued 1-form $e$ and 2-form $[\psi,\psi]$ to produce a
          real-valued 3-form.
 \end{itemize}
\end{defn}

The origin of these formulas may seem obscure, but it is clarified if we combine the
1-form fields into a Cartan connection given by the super-Poincar\'e-valued 1-form
$A = \omega + e + \psi$, in which case the super-Poincar\'e-valued 2-form
$F = \curv + T + \rho$ is the curvature 2-form of this Cartan connection. In
fact, it is given by the familiar formula,
\begin{equation}
  F = dA + \frac{1}{2} [A,A],
\end{equation}
as the reader can check. Here the bracket $[A,A]$ of forms now uses the Lie bracket
on the super-Poincar\'e algebra, $\p$.

Generalizing to a higher Cartan connection, we could derive the expression for the
3-form curvature $H$ as well. Alas, the precise details of how this works are beyond
the scope of this article. For more, see Sati \emph{et al.\ }\cite{Sati:2009}.

With the curvatures of the geometric superfields in hand, we are ready to write down
our Lagrangian. For this definition, recall from Section \ref{sec:interlude} that we
write $e^{-1}$ for the $V$-valued vector field coming from the inverse
supervielbein $E^{-1}$:
\begin{equation}
  e^{-1} \defeq g^{ab} P_a E_b .
\end{equation}

\begin{defn}
  \label{def:Lagrangian}
  The \define{Lagrangian of 3d, $\cN = 1$ geometric supergravity} is the following
  3-form on $\R^{3|2}_{/\para}$:
  \begin{equation}
    \Lag(\omega, e, \psi, B) \defeq \vol(\curv^{\sharp} \wedge e) + \langle \psi, d_{\omega} \psi \rangle + \alpha (fH - \frac{1}{2} f^{2} \vol (e \wedge e \wedge e)) .
  \end{equation}
  Here, $\alpha \in \R$ is a constant that will be fixed later. In this Lagrangian,
  we have made ample use of the $\so(2,1)$-equivariant maps from Table
  \ref{tab:maps}, as we now describe:
  \begin{itemize}
    \item $\curv^{\sharp}$ denotes the $\bigwedge^{2}V$-valued 2-form that comes from
          applying the musical morphism to the spin curvature $\curv$;
    \item both $\curv^{\sharp} \wedge e$ and $e \wedge e \wedge e$ are
          $\bigwedge^{3} V$-valued 3-forms, and we apply the volume form on $V$ to
          make them real-valued;
    \item we apply the spinor pairing to the wedge of the spinor-valued 1-form $\psi$
          and the spinor-valued 2-form $d_{\omega} \psi$ to obtain a real-valued
          3-form $\ip{\psi, d_{\omega} \psi}$.
  \end{itemize}
  Finally, the 0-form $f$, called the \define{geometric scalar}, is shorthand for the
  component of $H$ proportional to $\vol(e \wedge e \wedge e)$; more precisely:
  \begin{itemize}
    \item $f \defeq (\vol(e^{-1}), H)$, where $\vol(e^{-1})$ is the volume 3-vector
          of the superfield frame $e^{-1}$, and $(-,-)$ is the duality pairing
          between 3-vectors and 3-forms. Fixing $\{P_a\}$ an oriented orthonormal
          basis of $V$, this reads $f = -(e_0 \wedge e_1 \wedge e_2, H)$, since we
          checked in Section \ref{sec:interlude} that
          $\vol(e^{-1}) = -e_0 \wedge e_1 \wedge e_2$ in such a basis.
  \end{itemize}
\end{defn}

\noindent
We can regard the Lagrangian as a map of sheaves:
\begin{equation}
  \Lag \maps \Geo_{\R^{3|2}/\para} \to \Omega^{3}_{\R^{3|2}/\para} .
\end{equation}
Similarly, the geometric scalar is a map of sheaves:
\begin{equation}
  f \maps \Geo_{\R^{3|2}/\para} \to \Omega^0_{\R^{3|2}/\para}
\end{equation}

\subsubsection{\textbf{The rheonomic constraints}}
\label{sec:constraints}

As we mentioned earlier, the fields of geometric supergravity should be in one-to-one
correspondence with the component fields on spacetime. Indeed, we define the
`component fields' of 3d, $\cN = 1$ supergravity exactly as above, but on
$\R^{3}_{/\para}$. We distinguish these fields from their geometric counterparts by a
subscript 0.

\begin{defn}
  \label{def:comp}
The \define{geometric component fields of 3d, $\cN = 1$ supergravity} are
\begin{itemize}
  \item an $\so(2,1)$-valued 1-form
        $\comp{\omega} \in \Omega^{1}(\R^{3}_{/\para}, \so(2,1))$,
        the \define{spin connection};
  \item a vector-valued 1-form, $\comp{e} \in \Omega^{1}(\R^{3}_{/\para}, V)$,
        the \define{coframe};
        \item a spinor-valued 1-form $\comp{\psi} \in \Omega^{1}(\R^{3}_{/\para}, S)$,
        the \define{gravitino};
  \item a 2-form $\comp{B} \in \Omega^{2}(\R^{3}_{/\para})$, the \define{2-form
        gauge field};
\end{itemize}
In addition, we require that the coframe field $\comp{e}$ is nondegenerate in the
sense that it trivializes the tangent bundle of spacetime. In other words, we have an
isomorphism of $\Oh(\R^{3} \times \para)$-modules:
\begin{equation}
  \comp{e} \maps \T \R^{3}_{/\para} \isoto \Oh(\R^{3} \times \para) \otimes V .
\end{equation}
The geometric component fields of supergravity naturally form a sheaf, $\GeoComp_{\R^{3}/\para}$.
\end{defn}

\noindent
Although we would like geometric superfields to correspond in a one-to-one fashion
with the geometric component fields, they do not---the space of geometric superfields
is much larger. More precisely, thanks to the canonical even submanifold
$\iota_{\can} \maps \R^{3}_{/\para} \inclusion \R^{3|2}_{/\para}$, we have a map of
sheaves given by the pullback of differential forms,
$\iota^{*}_{\can} \maps \Geo_{\R^{3|2}/\para} \to \GeoComp_{\R^{3}/\para}$. This map
is far from being one-to-one, as the next example demonstrates.

\begin{example}
  Let us compare the superfield spin connection with the spin connection. More
  precisely, let us find the kernel of the map
  \begin{equation}
    \label{eq:wantkernel}
    \iota^*_\can \maps \Omega^1(\R^{3|2}_{/\para}, \so(2,1)) \to \Omega^1(\R^3_{/\para}, \so(2,1)) .
  \end{equation}
  To compute this, let us write $A = \Oh(\R^3 \times \para)$ for the superalgebra of
  smooth functions on spacetime, $\R^3_{/\para}$. Fixing coordinates
  $x,y,z|\eta,\theta$ of superspace $\R^{3|2}_{/\para}$, we have the superalgebra
  isomorphism:
  \begin{equation}
    \Oh(\R^{3|2} \times \para) \iso A[\eta, \theta] .
  \end{equation}
  Under this identification, the pullback
  $\iota^*_\can \maps \Oh(\R^{3|2} \times \para) \to \Oh(\R^3 \times \para)$ becomes
  the quotient map $A[\eta, \theta] \to A$ that mods out by the ideal
  $J \defeq (\eta, \theta)$. Moreover, as $A$-modules,
  \begin{eqnarray}
    \Omega^1(\R^{3|2}, \so(2,1)) & \iso & A[\eta, \theta] \langle dx,dy,dz | d\eta, d\theta \rangle \otimes \so(2,1) , \\
    \Omega^1(\R^{3}, \so(2,1)) & \iso & A \langle dx,dy,dz \rangle \otimes \so(2,1) .
  \end{eqnarray}
  The pullback $\iota^*_\can$ in Eq. \ref{eq:wantkernel} annihilates
  $\eta, \theta, d\eta$, and $d\theta$. Hence, we conclude its kernel is the
  $A$-submodule:
  \begin{equation}
    \ker(\iota^*_\can) \iso J\langle d\eta, d\theta \rangle \otimes \so(2,1) .
  \end{equation}
\end{example}

As this example shows, to have any hope at all of identifying the geometric fields
with the component fields, we must impose \emph{constraints}. Indeed, all
supersymmetric field theories defined in terms of superfields impose some system of
constraints in order to make the space of superfields smaller and match it to some
related space of component fields. In geometric supergravity, these constraints are
called the `rheonomic constraints'.

\begin{defn}
  \label{def:rheonomic}
  A geometric superfield $\varphi = (\omega, e, \psi, B)$ is \define{rheonomic} if it
  satisfies the following equations, the \define{rheonomic constraints}:
  \begin{eqnarray}
    d\Lag(\varphi) & = & 0 , \\
    T(\varphi) & = & 0 , \\
    H(\varphi) & = & f(\varphi) \vol(e \wedge e \wedge e) , \\ \label{eq:H}
    \rho(\varphi) & = & \rho_{ab} \, e^a \wedge e^b + \frac{3}{2} f(\varphi) \, \gamma(e) \wedge \psi \label{eq:rho} .
  \end{eqnarray}
  Here, $f(\varphi)$ is the geometric scalar. The rheonomic superfields naturally
  form a subsheaf, $\Rheo \subseteq \Geo_{\R^{3|2}/\para}$. We write sections of this
  subsheaf in boldface, $\rheo{\varphi} \in \Rheo(U)$.
\end{defn}

\noindent
These equations differ from their counterparts in Castellani \emph{et al.\
}\cite{CCG}, but they are equivalent thanks to the closure of the Lagrangian, which
those authors also impose. Indeed, requiring the Lagrangian to be closed is standard
in the geometric approach to any supersymmetric theory that includes auxiliary fields
\cite{CDF}. For us, the 2-form $B$ plays such a role.

The attentive reader may complain that the above constraint equations depend on the
constant $\alpha \in \R$ that appears in the definition of the Lagrangian. In fact,
the constraints imply $\alpha = 6$.

\begin{prop}
  Unless $\alpha = 6$, the rheonomic constraints have no solution.
\end{prop}

\begin{proof}
  The vanishing of the $e \wedge e \wedge e \wedge \psi$ components of
  $d\Lag(\varphi)$, together with Eq. \ref{eq:rho}, yield $\alpha = 6$.
\end{proof}

\noindent
Henceforth, of course, we will assume $\alpha = 6$. We next derive the standard form
of the rheonomic constraints using the Bianchi identities. Indeed, the equations in
the following proposition are Eqs. 3.11--3.17 from Castellani \emph{et al.\
}\cite{CCG}. We will need them later when we compare the geometric formulation of
supergravity to the superspace formulation.

\begin{prop}
  \label{prop:rheo}
  If $\varphi = (\omega, e, \psi, B)$ is a rheonomic superfield, then the following
  equations hold:
  \begin{eqnarray}
    T & = & 0, \label{eq:torsionless}\\
    H & = & f \vol(e \wedge e \wedge e), \label{eq:H_param}\\
    \rho & = & \rho_{ab}\,e^a \wedge e^b + \frac{3}{2} \gamma(e) \wedge \psi , \label{eq:rho_param}\\
    \curv & = & \curv_{ab} \, e^a \wedge e^b + \ip{\Theta_c , \psi} \, e^c + \frac{3}{2} f \, [\psi,\psi]^\star , \label{eq:Rab_param}\\
    df & = & (e_a f) \, e^a  + \ip{\psi, \Xi}. \label{eq:df_param}
  \end{eqnarray}
  Here, the coefficients $\Theta_c$ and $\Xi$ are $\so(2,1) \otimes S$- and
  $S$-valued 0-forms, respectively, and have the form:
  \begin{equation}
    \label{eq:theta_Xi_param}
    \ip{\Theta^{ab}{}_{c}, Q_\alpha}
    = 2 \ip{\rho^{[a}{}_{c}, \gamma^{b]} Q_\alpha} - \ip{\rho^{ab}, \gamma_cQ_\alpha},
    \qquad
    \Xi = -\frac{1}{3!} \,\epsilon_{abc} \gamma^a\rho^{bc}.
  \end{equation}
  In the above expressions, indices are raised and lowered by the Minkowski metric,
  $g$.
\end{prop}

\noindent
As noted, the proof uses the Bianchi identities, which we recall now:
\begin{align*}
\text{Bianchi of the torsion:}\qquad
& d_\omega T \;-\; \curv \wedge e \;-\; [\psi, \rho] \;=\;0,
\\[2pt]
\text{Bianchi of the gravitino field strength:}\qquad
& d_\omega \rho \;-\; \curv \wedge \psi \;=\;0,
\\[2pt]
\text{Bianchi of the spin curvature:}\qquad
&d_\omega \curv \;=\; 0,
\\[2pt]
\text{Bianchi of the 3-form curvature:}\qquad
&dH \;+\; \frac{1}{2} g(d_\omega e, [\psi,\psi]) \; + \; g(e, [\psi, \rho]) \;=\; 0.
\end{align*}

\begin{proof}
  We need only prove the last two equations, since the first three are rheonomic
  constraints. Indeed, the Bianchi of the 3-form curvature $H$ fixes the fermionic
  component of $df$ (\emph{i.e.}, $\Xi$ in $df= (e_af) e^a + \ip{\psi, \Xi}$), while
  the Bianchi of the gravitino field strength $\rho$, together with the $3d$ Clifford
  identities, fixes the mixed components of $\curv$, \emph{i.e.}, $\Theta_c$ in
  $\curv = \curv_{ab} \, e^a \wedge e^b + \ip{\Theta_c , \psi} \, e^c + \frac{3}{2} f \, [\psi,\psi]^\star$.
\end{proof}

\noindent
We have now seen a number of the standard ingredients of rheonomy from the physics
literature: the closure of the Lagrangian, constraint equations on the curvature
tensors, and the use of Bianchi identities. In the physics literature, one often
speaks of ``solving the Bianchi identities'', as if they were not identities. This is
because, in the physics context, they play the role of a consistency condition, as we
now remark upon.

\begin{remark}[On the role of Bianchi identities in the rheonomic
  approach]\label{rem:bianchi_role}
  Strictly speaking, if one starts from genuine geometric data (connections and
  potentials) and \emph{defines} the corresponding curvatures, then the Bianchi
  identities are automatic and do not need to be imposed as additional assumptions.

  In the rheonomic literature, however, one often proceeds in the opposite direction:
  one \emph{postulates} constraints on the curvature components in the basis
  $(e, \psi)$ (e.g., by declaring the mixed $e \wedge \psi$ and $\psi \wedge \psi$
  components to be covariant functions of a minimal set of ``inner'' fields) and only
  afterwards asks whether such a parametrization is actually realizable by underlying
  geometric superfields. In this viewpoint, the Bianchi identities play the role of
  \emph{integrability/consistency conditions} for the proposed constraints: they
  constrain and often uniquely determine the undetermined coefficients and mixed
  components (such as $\Theta$ and $\Xi$ in here), and ensure that the resulting
  expressions indeed come from bona fide curvatures.

  Accordingly, when we say that ``one has to solve the Bianchi identities'' in
  passing from a partial condition such as $dL=0$ to the full rheonomic constraints,
  what is meant is not that Bianchi identities are extra axioms, but rather that they
  are the necessary compatibility relations which \emph{complete} the reconstruction.
\end{remark}

\noindent
Finally, we have promised that the rheonomic constraint will give a correspondence
between geometric superfields and component fields. To accomplish this, however, we
need to impose a constraint on the component fields, coming from the torsion
$T = d_\omega e + \frac{1}{2} [\psi,\psi]$.

\begin{defn}
  The \define{supergravity torsion equation} is
  \begin{equation}
    d_{\comp{\omega}} \comp{e} + \frac{1}{2} [\comp{\psi}, \comp{\psi}] = 0
  \end{equation}
  for component fields $\comp{\omega}, \comp{e}, \comp{\psi}$. We write
  $\GeoComp_{\cl} \subseteq \GeoComp_{\R^3/\para}$ for the subsheaf of component
  fields satisfying the supergravity torsion equation, which we call the
  \define{classical subsheaf}.
\end{defn}

\noindent
The supergravity torsion equation is the equation of motion of the spin connection,
$\comp{\omega}$, in the component action we will introduce later. It says that spin
connection is not the Levi-Civita connection, but instead has torsion
$-\frac{1}{2} [\comp{\psi},\comp{\psi}]$ coming from the gravitino $\comp{\psi}$. By
standard techniques of differential geometry, we can then solve for the spin
connection in terms of the coframe and the gravitino, although we refrain from doing
so.

Moreover, if our component field is given by the pullback of a geometric superfield,
$\comp{\varphi} = \iota^*_\can \varphi$, then the supergravity torsion equation is
simply $i^*_\can T(\varphi) = 0$. Thus, the rheonomic subsheaf lands in the classical
subsheaf upon pullback, $\iota^*_\can(\Rheo) \subseteq \GeoComp_\cl$.

\begin{defn}
  We say that a subsheaf $\Rheo \subseteq \Geo_{\R^{3|2}/\para}$ defines a
  \define{rheonomic parametrization} if the restriction of the pullback to the
  canonical even submanifold
  \begin{equation}
    \iota^{*}_{\can} \maps \Rheo \to \GeoComp_{\cl} ,
  \end{equation}
  is an isomorphism of sheaves.
\end{defn}

\noindent
The above definition is simply the sheaf theoretic way of saying that, for any
classical component field $\comp{\varphi} \in \GeoComp_{\cl}(U)$, there is a unique
rheonomic superfield $\rheo{\varphi} \in \Rheo(U)$ such that
$\iota^*_\can \rheo{\varphi} = \comp{\varphi}$. The correspondence
$\comp{\varphi} \mapsto \rheo{\varphi}$ is what Csatellani--D'Auria--Fr\'e refer to
as the `rheonomic extension mapping'~\cite{CDF}.

\begin{prop}
  \label{prop:param}
  Let $\Rheo$ be the subsheaf of the geometric superfields $\Geo_{\R^{3|2}/\para}$
  defined by the rheonomic constraints, Def. \ref{def:rheonomic}. Then $\Rheo$
  defines a rheonomic parametrization. In other words, the restriction of the
  pullback
  \begin{equation}
    \iota^*_\can \maps \Rheo \to \GeoComp_\cl
  \end{equation}
  is an isomorphism of sheaves.
\end{prop}

\noindent
We will not use this result in what follows, so we only sketch how the proof should
work.

\begin{proof}[Sketch of proof.]
  We only argue that $\iota^*_\can$ is injective. Let
  $\rheo{\varphi} = (\rheo{\omega}, \rheo{e}, \rheo{\psi}, \rheo{B})$ be a rheonomic
  superfield, and $\comp{\varphi} = \iota^*_\can \rheo{\varphi}$ the corresponding
  component field. We want to show that $\comp{\varphi}$ determines $\rheo{\varphi}$.
  We do so by deriving a system of first-order PDE satisfied by $\rheo{\varphi}$,
  derived from the rheonomic constraints. We then note that $\comp{\varphi}$ provides
  the initial data to solve this system.

  Before writing our system of PDE, note that the torsion constraint $T = 0$ implies
  we can solve for $\rheo{\omega}$ in terms of $\rheo{e}$ and $\rheo{\psi}$, so we
  need only consider the triple $(\rheo{e}, \rheo{\psi}, \rheo{B})$.

  To obtain our PDE, fix a basis $\{P_a | Q_\alpha\}$ for the supertranslation
  algebra. Let $\rheo{e}_a$ and $\rheo{\psi}_\alpha$ denote the vector fields dual to
  the 1-forms $\rheo{e}^a$ and $\rheo{\psi}^\alpha$, respectively. We can use the
  constraint equations to obtain expressions for the covariant derivative of
  $\rheo{\varphi}$ along $\rheo{\psi}_\alpha$, as follows:
  \begin{eqnarray}
    i_{\rheo{\psi}_\alpha} d_{\rheo{\omega}} \rheo{e} & = &  \frac{1}{2} [\rheo{\psi}, Q_\alpha], \\
    i_{\rheo{\psi}_\alpha} d_{\rheo{\omega}} \rheo{\psi} & = & -\frac{3}{2} f \gamma(\rheo{e}) Q_\alpha, \\
    i_{\rheo{\psi}_\alpha} d \rheo{B} & = & -\frac{1}{2} [e, [\psi, Q_\alpha]] .
  \end{eqnarray}
  This system is consistent thanks to the torsion constraint, which relates the
  commutator $[\rheo{\psi}_\alpha, \rheo{\psi}_\beta]$ to the vector fields
  $\rheo{e}_a$, along with the Bianchi identities.

  By nondegeneracy of the supervielbein, the vector fields $\rheo{\psi}_\alpha$ span
  the odd part of the tangent space, transverse to the canonical even submanifold
  $\iota_\can \maps \R^3_{/\para} \inclusion \R^{3|2}_{/\para}$. By inspection, the
  above equations determine the derivatives of $\comp{\varphi}$ normal to the
  canonical even submanifold, and hence allow us to solve for $\rheo{\varphi}$ in
  terms of $\comp{\varphi}$.
\end{proof}

\noindent
Note that Prop. \ref{prop:param} implies that there is a commutative triangle of
sheaves:
\begin{equation}
  \begin{tikzcd}
    \Geo_{\R^{3|2}/\para} \ar[dr, "\iota^*_\can \quad" below] & \Rheo \ar[l, hook'] \ar[d, "\iso"] \\
    & \GeoComp_\cl \\
  \end{tikzcd}
\end{equation}
This is Fig. \ref{fig:shape} in the context of geometric supergravity.

\subsubsection{\textbf{Supersymmetry transformations}}

The supersymmetry transformations in geometric supergravity are intimately related to
the rheonomic constraints. To get a taste of this, let
$\varepsilon = \varepsilon^\alpha \rheo{E}_\alpha$ be a vector field along the odd
directions. For $\rheo{\varphi}$ a rheonomic superfield, and
$\comp{\varphi} = \iota^*_\can \rheo{\varphi}$ the corresponding component field, we
define a \define{supersymmetry transformation} along $\varepsilon$ to be:
\begin{equation}
  \delta_\varepsilon \comp{\varphi} \defeq \iota^*_\can( \Lie^\omega_\varepsilon \rheo{\varphi} ) ,
\end{equation}
where
$\Lie^\omega_\varepsilon \defeq d_\omega i_\varepsilon + i_\varepsilon d_\omega$ is
the \define{covariant Lie derivative}.

Let us see how this works for the coframe $\comp{e}$ and the gravitino $\comp{\psi}$.
For the coframe, we have
\begin{equation}
  \Lie^\omega_\varepsilon \rheo{e} = d_{\rheo{\omega}} i_\varepsilon \rheo{e} + i_\varepsilon d_{\rheo{\omega}} \rheo{e}.
\end{equation}
The first term vanishes because $i_\varepsilon \rheo{e} = 0$. The second term
becomes, via the torsion constraint,
$-\frac{1}{2} i_\varepsilon [\rheo{\psi}, \rheo{\psi}]$. Since
$i_\varepsilon \rheo{\psi} = \varepsilon^\alpha Q_\alpha$, we conclude
$\Lie_\varepsilon \rheo{e} = -[\varepsilon^\alpha Q_\alpha, \rheo{\psi}]$. Or,
pulling back to the canonical even submanifold:
\begin{equation}
  \delta_\varepsilon \comp{e} = -[\varepsilon^\alpha Q_\alpha, \comp{\psi}] .
\end{equation}
For the gravitino, we immediately compute
$\Lie^\omega_\varepsilon \rheo{\psi} = d_\omega i_\varepsilon \psi + i_\varepsilon d_\omega \psi = d_\omega( \varepsilon^\alpha Q_\alpha ) + i_\varepsilon \rho$.
From the rheonomic constraint on $\rho$, we read off
$i_\varepsilon \rho = \frac{3}{2} f \gamma(e) \varepsilon^\alpha Q_\alpha$. Hence,
the transformation law becomes:
\begin{equation}
  \delta_\varepsilon \comp{\psi} = d_{\comp{\omega}} ( \varepsilon^\alpha Q_\alpha ) + \frac{3}{2} \comp{f} \gamma(\comp{e}) \varepsilon^\alpha Q_\alpha.
\end{equation}

\subsection{Picture changing operators}
\label{sec:pco3d}

In the last few sections, we have introduced almost all of 3d, $\cN = 1$ geometric
supergravity, with one major omission: we have not given an action for the theory. In
order to do so, one final ingredient is needed: a `picture changing operator', or
PCO, which enables us to convert our Lagrangian from a differential 3-form to an
integral 3-form, and thereby obtain an integral over superspace. This integral will
be the action.

By giving the action in this form, as advocated by Castellani, Catenacci, and Grassi
\cite{CCG}, we are able to prove precise comparison theorems among the geometric,
superspace, and component formulations of 3d, $\cN = 1$ supergravity. While this
basic technique is due to Castellani \emph{et al.}, and those authors give some of
the same results, we improve upon their work in at least two ways. First, we give a
precise definition of `picture changing operator', making this concept amenable to
mathematical analysis. Second, we prove precise relationships between the spaces of
fields of all three formulations, rather than just their actions.

Let us briefly recap 3d, $\cN = 1$ geometric supergravity as we know it so far. The
fields are the geometric superfields: the superfield spin connection $\omega$, the
superfield coframe $e$, the superfield gravitino $\psi$, and the 2-form gauge
superfield $B$, all described in Def. \ref{def:fields}. The fields form a sheaf
$\Geo_{\R^{3|2}/\para}$. From these fields, we derive the curvatures in Def.
\ref{def:curvatures}, of which we recall the spin curvature,
$\curv = d \omega + \frac{1}{2} [\omega, \omega]$, and the 3-form curvature,
$H = dB - g(e, [\psi, \psi])$. In terms of the fields and their curvatures, we
construct the Lagrangian in Def. \ref{def:Lagrangian}:
\begin{equation}
  \Lag(\omega, e, \psi, B) \defeq \vol(\curv^{\sharp} \wedge e) + \langle \psi, d_{\omega} \psi \rangle + a (fH - \frac{1}{2} f^{2} \vol (e \wedge e \wedge e)) .
\end{equation}
Finally, we defined the rheonomic constraints in Def. \ref{def:rheonomic}. For us,
this includes the condition that the Lagrangian is closed:
\begin{equation}
  d\Lag(\varphi) = 0 ,
\end{equation}
where for brevity we have written $\varphi$ for the tuple $(\omega,e,\psi,B)$. We
write $\Rheo \subseteq \Geo_{\R^{3|2}/\para}$ for the subsheaf of rheonomic fields,
and distinguish local sections of this subsheaf by writing them in boldface,
$\rheo{\varphi} \in \Rheo_{\Lag}(U)$.

We now need to give the action. To obtain an integrand on superspace from the
differential 3-form $\Lag$, we pick one more piece of data, a `picture changing
operator' or PCO, in order to convert our differential form into an integral form.
The definition of PCO makes use of the canonical integral form
$Y_{\iota_\can} \in \Ber^{0}(\R^{3|2})$, defined by Poincar\'e duality with the
canonical even submanifold
$\iota_{\can} \maps \R^{3}_{/\para} \inclusion \R^{3|2}_{/\para}$.

\begin{defn}
  \label{def:pco}
  A \define{picture changing operator} or \define{PCO} is a pair $(\Y, \alpha)$ of
  maps of sheaves,
  \begin{equation}
    \Y \maps \Rheo \to \Ber^{0}_{\R^{3|2}/\para} , \quad \alpha \maps \Rheo \to \Ber^{-1}_{\R^{3|2}/\para},
  \end{equation}
  such that $\Y$ is cohomologous to the canonical integral form $Y_{\iota_{\can}}$
  via $\alpha$:
  \begin{equation}
    \Y = Y_{\iota_{\can}} + \delta \alpha .
  \end{equation}
  Here, the latter equation means that for any local section
  $\rheo{\varphi} \in \Rheo(U)$, we have
  \begin{equation}
    \Y(\rheo{\varphi}) = Y_{\iota_{\can}} + \delta \alpha(\rheo{\varphi}) .
  \end{equation}
\end{defn}

\noindent
We will generally suppress $\alpha$ and speak of the PCO $\Y$. In order to give the
action of geometric supergravity, we need to fix a PCO, but we will see momentarily
that all choices are equivalent.

\begin{defn} The \define{action of 3d, $\cN = 1$ geometric supergravity} is the integral\footnote{We ignore issues of convergence.}
  \begin{equation}
    \label{eq:action}
    S_{\geom}(\rheo{\varphi}) \defeq \int_{\R^{3|2}/\para} \Y(\rheo{\varphi}) \cdot \Lag(\rheo{\varphi}) ,
  \end{equation}
  where $\Lag$ is the Lagrangian, $\rheo{\varphi}$ is an rheonomic superfield, and
  $\Y$ is a choice of picture changing operator.
\end{defn}

\noindent
Thanks to the closure of the Lagrangian on the rheonomic superfields, all choices
of PCO give the same action.

\begin{prop}
  The action
  $S_{\geom}(\rheo{\varphi}) = \int_{\R^{3|2}/\para} \Y(\rheo{\varphi}) \cdot \Lag(\rheo{\varphi})$
  is independent of the choice of picture changing operator $\Y$.
\end{prop}
\begin{proof}
  By definition, if $\Y$ is a PCO, any other PCO has the form
  $\Y' = \Y + \delta \beta$, where
  $\beta \maps \Rheo \to \Ber^{-1}_{\R^{3|2}/\para}$ is a map of
  sheaves. Observe that
  \begin{equation}
    \int_{\R^{3|2}/\para} \delta \beta(\rheo{\varphi}) \cdot \Lag(\rheo{\varphi}) = -\int_{\R^{3|2}/\para} \beta(\rheo{\varphi}) \cdot d\Lag(\rheo{\varphi}) = 0
  \end{equation}
  since $\Lag(\rheo{\varphi})$ is closed on rheonomic superfields. Thus we
  immediately conclude that
  \begin{equation}
    \int_{\R^{3|2}/\para} \Y(\rheo{\varphi}) \cdot \Lag(\rheo{\varphi}) = \int_{\R^{3|2}/\para} \Y'(\rheo{\varphi}) \cdot \Lag(\rheo{\varphi}) ,
  \end{equation}
  so $S_{\geom}(\rheo{\varphi})$ is independent of the choice of PCO.
\end{proof}

\noindent
The above fact may seem somewhat formal, even trivial, but it is quite useful:
following the central idea of Castellani \emph{et al.} \cite{CCG}, we use it to show
the geometric action is equal to both the component action and the superspace action,
simply by making different choices of picture changing operator.

First, we need to define the action we wish to compare to the geometric action.

\begin{defn}
  The \define{component Lagrangian of 3d, $\cN = 1$ supergravity} is the following
  3-form on $\R^3_{/\para}$:
  \begin{equation}
    \Lag_{\compo}(\comp{\omega}, \comp{e}, \comp{\psi}, \comp{B}) \defeq \vol(\comp{\curv}^{\sharp} \wedge \comp{e}) + \langle \comp{\psi}, d_{\omega} \comp{\psi} \rangle + 6 (\comp{f}\comp{H} - \frac{1}{2} \comp{f}^{2} \vol (\comp{e} \wedge \comp{e} \wedge \comp{e})) .
  \end{equation}
  where $(\comp{\omega}, \comp{e}, \comp{\psi}, \comp{B})$ are the component
  fields, as defined in Def. \ref{def:comp}, and
  \begin{itemize}
    \item $\comp{\curv} = d \comp{\omega} + \frac{1}{2} [\comp{\omega}, \comp{\omega}]$
          is the curvature of the spin connection $\comp{\omega}$;
    \item $\comp{H} = d\comp{B} - \frac{1}{2} g(\comp{e}, [\comp{\psi}, \comp{\psi}])$
          is the 3-form curvature of the gauge 2-form $\comp{B}$;
    \item $\comp{f}$ is the unique $0$-form such that
          $\comp{H} = \comp{f} \vol(\comp{e} \wedge \comp{e} \wedge \comp{e})$.
  \end{itemize}
  The \define{component action of 3d, $\cN = 1$ supergravity} is the integral
  \begin{equation}
    S_{\compo}(\comp{\varphi}) \defeq \int_{\R^{3}/\para} \Lag_{\compo}(\comp{\varphi}) .
  \end{equation}
\end{defn}

As is typical in the literature on geometric supergravity, the component Lagrangian
is chosen to match the full Lagrangian. That is, by construction, the pullback of
$\Lag$ to the canonical even submanifold is the component Lagrangian:
\begin{equation}
  \iota^{*}_{\can} \Lag(\varphi) = \Lag_{\compo}(\comp{\varphi}) ,
\end{equation}
where we take the component field to be the pullback,
$\comp{\varphi} = \iota^{*}_{\can} \varphi$.

\begin{thm}
  \label{thm:geomiscomp}
  Let $\rheo{\varphi}$ be an rheonomic superfield, and let
  $\comp{\varphi} = \iota^{*}_{\can} \rheo{\varphi}$ be the component field obtained
  by pullback. Then the component action of $\comp{\varphi}$ equals the geometric
  action of $\rheo{\varphi}$:
  \begin{equation}
    S_{\compo}(\comp{\varphi}) = S_{\geom}(\rheo{\varphi}) .
  \end{equation}
\end{thm}

\noindent
This equivalence with the component action comes from the most obvious choice of PCO
we can make: the \define{canonical PCO} $\Y_{\can}$, also called the `spacetime PCO'
by some authors. This is the constant PCO with no dependence on the fields,
$\Y_{\can}(\varphi) \defeq Y_{\iota_\can}$. With this choice of PCO, the following calculation
serves as the proof of the theorem:
\begin{eqnarray}
  S_{\geom}(\rheo{\varphi}) & = & \int_{\R^{3|2}/\para} \Y_{\can} \cdot \Lag(\rheo{\varphi}) \\
                            & = & \int_{\R^{3}/\para} \iota^{*}_{\can} \Lag(\rheo{\varphi}) \\
                            & = & \int_{\R^{3}/\para} \Lag_{\compo}(\comp{\varphi}) \\
                            & = & S_{\compo}(\comp{\varphi}) .
\end{eqnarray}
In going from the first line to the second, we used Theorem \ref{thm:dual}. Apart
from a little extra mathematical care, this is precisely the argument in Castellani
\emph{et al.\ }\cite{CCG}.

Next, we use the same idea to prove the geometric action equals the superspace
action. The latter is a beautiful action for 3d, $\cN = 1$ supergravity given in the
superspace formalism \cite{RRvN}. In contrast to the canonical PCO, the picture
changing operator for the superspace formalism depends nontrivially on the fields.

\subsection{3d supergravity -- the superspace formulation}

In the superspace formalism, the fields are the same as in the geometric formalism,
except the 2-form gauge superfield is omitted.

\begin{defn}
  \label{def:susyfields}
  We define the \define{superspace superfields of 3d, $\cN = 1$ supergravity} to be
  the following:
  \begin{itemize}
    \item an $\so(2,1)$-valued 1-form
          $\omega \in \Omega^{1}(\R^{3|2}_{/\para}, \so(2,1))$, the
          \define{superfield spin connection};
    \item a vector-valued 1-form $e \in \Omega^{1}(\R^{3|2}_{/\para}, V)$,
          the \define{superfield coframe};
    \item a spinor-valued 1-form $\psi \in \Omega^{1}(\R^{3|2}_{/\para}, S)$,
          the \define{superfield gravitino}.
  \end{itemize}
  In addition, we require the $\t$-valued 1-form given by the sum
  $E = e + \psi$ to be nondegenerate in the sense that
  $E$ trivializes the tangent bundle. That is, we have an isomorphism of
  $\Oh(\R^{3|2} \times \para)$-modules:
  \begin{equation}
    E \maps \T \R^{3|2}_{/\para} \isoto \Oh(\R^{3|2} \times \para) \otimes \t,
  \end{equation}
  where $\T \R^{3|2}_{/\para}$ denotes the module of vertical vector fields on our
  family. As we already noted earlier, the $\t$-valued 1-form
  $E \in \Omega^{1}(\R^{3|2}_{/\para}, \t)$ is called the \define{supervielbein}.
\end{defn}

The superspace superfields naturally form a sheaf $\Sup_{\R^{3|2}/\para}$; there is a
map of sheaves $\tau$ that simply forgets $B$:
\begin{equation}
  \renewcommand{\arraycolsep}{1pt}
  \begin{array}{cccl}
    \tau \maps & \Geo_{\R^{3|2}/\para} & \to & \Sup_{\R^{3|2}/\para} \\
     & (\omega, e, \psi, B) & \mapsto & (\omega, e, \psi) ,
  \end{array}
\end{equation}
where $(\omega, e, \psi, B)$ denotes a local section of $\Geo_{\R^{3|2}/\para}$, and
similarly for $(\omega, e, \psi)$ in $\Sup_{\R^{3|2}/\para}$. We call $\tau$ the
\define{truncation map}.

It may seem quite surprising that we can discard the field $B$ and still obtain an
equivalent theory, or at least a morally equivalent theory. The secret to doing so
lies in the constraints of the superspace formalism, which reduce the superspace
superfields to a triple consisting of a metric, a gravitino, and a scalar field, all
defined on spacetime rather than superspace. This being done, we can identify the
scalar with the 3-form $\comp{H}$, since in three-dimensions scalars and 3-forms are
dual. In this way, we recover $B$ in the guise of its curvature $H$, or at least the
part of its curvature on spacetime, $\comp{H}$.

To define the constraints, we recall that the curvature $\curv$, the torsion $T$ and the gravitino field
strength $\rho$ are defined to be (Def. \ref{def:curvatures}):
\begin{equation}
  \curv = d \omega + \frac{1}{2} [\omega, \omega], \quad T = d_\omega e + \frac{1}{2} [\psi,\psi], \quad \rho = d_\omega \psi .
\end{equation}
It is easiest to write the constraints in terms of components, so let $\{P_a\}$ be an
oriented orthonormal basis for $V$, $\{Q_\alpha\}$ a basis for $S$, and write
$\{u_A\}$ for the basis of $\t$ obtained as the union of $\{P_a\}$ and
$\{Q_\alpha\}$. The supervielbein then decomposes as $E = E^A u_A$, where $\{E_A\}$
is a basis for 1-forms. Hence, $\{E^A \wedge E^B\}$ forms a basis for 2-forms, and we
can expand the 2-forms $\curv$, $T$ and $\rho$ as follows:
\begin{equation}
 \curv = \frac{1}{2} E^A \wedge E^B \curv_{AB}, \quad T = \frac{1}{2} E^A \wedge E^B \,  T_{AB}, \quad \rho = \frac{1}{2} E^A \wedge E^B \, \rho_{AB} .
\end{equation}

\begin{defn}
  \label{def:susyconstraints}
  The \define{conventional constraints} for 3d, $\cN = 1$ supergravity are the
  following equations for the curvature, the torsion and gravitino field strength,
  for all vector indices $b \in \{0,1,2\}$ and spinor indices
  $\alpha, \beta \in \{+, -\}$:
  \begin{equation}
    \label{eq:conv}
    \curv_{\alpha \beta} = 0, \quad T_{\alpha \beta} = 0, \quad \rho_{\alpha \beta} = 0, \quad T_{\alpha b} = 0.
  \end{equation}

  Equivalently, we can write these constraints without indices, using the superfield
  frame $e^{-1}$ and inverse superfield gravitino $\psi^{-1}$:
   \begin{align}
     (\psi^{-1} \wedge \psi^{-1}, \curv) & =  0 ,  \label{eq:convcurv} \\
     (\psi^{-1} \wedge \psi^{-1}, T) & =  0 , \label{eq:convtors} \\
     (\psi^{-1} \wedge \psi^{-1}, \rho) & =  0 , \label{eq:convgrav} \\
     (e^{-1} \wedge \psi^{-1}, T) & =  0 . \label{eq:unconvtors}
    \end{align}
  As always, $(-,-)$ denotes the pairing between polyvector fields and differential
  forms.
\end{defn}

\noindent
We write $\Conv \subseteq \Sup_{\R^{3|2}/\para}$ for the subsheaf of superspace
superfields satisfying the conventional constraints. We write $\conv{\varphi}$ for a
local section of $\Conv$, and call $\conv{\varphi}$ a \define{conventional
  superfield}.

Some readers may have seen a version of one of the constraints that reads
$T_{\alpha \beta}^a = \gamma^a_{\alpha \beta}$, whereas we have set this tensor to
zero. This is consistent, however, as we have defined the torsion tensor $T$ to
include the term $\frac{1}{2}[\psi,\psi]$, and this term cancels the $\gamma$ matrix
when we pass to components.

We adapted the constraint equations \ref{eq:conv} from the invaluable lecture notes
of Ruiz Ruiz and van Nieuwenhuizen \cite{RRvN}. We note that those authors use the
term ``conventional constraints'' only for equations \ref{eq:convcurv},
\ref{eq:convtors}, \ref{eq:convgrav}, but we have adopted this terminology for all
constraints. For more on torsion constraints in supergravity, see the paper of Lott
\cite{Lott}. See Merkulov for a detailed mathematical treatment of the 3d case
\cite{Merkulov}.

A well-known consequence of the conventional constraints is that the curvature,
torsion and gravitino field strength can be expressed in terms of only two
superfields, the so-called `irreducible superfields'. These are:
\begin{itemize}
  \item a scalar superfield, $\scal \in \Omega^0(\R^{3|2}_{/\para})$;
  \item a tensor superfield,
        $\tens \in \Omega^0(\R^{3|2}_{/\para}, \bigvee^3 S)$.
\end{itemize}
The irreducible superfields are derived by combining the conventional constraints
with the Bianchi identity \cite{RRvN}. For our purposes here, we only need the
scalar.

\begin{defn}
  Fixing a conventional superfield $\conv{\varphi}$, the \define{superspace scalar}
  is the unique 0-form $\scal(\conv{\varphi}) \in \Omega^0(\R^{3|2}_{/\para})$ such
  that, for all vector indices $a,b,c \in \{0,1,2\}$ of an oriented, orthonormal
  basis $\{P_a\}$ of $V$, we have:
  \begin{equation}
    g_{ad} T^{d}_{bc} = \epsilon_{abc} \scal(\rheo{\varphi}) .
  \end{equation}
  where $g$ is the Minkowski metric and $\epsilon_{abc}$ is the Levi-Civita
  alternating symbol \cite{CCG}. More invariantly, this equation reads:
  \begin{equation}
    g\left(\conv{e}^{-1}, (\conv{e}^{-1} \wedge \conv{e}^{-1}, T)\right) = \vol(\conv{e}^{-1} \wedge \conv{e}^{-1} \wedge \conv{e}^{-1}) \scal(\conv{\varphi}) .
  \end{equation}
\end{defn}

We need only one more ingredient to define the Lagrangian in the superspace
formulation, which is a section of the Berezinian canonically associated to any
supervielbein. To define this in turn, we need the idea of a `good basis' of
the supertranslation algebra $\t$.

\begin{defn}
  A \define{good basis} $\{P_a | Q_\alpha\}$ of $\t$ consists of a basis
  $\{Q_\alpha\}_{\alpha \in \{+, -\}}$ of the spinors $S$ such that
  $\ip{Q_+, Q_-} = 1$, and an oriented, orthonormal basis $\{P_a\}_{a \in \{0,1,2\}}$
  of the vectors $V$ defined in terms of the brackets of $\{Q_\alpha\}$ as follows:
  \begin{eqnarray}
    P_0 & = & \frac{1}{2} \left([Q_+, Q_+] + [Q_-, Q_-] \right)\\
    P_1 & = & \frac{1}{2} \left([Q_+, Q_+] - [Q_-, Q_-] \right)\\
    P_2 & = & [Q_+, Q_-]
  \end{eqnarray}
\end{defn}

\begin{rem}
  The expressions for the basis $\{P_a\}$ above come from using the isomorphism
  $V \iso \bigvee^2 S$ to view elements of $V$ as $2 \times 2$ symmetric matrices,
  and then identifying the components of the matrix with the usual coordinates of
  spacetime as follows:
  \begin{equation}
    \begin{pmatrix} t + x & y \\ y & t - x \end{pmatrix} .
  \end{equation}
  In this way, the quadratic form $-\det$ on the space of $2 \times 2$ matrices gives
  the usual Minkowski metric.
\end{rem}

\begin{defn}
  Let $E = e + \psi$ be a supervielbein. The \define{superspace Berezinian}
  $\ber(E) \in \Ber^3(\R^{3|2}_{/\para})$ is the integral 3-form defined by using the
  supervielbein to give an ordered basis for the 1-forms
  $\Omega^1(\R^{3|2}_{/\para})$, and then passing to the induced element of the
  Berezinian module,
  $\Ber^3(\R^{3|2}_{/\para}) \defeq \Ber(\Omega^1(\R^{3|2}_{/\para}))$.

  Specifically, for a good basis $\{P_a | Q_\alpha\}$ of $\t$, we define:
  \begin{equation}
    \ber(E) \defeq [\psi_+ \psi_- \otimes e^0 e^1 e^2 ]
  \end{equation}
\end{defn}

\begin{prop}
  The superspace Berezinian $\ber(E)$ is independent of the choice of good basis for
  the supertranslation algebra $\t$. Since the Lorentz group $\Spin(2,1)$ acts
  transitively on good bases, this implies that $\ber(E)$ is Lorentz-invariant.
\end{prop}

\begin{proof}
  The Lorentz group $\Spin(2,1) \iso \SL(2, \R)$ indeed acts transitively on good
  bases, since a good basis $\{P_a | Q_\alpha\}$ is determined by the basis
  $\{Q_\alpha\}$ such that $\ip{Q_+, Q_-} = 1$, and $\SL(2, \R)$ acts transitively on
  this set. So, let $g \in \Spin(2,1)$, and note that
  \begin{equation}
    \ber(\rho_V(g) e | \rho_S(g) \psi) = \frac{\det(\rho_V(g))}{\det(\rho_S(g))} \ber(e | \psi) = \ber(e | \psi) ,
  \end{equation}
  where $\rho_V$ and $\rho_S$ are the representations of $\Spin(2,1)$ on $V$ and $S$
  respectively, both of which act with unit determinant,
  $\det(\rho_V(g)) = \det(\rho_S(g)) = 1$.
\end{proof}

\noindent
We are now ready for the action, which completes our construction of 3d, $\cN = 1$
supergravity in the superspace formalism.

\begin{defn}
  \label{def:susyaction}
  The \define{superspace Lagrangian of 3d, $\cN = 1$ supergravity} is the integral 3-form
  \begin{equation}
    \Lag_\susy(\conv{\varphi}) = \ber(\conv{\varphi}) \scal(\conv{\varphi}) ,
  \end{equation}
  where $\conv{\varphi} = (\conv{\omega},\conv{e},\conv{\psi})$ is a conventional
  superfield.

  The \define{superspace action of 3d, $\cN = 1$ supergravity} is the integral of the
  Lagrangian over superspace:
  \begin{equation}
    S_\susy(\conv{\varphi}) = \int_{\R^{3|2}_{/\para}} \Lag_\susy(\conv{\varphi}) .
  \end{equation}
\end{defn}

\subsubsection{\textbf{Reduction to component fields}}

Although we will not actually need the reduction to component fields for the
superspace formalism, we take a moment to sketch how this reduction should work.

In the superspace formalism, the relation between superfields and component fields is
less obvious than in the geometric formalism. The main complication is that a proper
mathematical treatment uses the language of stacks, rather than sheaves. This is
because the 3d supergravity has gauge symmetry. Of course, the reader should note,
geometric supergravity \emph{also} has gauge symmetry, and yet the language of stacks
was not needed when we discussed the reduction to component fields in the geometric
formalism. This suggests that the rheonomic constraints are much more `rigid', in
some suitable sense, than the conventional constraints.

Complications aside, we can recover the coframe and gravitino, along with an
auxiliary scalar field, from the superspace superfields.

\begin{defn}
  The \define{superspace component fields} of 3d, $\cN = 1$ supergravity are the
  following differential forms on spacetime, $\R^3_{/\para}$:
  \begin{itemize}
    \item a vector-valued 1-form, $\comp{e} \in \Omega^1(\R^3_{/\para}, V)$, the
          \define{coframe};
    \item a spinor-valued 1-form, $\comp{\psi} \in \Omega^1(\R^3_{/\para}, S)$, the
          \define{gravitino};
    \item a 0-form, $\comp{\scal} \in \Omega^0(\R^3_{/\para})$, the \define{auxiliary
          scalar}.
  \end{itemize}
  As always, we require that the coframe field $\comp{e}$ is nondegenerate in the
  sense that it trivializes the tangent bundle of spacetime. In other words, we have
  an isomorphism of $\Oh(\R^{3} \times \para)$-modules:
  \begin{equation}
    \comp{e} \maps \T \R^{3}_{/\para} \isoto \Oh(\R^{3} \times \para) \otimes V .
  \end{equation}
  The superspace component fields of supergravity naturally form a sheaf,
  $\SupComp$.
\end{defn}

Stacks enter when we try to construct a map $\kappa \maps \Sup \to \SupComp$. Here we
just give the idea. Perhaps surprisingly, $\kappa$ only depends on the inverse
superfield gravitino $\psi^{-1}$ \cite{ZP}. Fixing coordinates
$x^M = x^m | \theta^\mu$ and a basis $\{Q_\alpha\}$ of the spinors, we expand
$\psi^{-1}$ in components:
\begin{equation}
  \psi^{-1} = \epsilon^{\beta\alpha} Q_{\beta} \psi_{\alpha M} \pd{x^M} .
\end{equation}
We then expand each component $\psi_{\alpha M} \in \Oh(\R^{3|2} \times \para)$ in the
$\theta^\mu$. In this expansion, some components are extracted and used to construct
the coframe $\comp{e}$, gravitino $\comp{\psi}$, and the auxiliary scalar
$\comp{\scal}$. The other component fields in this expansion can be eliminated using
gauge transformations.

Stated more invariantly, we thus have a map of stacks:
\begin{equation}
  \kappa \maps [\Sup/\Gauge] \to [\SupComp/\GaugeComp] .
\end{equation}
Here, $\Gauge$ is the sheaf of `gauge transformations' of the superspace superfields,
$\Sup$, and $\GaugeComp$ is the sheaf of gauge transformations of the superspace
component fields, $\SupComp$, and the brackets tell us to `construct a stack' from
the data indicated.

Now, using the constraints together with the Bianchi identity, it is possible to
express $\conv{\omega}$ and $\conv{e}$ in terms of $\conv{\psi}$. This implies
$\kappa$ induces an equivalence of stacks between the conventional superfields and
the superspace component fields:
\begin{equation}
  \kappa \maps [\Conv/\Gauge] \isoto [\SupComp/\GaugeComp] .
\end{equation}
Of course, physicists know gauge theory, not the language of stacks, so they write
this differently. A mathematical treatment would be interesting to see.

Similar to the geometric formalism, the superspace superfields, conventional
superfields, and superspace component fields fit together into a commutative
triangle. Now, however, it is a triangle of stacks. This is the version of Fig.
\ref{fig:shape} for the superspace formalism.

\begin{claim}
  \label{claim:suptriangle}
  We have the following commutative triangle of stacks on $\R^3 \times |\para|$,
  where $\kappa$ is the map to component fields, and the vertical map is an
  equivalence of stacks:
  \begin{equation}
    \begin{tikzcd}
      \left[ \Sup/\Gauge \right] \ar[dr, "\kappa \quad" below] & \left[ \Conv/\Gauge \right] \ar[l, hook'] \ar[d, "\simeq"] \\
                                  & \left[ \SupComp/\GaugeComp \right] \\
    \end{tikzcd}
  \end{equation}
\end{claim}

\begin{proof}[Hints for a proof.]
  The analysis of the superspace formalism for 3d supergravity is known to parallel
  that of 3d super-Yang--Mills theory \cite{RRvN}. For super-Yang--Mills theory,
  Deligne and Freed prove that the stack of constrained super-Yang--Mills fields is
  equivalent to the stack of component fields \cite{DFSuper}. We expect their proof
  could be adapted to supergravity.
\end{proof}

\subsection{The supersymmetric PCO}
\label{sec:susypco}

We now compare the superspace formulation of supergravity with the geometric
formulation via a suitable choice of picture changing operator, much as we did for
the component formulation. This proved to be quite a bit more subtle than for the
component case. 

To begin, we introduce the 0-form part of the putative supersymmetric picture
changing operator:

\begin{defn} 
  The \define{supersymmetric picture changing operator} is the sheaf map
  \begin{equation} 
    \Y_\susy \maps \Rheo \to \Ber^0_{\R^{3|2}/\para},
  \end{equation}
  given by the following formula:
  \begin{equation} \label{susypco}
    \Y_{\susy}(\rheo{\varphi}) = \ber(\rheo{E}) \otimes g(\rheo{e}^{-1}, [\rheo{\psi}^{-1}, \rheo{\psi}^{-1}]) ,
  \end{equation}
  where $\rheo{\varphi} = (\rheo{\omega}, \rheo{e}, \rheo{\psi}, \rheo{B})$ is an
  rheonomic superfield, $\rheo{E} = \rheo{e} + \rheo{\psi}$ is the associated
  supervielbein, $\rheo{E}^{-1} = \rheo{e}^{-1} + \rheo{\psi}^{-1}$ its inverse,
  $\ber(\rheo{E})$ is the Berezinian determined by $\rheo{E}$, and
  $g(\rheo{e}^{-1}, [\rheo{\psi}^{-1}, \rheo{\psi}^{-1}])$ is a 3-vector field. Since
  $\Y_\susy (\rheo \varphi)$ only depends on the supervielbein $\rheo E$, we will
  often write $\Y_\susy (\rheo E)$.
\end{defn}


\noindent
In an appendix to their paper, Castellani, Catenacci, and Grassi \cite{CCG} attempt
to prove that $\Y_\susy$ truly defines a PCO (as in Def. \ref{def:pco}) along these lines, in two steps:
\begin{itemize}
  \item First they argue $\Y_\susy(\rheo{E}_\flat)$ is cohomologous to $Y_{\iota_\can}$; 
  \item They argue that $\Y_\susy(\rheo E)$ is cohomologous to
        $\Y_\susy(\rheo{E}_\flat)$, where $\rheo E$ is arbitrary, while
        $\rheo{E}_\flat$ is a specific supervielbein which is flat in the sense of
        Cartan geometry.
\end{itemize}
The first point is true, as we prove below. The flaw lies in the argument for the
second point, for which the authors posit the existence of a diffeomorphism $\Xi$,
isotopic to the identity, such that $\Xi^* \rheo{E} = \rheo{E}_\flat$. However, no
such diffeomorphism exists unless $\rheo{E}$ is itself flat. The Cartan curvature is
the obstruction. \\

In the following we will prove that $\Y_\susy$ does indeed define a PCO in the sense of Definition \ref{def:pco}.\\
First, we remark that $\Y_\susy(\rheo{E})$ cannot be too far from $Y_{\iota_\can}$. Indeed, it is
closed, as can be checked by a detailed calculation \cite{CCG}. Since
$H^0_\Sp(\R^{3|2}_{/\para})$ is one-dimensional, we can take $[Y_{\iota_\can}]$ as its
generator, and thus we must have
\begin{equation} \label{eqn:Ysusyeqn}
  \Y_\susy(\rheo{E}) = c(\rheo{E}) Y_{\iota_\can} + \delta \alpha_\susy(\rheo{E}) ,
\end{equation}
for some integral $-1$-form $\alpha_\susy(\rheo{E})$, and some constant
$c(\rheo{E}) \in \Oh(\para)$ that depends only on $\rheo{E}$.

\begin{defn}
  The \define{flat supervielbein} $\rheo{E}_\flat$ is the left-invariant coframe of
  super-Minkowski spacetime; in coordinates, it is given by
  \begin{equation}
    \rheo{E}^a_\flat = dx^a - \frac{1}{2} \theta^\alpha \gamma^a_{\alpha \beta} d \theta^\beta, \quad \rheo{E}^\alpha_\flat = d \theta^\alpha .
  \end{equation}
\end{defn}

\noindent
For $\rheo{E}_\flat$ we know $c(\rheo{E}_\flat) = 1$, as we shall prove now.

\begin{prop} \label{prop:flat}
  $\Y_\susy(\rheo{E}_\flat)$ is cohomologous to $Y_{\iota_\can}$.
\end{prop}

\begin{proof}
  Indeed, consider
  \begin{equation}\label{eq:Omega0}
    \alpha_0(\rheo{\varphi}_{\mathpzc{susy}}) \defeq
    \frac12\,x^a\,(\gamma_{ab})^{\alpha\beta}\,\epsilon^{bcd}\;
    \ber(\rheo{E}_{\mathpzc{flat}})\otimes\bigl((\rheo{e}^{-1}_{\mathpzc{flat}})_c\wedge (\rheo{e}^{-1}_{\mathpzc{flat}})_d\wedge (\rheo{\psi}^{-1}_{\mathpzc{flat}})_\alpha\wedge (\rheo{\psi}^{-1}_{\mathpzc{flat}})_\beta\bigr),
  \end{equation}
  \begin{equation}\label{eq:Omega1}
    \alpha_1 (\rheo{\varphi}_{\mathpzc{susy}})\defeq
    \frac{1}{4}\,(\theta^\gamma \gamma^a_{\gamma\delta}\theta^\delta)\,(\gamma_{ab})^{\alpha\beta}\,\epsilon^{bcd}\;
    \ber(\rheo{E}_{\mathpzc{flat}})\otimes\bigl((\rheo{e}^{-1}_{\mathpzc{flat}})_c\wedge (\rheo{e}^{-1}_{\mathpzc{flat}})_d\wedge (\rheo{\psi}^{-1}_{\mathpzc{flat}})_\alpha\wedge (\rheo{\psi}^{-1}_{\mathpzc{flat}})_\beta\bigr).
  \end{equation}
  Then a lengthy calculation shows that
  $\Y_\susy(\rheo{E}_\flat) = Y_{\iota_\can} + \delta( \alpha_0 + \alpha_1 )$.
\end{proof}

To settle the case of generic $\rheo{E}$, we first observe that Eq.\
\refeq{eqn:Ysusyeqn} implies that
\begin{equation}
  \label{cEinpractice}
  c(\rheo{E}) = \int_{\R^{3|2} / \para} \Y_\susy (\rheo E) \cdot p^* \varpi,
\end{equation}  
where $\varpi$ is a compactly supported 3-form which integrates to 1 on
$\mathbb{R}^3_{ / \para}$, and $p^*\varpi$ is its pullback to $\R^{3|2}_{/ \para}$
via the map $p \maps \R^{3|2}_{/\para} \to \R^{3}_{/\para}$ given by $p^* x^a = x^a$.
Since $p^*\varpi$ contains no $d\theta$-components, the contraction with $p^*\varpi$
kills every term of the 3-vector
$\Theta (\rheo E) \defeq g(\rheo{e}^{-1}, [\rheo{\psi}^{-1}, \rheo{\psi}^{-1}]) $ in
$\Y_{\susy} (\rheo E)$ containing a component along an odd vector. Thus
\eqref{cEinpractice} reads off precisely the coefficient of $\theta^1\theta^2$ in the
purely bosonic component of $\Theta(\rheo E)$.

\begin{prop} Let $\rheo{E}$ be a vielbein field satisfying the supergravity torsion constraints 
\begin{equation} \label{eqn:sugralocus}
[\rheo{E}_\alpha , \rheo{E}_\beta] = \kappa (\gamma^a)_{\alpha \beta} \rheo{E}_a.
\end{equation} 
Then $c(\rheo {E})$ is independent of $\rheo E$. In particular, $\Y_\susy(\rheo{E})$ is a PCO for any $\rheo{E}$ satisfying Eq.\ \eqref{eqn:sugralocus}.
\end{prop}

\begin{proof}
  Working in normal coordinates\footnote{Note that this is not a restriction on
    curvature.}, the bosonic component of $\rheo E_\alpha$ satisfies
  \begin{equation}
    (\rheo E_\alpha)_{0}=-\frac12\,\theta^\nu T_{\nu\alpha}{}^{a}\rheo E_a+\mathcal{O}(\theta^2).
  \end{equation}
  Therefore, the relevant component of $\Theta(\rheo E)$ for $c(\rheo E)$ begins at
  order $\theta^2$, and we obtain for the 3-vector
  \begin{align}
    \label{eq:Theta-30-explicit}
    \Theta(\rheo{E})^{\theta\theta}_{3,0} & =
                                            2(\gamma^{c})^{\alpha\beta}\,\rheo E_c\wedge (\rheo E_\alpha)_{0}\wedge (\rheo E_\beta)_{0}
                                            \nonumber \\
                                          &=  \frac12\,\theta^\nu\theta^\mu\,
                                            (\gamma^{c})^{\alpha\beta}\,T_{\nu\alpha}{}^{a}\,T_{\mu\beta}{}^{b}\;
                                            \rheo E_c\wedge \rheo E_a\wedge \rheo E_b.
      \end{align}
  where we have denoted $\Theta (\rheo E)^{\theta \theta}_{3,0}$ the part of the
  3-vector which has no fermionic components. Substituting equation
  \eqref{eqn:sugralocus} in the form of
  $T_{\alpha\beta}{}^{a}=\kappa(\gamma^{a})_{\alpha\beta}$ into
  \eqref{eq:Theta-30-explicit} gives
  \begin{equation}
    \Theta(\rheo E)^{\theta \theta}_{3,0}
    =
    \frac{\kappa^2}{2}\,\theta^\nu\theta^\mu\,
    (\gamma^{c})^{\alpha\beta}(\gamma^{a})_{\nu\alpha}(\gamma^{b})_{\mu\beta}\;
    \rheo E_c\wedge \rheo E_a\wedge \rheo E_b.
  \end{equation}
  The tensor
  $(\gamma^{c})^{\alpha\beta}(\gamma^{a})_{\nu\alpha}(\gamma^{b})_{\mu\beta}$ is
  Spin$(2,1)$-equivariant, and its totally antisymmetric part in $(c,a,b)$ is
  therefore proportional to $\varepsilon^{cab}\varepsilon_{\nu\mu}$. Consequently,
  there exists a constant $\lambda\in\R$ depending only on $\kappa$ (and on the fixed
  gamma-matrix conventions) such that
\begin{equation}
\label{eq:Theta-universal-form}
\Theta(\rheo E)^{\theta \theta}_{3,0}=\lambda\;\theta^1\theta^2\;\rheo E_1\wedge \rheo E_2\wedge \rheo E_3. 
\end{equation}
By Eq.\ \eqref{cEinpractice}, $c (\rheo E) = \lambda$. Finally, by Proposition \ref{prop:flat}, $c (\rheo E_\flat ) = 1$, 
thus $c (\rheo E) = \lambda = 1$, for any $\rheo E$ satisfying the supergravity constraints.
\end{proof}

The constancy of $c(\rheo E)$ proved above relies crucially on working on the
supergravity torsion constraint locus of Eq.\ \eqref{eqn:sugralocus} of the chosen
supergravity theory (and on nondegeneracy of the supervielbein): in fact the
statement $c(\rheo E)=1$ should be read as a property of the \emph{constraint locus
  of the chosen theory}, not as a formal identity for arbitrary supervielbeins.

If one leaves this setting, there are several (rather ``exotic'') mechanisms by which
the scalar $c(\rheo E)$ may fail to be constant, or even cease to be well-defined.
For example, if $T_{\alpha\beta}{}^{a}$ is allowed to vary (\emph{e.g.}\ super-Weyl
covariant formulations before gauge fixing, or models with compensators in which
$T_{\alpha\beta}{}^{a}=W\,(\gamma^{a})_{\alpha\beta}$ with a dynamical superfield
$W$), then the leading $\theta^2$ coefficient controlling $c(\rheo E)$ typically
acquires dependence on the additional fields, and $c(\rheo E)$ may vary accordingly.
Also, even with the standard constraints, the space of admissible backgrounds may
have multiple connected components, (\emph{e.g.}, due to orientation/spin-structure
choices, or other global data). Then $c(\rheo E)$ is constant on each component, but
a priori could take different constant values on different components (fixed by the
chosen normalization on each component).

In the present paper we stay within the minimal constraint locus of the chosen theory, where these pathologies
are excluded and the normalization fixed by Proposition~\ref{prop:flat} yields $c(\rheo E)=1$.

Now that we have a PCO, to make use of $\Y_\susy$, we need to know how the
rheonomic constraints in the geometric formulation relate to the conventional
constraints in the superspace formulation. We show that, up to a shift in the spin
connection, these constraints coincide. To write the shift, recall that the geometric
scalar $f$ is a component of the 3-form curvature $H$.

\begin{prop}
  \label{prop:shift}
  Let ${\varphi} = ({\omega}, {e}, {\psi}, {B})$ be a geometric superfield, with
  geometric scalar $f$. Define the $\so(2,1)$-valued 1-form
  \begin{equation}
    \shift \defeq -3f e^\star \in \Omega^1(\R^{3|2}_{/\para}, \so(2,1)) ,
  \end{equation}
  where $e^\star$ is the $\so(2,1)$-valued 1-form obtained from the vector-valued $e$
  using the Hodge star. If
  $(\rheo{\omega}, \rheo{e}, \rheo{\psi}, \rheo{B}) \in \Rheo(\R^{3|2}_{/\para})$ is
  rheonomic, then
  $(\rheo{\omega} + \shift, \rheo{e}, \rheo{\psi}) \in \Conv(\R^{3|2}_{/\para})$ is a
  conventional superfield.

  Sheafifying, this says the following square of sheaves on $\R^3 \times |\para|$
  commutes:
  \begin{equation}
    \begin{tikzcd}
      \Geo \ar[r, "\trun + \shift"] & \Sup \\
      \Rheo \ar[u, hook] \ar[r, "\trun + \shift"] & \Conv \ar[u, hook]
    \end{tikzcd}
  \end{equation}
  where we have written $\trun + \shift$ for the map
  $({\omega}, {e}, {\psi}, {B}) \mapsto ({\omega} + \shift, {e}, {\psi})$.
\end{prop}

\begin{proof}
  Let $\curv$ denote the curvature of $\omega$. Then
  \begin{equation}
    \curv' = \curv + d_\omega \shift + \frac{1}{2}[\shift, \shift]
  \end{equation}
  is the curvature of the shifted connection, $\omega + \shift$. We need to check
  $\curv'$ satisfies the conventional constraint $\curv'_{\alpha\beta} = 0$. Since
  the term $[\shift,\shift]$ has no $\psi \wedge \psi$ component, only
  $d_\omega \shift$ is relevant. Note that
  \begin{equation}
    d_\omega \shift = -3df \wedge e^\star - 3 f (d_\omega e)^\star .
  \end{equation}
  The first term is irrelevant, as it has no $\psi \wedge \psi$ component. The second
  term, on the other hand, is $-\frac{3}{2} f [\psi, \psi]^\star$, thanks to the
  torsion constraint. By Prop. \ref{prop:rheo}, the $\psi \wedge \psi$ component of
  $\curv$ is $\frac{3}{2} f [\psi, \psi]^\star$, which precisely cancels with the
  contribution from $d_\omega \shift$. Hence $\curv'_{\alpha\beta} = 0$, as desired.
\end{proof}

\noindent
For the next proposition, recall that the superspace scalar $\scal$ is a component of
the torsion $T = d_\omega e + \frac{1}{2} [\psi,\psi]$, where $(\omega, e, \psi)$
satisfy the conventional constraints.

\begin{prop}
  \label{prop:scalars}
  Let $\shift$ be the $\so(2,1)$-valued 1-form from Prop. \ref{prop:shift}. If
  $\rheo{\varphi}$ is a rheonomic superfield, then the geometric scalar of
  $\rheo{\varphi}$ and the superspace scalar of $\trun(\rheo{\varphi}) + \shift$
  agree, up to a constant multiple:
  \begin{equation}
    -3f(\rheo{\varphi}) = \scal(\trun(\rheo{\varphi}) + \shift)  .
  \end{equation}
  In other words, the following triangle of sheaves on $\R^{3} \times |\para|$
  commutes:
  \begin{equation}
    \begin{tikzcd}
      \Rheo \ar[dr, "-3f \quad \quad" below] \ar[r, "\trun + \shift"] & \Conv \ar[d, "\scal"] \\
      & \Omega^0_{\R^{3|2}/\para} .
    \end{tikzcd}
  \end{equation}
\end{prop}

\begin{proof}
  Let $(\rheo{\omega}, \rheo{e}, \rheo{\psi}, \rheo{B})$ be rheonomic. By Prop.
  \ref{prop:shift}, $(\rheo{\omega} + \shift, \rheo{e}, \rheo{\psi})$ is
  conventional. To get our hands on the superspace scalar of this conventional
  superfield, we need to compute the torsion of the shifted spin connection,
  $\rheo{\omega} + \shift$:
  \begin{eqnarray}
    T & = & d_{\rheo{\omega + \shift}} \rheo{e} + \frac{1}{2} [\rheo{\psi}, \rheo{\psi}] \\
      & = & d_{\rheo{\omega}} \rheo{e} + \shift \wedge \rheo{e} + \frac{1}{2} [\rheo{\psi}, \rheo{\psi}] \\
      & = & -3f \rheo{e}^* \wedge \rheo{e} ,
  \end{eqnarray}
  where, in the last line, we have used the torsion constraint and the definition of
  $\shift$. In a basis, we have thus shown:
  \begin{equation}
    g_{bd} T^d_{ab} = -3 \epsilon_{abc} f .
  \end{equation}
  This component of the torsion is the superspace scalar, implying the proposition.
\end{proof}

\noindent
We can now compute the pairing of $\Y_\susy$ with the Lagrangian.

\begin{prop}
  If $\rheo{\varphi} = (\rheo{\omega}, \rheo{e}, \rheo{\psi}, \rheo{B})$ is
  rheonomic, then the pairing of the geometric Lagrangian $\Lag(\rheo{\varphi})$ with
  the supersymmetric PCO $\Y_\susy(\rheo{\varphi})$ has the following form:
  \begin{equation}
    \Y_\susy(\rheo{\varphi}) \cdot \Lag(\rheo{\varphi}) = \ber(\rheo{E}) f(\rheo{\varphi}) ,
  \end{equation}
\end{prop}

\noindent
The proof is a calculation. The equations on the curvatures from Prop.
\ref{prop:rheo} rule out all terms in the Lagrangian that could pair nontrivially
with the 3-vector $g(\rheo{e}^{-1}, [\rheo{\psi}^{-1}, \rheo{\psi}^{-1}])$ in the
supersymmetric PCO, except for terms proportional to
$f g(\rheo{e}, [\rheo{\psi}, \rheo{\psi}])$. The result follows up to a constant of
proportionality, which we ignore.

\begin{thm}
 Let $\rheo{\varphi}$ be an rheonomic superfield, and
  $\trun(\rheo{\varphi}) + \shift$ the corresponding conventional superfield. Then
  the geometric action of $\rheo{\varphi}$ equals the superspace action of
  $\trun(\rheo{\varphi}) + \shift$:
  \begin{equation}
   S_\geom(\rheo{\varphi}) = S_\susy(\trun(\rheo{\varphi}) + \shift) .
  \end{equation}
\end{thm}

\begin{proof}
  As with the proof of Theorem \ref{thm:geomiscomp}, the proof is by choice of
  picture changing operator. Naturally, we use the supersymmetric PCO, $\Y_\susy$:
  \begin{eqnarray}
    S_\geom(\rheo{\varphi}) & = & \int_{\R^{3|2}/\para} \Y_\susy(\rheo{\varphi}) \cdot \Lag(\rheo{\varphi}) \\
    & = & \int_{\R^{3|2}/\para} \ber(\rheo{E}) f(\rheo{\varphi}) \\
    & = & \int_{\R^{3|2}/\para} \ber(\rheo{E}) \scal(\trun(\rheo{\varphi}) + \shift) \\
    & = & S_\susy(\trun(\rheo{\varphi}) + \shift) .
  \end{eqnarray}
  This completes the proof.
\end{proof}

\subsection{The framework of picture changing operators}
\label{sec:framework}

Everything we have done above generalizes immediately to classical field theories on
supermanifolds in any dimension. Let $\mani_{/\para} = \fib \times \para_{/\para}$ be
a trivial family of supermanifolds with oriented fiber of dimension $m|n$.

\begin{defn}
  A \define{geometric Lagrangian field theory} on $\mani_{/\para}$ consists of
  \begin{itemize}
    \item A sheaf $\Geo$ of superfields on $\mani_{/\para}$ consisting of
          differential forms twisted by vector bundles on $\mani_{/\para}$;
    \item A sheaf $\GeoComp$ of component fields on the even submanifold
          $\mani^\ev_{/\para}$ consisting of differential forms twisted by vector
          bundles on $\mani^\ev_{/\para}$, such that
          $\iota^*_\can (\Geo) = \GeoComp$;
    \item A rheonomic parametrization $\Rheo \subseteq \Geo$, in the sense that
          $\iota^*_\can \maps \Rheo \to \GeoComp$ is an isomorphism;
    \item A Lagrangian given as a differential $m$-form, $\Lag \maps \Geo \to \Omega^m_{\mani/\para}$.
  \end{itemize}
\end{defn}

\noindent
For any geometric Lagrangian field theory, we can define picture changing operators
exactly as we did for 3d, $\cN = 1$ supergravity. We can define the component
Lagrangian to be $\iota^*_\can \Lag$, and the component action as its integral.

\begin{thm}
  \label{thm:framework}
  If $\Lag$ is closed on $\Rheo$, then the geometric action
  \begin{equation}
    S(\rheo{\varphi}) = \int_{\mani/\para} \Y(\rheo{\varphi}) \cdot \Lag(\rheo{\varphi})
  \end{equation}
  is independent of the choice of picture changing operator $\Y$, and equivalent to
  the component action.
\end{thm}

\appendix

\section{Cartan Calculus on Integral Forms}
\label{app:Cartan}

For the sake of notation, we will only work over a single supermanifold $\mani$ of dimension $m|n$ -- the reader can easily generalize this to work over families. First, we recall the definition of differential on the complex of integral forms given in the main text. For $\ber \in \Ber_\mani$ and homogeneous vector fields $Y_1,\dots,Y_k$, set $\delta \maps \bersheaf^{m-k}_{\mani} \to \bersheaf^{m-k+1}_{\mani}$ with
\begin{align}
&\delta\bigl(\ber\otimes Y_1\wedge\cdots\wedge Y_k\bigr)
=
\sum_{i=1}^k (-1)^{i+1}\,
(-1)^{|Y_i|\,(|\ber|+\sum_{j<i}|Y_j|)}
\,(\ber\cdot Y_i)\otimes Y_1\wedge\cdots\widehat{Y_i}\cdots\wedge Y_k
\label{eq:delta} \nonumber\\
&\quad+\sum_{1\le i<j\le k} (-1)^{i+j}\,
(-1)^{|Y_i|\sum_{l<i}|Y_l| + |Y_j|\sum_{l<j,l\neq i}|Y_l|}
\,\ber\otimes [Y_i,Y_j]\wedge Y_1\wedge\cdots\widehat{Y}_i\cdots\widehat{Y}_j\cdots\wedge Y_k,
\end{align}
where the hatted fields are omitted.

In order to define a notion of Lie derivative, we first need the following.
\begin{definition}[Insertion $\iota_X$ on cochains]
Define
\[
\iota_X: \bersheaf^p_\mani \longrightarrow \bersheaf^{p-1}_\mani
\]
on homogeneous $\ber \otimes P \in \bersheaf_\mani^p$ by
\begin{equation}\label{eq:iota}
\iota_X(\ber \otimes P)\;\defeq\;(-1)^{|X||\ber|}\,\ber\otimes (X\wedge P),
\end{equation}
for $ P\in \PV^{m-p}_\mani$.
In particular, $\iota_X$ \emph{decreases} the cohomological degree by $1$.
\end{definition}

\noindent
Moreover, we have a Schouten derivation on polyfields, which satisfies the following
obvious lemma.

\begin{lemma}[Schouten derivation rule]\label{lem:sch-der}
Let $X \in \mathcal{PV}^{-1}_X$ be homogeneous and let $P=Y_1\wedge\cdots\wedge Y_k \in \mathcal{PV}^{-k}_{\mani}$ be decomposable with homogeneous $Y_i$.
Then
\begin{equation}\label{eq:sch-formula}
[X,P]_{\mathpzc{Sch}}
=
\sum_{i=1}^k (-1)^{|X|\sum_{j<i}|Y_j|}
\,Y_1\wedge\cdots\wedge [X,Y_i]\wedge\cdots\wedge Y_k.
\end{equation}
\end{lemma}

\begin{proof} Just iterate the graded Leibniz rule
$[X,Y\wedge Z]_{\mathpzc{Sch}}=[X,Y]_{\mathpzc{Sch}}\wedge Z+(-1)^{|X||Y|}Y\wedge[X,Z]_{\mathpzc{Sch}}$.
\end{proof}
 
\begin{definition}[Lie derivative $\Lie_X$ on $\bersheaf_{\mani}^{p}$]
Define $\Lie_X\maps \bersheaf^{p}_\mani \to \bersheaf^{p}_\mani $ by
\begin{equation}\label{eq:L-def}
\Lie_X(\ber \otimes P)\;\defeq\;(\ber\cdot X)\otimes P\;+\;(-1)^{|X||\ber|}\,\ber\otimes [X,P]_{\mathpzc{Sch}}.
\end{equation}
\end{definition}
As defined, the Lie derivative satisfies the following formula


\begin{theorem}[Cartan magic formula]\label{prop:cartan}
For every homogeneous vector field $X$ one has an identity of endomorphisms of $\bersheaf^\bullet_\mani$:
\begin{equation}\label{eq:cartan}
\Lie_X \;=\;[\delta,\iota_X]\;=\;\delta \circ \iota_X+(-1)^{|X|}\iota_X \circ \delta.
\end{equation}
\end{theorem}

\begin{proof}
It suffices to check \eqref{eq:cartan} on homogeneous decomposable elements
\[
c=\ber \otimes (Y_1\wedge\cdots\wedge Y_k)\in \bersheaf^p_\mani,
\qquad k=m-p.
\]
Using \eqref{eq:iota},
\[
\iota_X(c)=(-1)^{|X||\ber|}\,\ber \otimes (X\wedge Y_1\wedge\cdots\wedge Y_k)\in \bersheaf^{p-1}(M).
\]
Apply $\delta$ using \eqref{eq:delta}. The resulting terms split into three families.

\medskip\noindent
{(1) Coefficient term hitting $X$.}
In the first sum of \eqref{eq:delta}, the contribution obtained by removing $X$ yields
\[
(-1)^{|X||\ber|}\,(\ber\cdot X)\otimes (Y_1\wedge\cdots\wedge Y_k),
\]
which is exactly the first term of $\Lie_X(c)$ in \eqref{eq:L-def}.

\medskip\noindent
{(2) Coefficient terms hitting some $Y_i$.}
All other coefficient terms in $\delta(\iota_X c)$ have the form
\[
(\pm)\,(-1)^{|X||\ber|}\,(\ber\cdot Y_i)\otimes\bigl(X\wedge Y_1\wedge\cdots\widehat{Y_i}\cdots\wedge Y_k\bigr).
\]
On the other hand, applying first $\delta$ to $c$ and then $\iota_X$ produces the same family:
\[
\iota_X(\delta c)=
(\pm)\,(-1)^{|X||\ber|}\,(\ber \cdot Y_i)\otimes\bigl(X\wedge Y_1\wedge\cdots\widehat{Y_i}\cdots\wedge Y_k\bigr),
\]
with the same Koszul sign $(\pm)$ (the only extra sign is already accounted for by the factor $(-1)^{|X||\ber|}$ in \eqref{eq:iota}).
It follows that these terms cancel in $\delta\,\iota_X(c)-(-1)^{|X|}\iota_X\,\delta(c)$.

\medskip\noindent
{(3) Bracket terms.}
The bracket part of $\delta(\iota_X c)$ contains terms involving $[Y_i,Y_j]$, always preceded by a wedge with $X$, and terms involving $[X,Y_i]$.\\
The first subfamily matches exactly the bracket terms of $\iota_X(\delta c)$ and cancels in the commutator.
The second subfamily has no counterpart in $\iota_X(\delta c)$ and survives, giving
\[
(-1)^{|X||\ber|}\,\ber \otimes [X,\,Y_1\wedge\cdots\wedge Y_k]_{\mathpzc{Sch}}.
\]
By Lemma~\ref{lem:sch-der} this equals the second term of $\Lie_X(c)$ in \eqref{eq:L-def}.
Combining (1)--(3) proves \eqref{eq:cartan}.
\end{proof}
Finally, by a similar reasoning, it is not hard to see that one also have the remaining Cartan calculus identities, 
\[
[\Lie_X, \iota_Y]\; = \; \iota_{[X,Y]}, \qquad [\delta, \Lie_X] \;= \; 0,
\]
for any vector fields $X$ and $Y.$\\

Finally, we study deformations of integral forms along families of super diffeomorpshims. Namely, we consider the following setting: 
we let $\{\Phi_t\}_{t\in[0,1]}$ be a smooth one-parameter family of (super)diffeomorphisms of a supermanifold $\mani$, with $\Phi_0=\mathrm{id}$.
Let $\xi_t$ be the time-dependent vector field generating the isotopy, defined by
\begin{equation}\label{xit}
\xi_t \defeq \dot\Phi_t\circ \Phi_t^{-1},
\qquad\text{so that}\qquad
\frac{d}{dt}\,\Phi_t = \xi_t\circ \Phi_t.
\end{equation}
Then, in this setting, the following holds, which proves that a Spencer classes are stable under deformations along super diffeomorphisms.
\begin{lemma} \label{lemm:trangs} Let $\sigma$ be any integral form in $\Ber^\bullet (\mani) $ and let $\sigma_t$ be the $t$-dependent family defined by
\(
\sigma_t \defeq \Phi_t^{*}\sigma
\)
for $\{ \Phi_t \}_{t\in[0,1]}$ a smooth one-parameter family of (super)diffeomormorphism $\mani$, with $\Phi_0=\mathrm{id}$. 
If $\delta \sigma = 0,$ then there exists $\omega \in \Ber^{\bullet -1}(\mani)$ such that 
\[
\sigma_1 - \sigma_0 = \delta \omega.
\]
In particular, $[\sigma_1] = [\sigma_0] \in H^\bullet_{\mathpzc{Sp}} (\mani).$ 
\end{lemma}
\begin{proof} We find an explicit expression for $\omega$. First, we observe that, in the above setting, the derivative of a pullback yields
\begin{equation}\label{eq:pullback-derivative}
\frac{d}{dt}\,\sigma_t
=
\frac{d}{dt}\,\Phi_t^{*}\sigma
=
\Phi_t^{*}(\mathcal L_{\xi_t}\sigma),
\end{equation}
where $\mathcal L_{\xi_t}$ is the Lie derivative on integral forms. \\
Thanks to Cartan's magic formula, Theorem \ref{prop:cartan}, we have 
\begin{equation}\label{eq:cartan}
\mathcal L_{\xi_t}
=
\delta \,\iota_{\xi_t}\pm\iota_{\xi_t}\,\delta.
\end{equation}
Since $\delta \sigma = 0$, combining \eqref{eq:pullback-derivative} and \eqref{eq:cartan} we obtain
\begin{equation}\label{eq:exact-variation}
\frac{d}{dt}\,\sigma_t
=
\Phi_t^{*}\big(\delta(\iota_{\xi_t}\sigma)\big)
=
\delta\big(\Phi_t^{*}\iota_{\xi_t}\sigma\big).
\end{equation}
Integrating \eqref{eq:exact-variation} from $t=0$ to $t=1$ gives the explicit \emph{homotopy/transgression formula}
\begin{equation}\label{eq:transgression}
\sigma_1-\sigma_0
=
\delta\Omega,
\qquad
\Omega
\defeq
\int_0^1 \Phi_t^{*}\!\big(\iota_{\xi_t}\sigma\big)\,dt,
\end{equation}
which concludes the verification.
\end{proof}

\section{Compactly Supported Poincar\'e Lemma}
\label{app:csintegral}

In this appendix, we will construct a homotopy to concretely compute the the compactly supported Poincar\'e lemma for integral forms. 

We will work with the trivial fiber bundle
$\pi \maps \mani \times \R^{0|1} \to \mani$, which has purely odd fibers. We start noticing that the
topology of the total space is such that
$ |\mani|$, since $\R^{0|1}$ has the
topology of a point. 
and that the Berezinian sheaf of $\mani \times \R^{0|1} $ reads
$\bersheaf_{\mani \times \R^{0|1}} = \bersheaf_\mani \boxtimes \bersheaf_{\R^{0|1}}$\footnote{This
  witnesses the product structure on the Koszul (co)homology defining the Berezinians
  of $\mani$ and $\R^{0|1}$. Concretely, one has
  $[dx_1 \ldots dx_p \otimes \partial_{\theta_1} \ldots \partial_{\theta_q}] \otimes [\partial_{\psi}] \stackrel{\sim}{\longmapsto} [dx_1 \ldots dx_p \otimes \partial_{\theta_1} \ldots \partial_{\theta_q} \otimes \partial_{\psi}]$,
  where
  $[dx_1 \ldots dx_p \otimes \partial_{\theta_1} \ldots \partial_{\theta_q}] = \ber_{\mani}$
  and $[\partial_{\psi}] = \ber_{\R^{0|1}}$ are generators for the
  $\mathcal{E}xt$-sheaves defining the Berezinians.}.
Furthermore, in the above setting, every integral form is locally finite. More precisely, it is a locally finite sum
\begin{eqnarray}
    \sigma = \sum_{k} \sigma^{(k)} ,
  \end{eqnarray}
  where for every nonnegative integer $k$, $\sigma^{(k)}$ has the form
  \begin{equation} \sigma^{(k)} = (\pi^* \sigma \otimes \ber_{\R^{0|1}}) \otimes f(x,\psi) \partial^k_{\psi} ,
  \end{equation}
  where $\sigma$ is some integral form on $\mani$, $\ber_{\R^{0|1}}$ is a constant Berezinian on $\R^{0|1}$, $f$ is a regular functions on $\mani \times \R^{0|1}$, and finally
  $\partial_{\psi}^k = \partial_{\psi} \wedge \ldots \wedge \partial_\psi$ is the $k$-th wedge power. We call $\sigma^{(k)}$ an integral form of \define{type $k$}. Given these preliminaries, we now focus on compactly supported
integral forms on the product supermanifold $\mani \times \R^{0|1}.$

\begin{definition}[The maps $\pi_*$ and $e_*$]
  Let $\mani$ be a supermanifold and let $\bersheaf^\bullet_{\mathpzc{c}} (\mani
  \times \R^{0|1})$ and $\Omega^\bullet_{\mathpzc{c}} (\mani)$ the $
  \R$-modules of compactly supported integral forms on $\mani \times
  \R$ and $\mani$ respectively. We let the push-forward
  \begin{eqnarray}
    \xymatrix@R=1.5pt{
    \pi_* \maps \bersheaf^\bullet_{\mathpzc{c}} (\mani \times \R^{0|1}) \ar[r] & \bersheaf^\bullet_{\mathpzc{c}} (\mani)\\
    \sigma \ar@{|->}[r] & \pi_* \sigma
                          }
  \end{eqnarray}
  be defined on type $k$ integral forms as follows:
  \begin{eqnarray}
    \left \{
    \begin{array}{lr}
      \pi_{*} (\sigma^{(k)}) = \sigma ( \frac{\partial}{\partial \psi} f(x, \psi) )\vert_{\psi = 0}, & \mbox{ for } k = 0 , \\
      \pi_{*} (\sigma^{(k)}) = 0, & \mbox{ for } k \geq 1 ,
    \end{array}
    \right.
  \end{eqnarray}
  for $\sigma^{(k)}$ as above and $\sigma \in
  \bersheaf^\bullet_{\mathpzc{c}} (\mani).$ Also, we define the map
  \begin{eqnarray}
    \xymatrix@R=1.5pt{
    e_* \maps \bersheaf^\bullet_{\mathpzc{c}} (\mani) \ar[r] & \bersheaf^\bullet_{\mathpzc{c}} (\mani \times \R^{0|1})\\
    \sigma \ar@{|->}[r] & \pi^* \sigma \otimes \ber_{\R^{0|1}}  \psi.}
  \end{eqnarray}
\end{definition}

\noindent
The following hold.
\begin{lemma} Let $\mani$ be a supermanifold. Then the map $\pi_* \maps \emph{\bersheaf}^\bullet_\mathpzc{c} (\mani \times \R) \to \emph{\bersheaf}^\bullet_\mathpzc{c} (\mani)$ and the map $e_* \maps \emph{\bersheaf}^\bullet_{\mathpzc{c}} (\mani) \to \emph{\bersheaf}^{\bullet }_{\mathpzc{c}} (\mani \times \R^{0|1}) $ are cochain maps, \emph{i.e.}\ they both commute with the differential $\delta$.
In particular, they induce maps in cohomology as follows
\begin{align}
& [\pi_*] \maps H^\bullet_{\mathpzc{Sp}, \mathpzc{c}} (\mani \times \R^{0|1}) \longrightarrow H^{\bullet}_{\mathpzc{Sp}, \mathpzc{c}} (\mani), \\
& [e_*] \maps H^{\bullet}_{\mathpzc{Sp}, \mathpzc{c}} (\mani) \longrightarrow H^{\bullet}_{\mathpzc{Sp}, \mathpzc{c}} (\mani \times \R^{0|1}).
\end{align}
\end{lemma}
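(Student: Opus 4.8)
The plan is to exploit the product structure of $\mani \times \mathbb{R}^{0|1}$: I would split the Spencer differential into a piece seeing only the $\mani$-directions and a piece seeing only the odd fibre coordinate $\psi$, and then check that $\pi_*$ and $e_*$ interact with the two pieces in an essentially trivial way. The cochain-map property, and hence the induced maps on cohomology, then follow with very little work; compactness of support is preserved automatically, since $|\mani \times \mathbb{R}^{0|1}| = |\mani|$ and both maps are algebraic along the purely odd fibre.

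First I would record the decomposition of the differential. Using the Künneth-type isomorphism $\bersheaf_{\mani \times \mathbb{R}^{0|1}} \cong \bersheaf_\mani \boxtimes \bersheaf_{\mathbb{R}^{0|1}}$ noted above, together with the analogous splitting of the sheaves of polyfields, the local Berezinian-derivation formula for $\delta$ on $\mani \times \mathbb{R}^{0|1}$ reads $\delta = \delta_\mani + \delta_\psi$, where $\delta_\mani = \sum_a \iota_{dx^a}\,\partial_{x^a} + \sum_\alpha \iota_{d\theta^\alpha}\,\partial_{\theta^\alpha}$ differentiates and contracts only in the $\mani$-directions, and $\delta_\psi = \iota_{d\psi}\,\partial_{\psi}$ touches only the fibre; both summands are parity-even, square to zero and anticommute, so this is nothing but the total differential of a tensor product of complexes. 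The two facts I need are: (i) $\delta_\mani$ preserves the type $k$ of an integral form (the wedge power of $\partial_\psi$) and commutes with multiplication by $\psi$ on the nose -- the would-be Koszul signs from the half $\sum_\alpha \iota_{d\theta^\alpha}\partial_{\theta^\alpha}$ built out of the odd directions cancel, as that half is itself parity-even; and (ii) $\delta_\psi$ lowers the type by one and always outputs a $\psi$-free integral form, differentiating once in $\psi$ exhausting the at-most-linear $\psi$-dependence.

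Next I would make the bookkeeping explicit. Writing $f(x,\psi) = f_0(x) + f_1(x)\psi$ in the coefficient of a type $k$ form, every compactly supported integral form on $\mani \times \mathbb{R}^{0|1}$ is uniquely a locally finite sum of terms $(\pi^*\alpha_k)\otimes \ber_{\mathbb{R}^{0|1}}\otimes \partial_\psi^k$ and $(\pi^*\beta_k)\,\psi\otimes \ber_{\mathbb{R}^{0|1}}\otimes \partial_\psi^k$ with $\alpha_k,\beta_k \in \bersheaf^\bullet_\mathpzc{c}(\mani)$; in this notation $\pi_*\sigma = \beta_0$ (up to the conventional sign, this being exactly the fibrewise Berezin integral) and $e_*\sigma_\mani$ is the form with $\beta_0 = \sigma_\mani$ and all other coefficients zero. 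Since $\delta_\psi$ always outputs $\psi$-free forms while $\pi_*$ reads off the coefficient of $\psi$ in the type $0$ part, $\pi_*\circ\delta_\psi = 0$; hence $\pi_*\delta = \pi_*\delta_\mani$, and, as $\delta_\mani$ commutes with $\psi\cdot$ and preserves type, this equals $\delta_\mani\beta_0 = \delta_\mani(\pi_*\sigma)$. Dually, $e_*\sigma_\mani$ is a type $0$ form carrying a factor $\psi$ but no $\partial_\psi$, so $\delta_\psi(e_*\sigma_\mani) = 0$ and hence $\delta(e_*\sigma_\mani) = \delta_\mani(e_*\sigma_\mani) = e_*(\delta_\mani\sigma_\mani)$. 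This shows $\pi_*$ and $e_*$ are cochain maps, and their classes $[\pi_*]$, $[e_*]$ in Spencer cohomology are the asserted ones.

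The routine but genuinely delicate part -- the one I would treat most carefully -- is the sign bookkeeping in Deligne's $\mathbb{Z}\times\mathbb{Z}_2$ convention: the relative sign of the two summands of $\delta$, the signs incurred when commuting $\psi$ past $\ber_{\mathbb{R}^{0|1}}$ and past wedge powers of $\partial_\psi$, and the identity $\iota_{d\psi}(\partial_\psi^k) = k\,\partial_\psi^{k-1}$, which uses that $\partial_\psi$ is a \emph{commuting} polyfield. Once one checks that $\delta_\mani$ is parity-even, so that its two halves contribute cancelling signs when moved past the odd coordinate $\psi$, the displayed identities hold with no residual sign and the cochain-map property follows. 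Conceptually the statement is forced: $\pi_*$ extracts exactly the part of a compactly supported integral form -- the $\psi$-coefficient in polyfield degree $0$ -- that $\delta_\psi$ neither creates nor destroys and on which $\delta$ restricts to $\delta_\mani$, while $e_*$ lands in precisely that subspace.
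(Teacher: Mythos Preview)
Your argument is correct and amounts to the ``direct computation'' the paper invokes. The paper offers no details beyond that phrase; the commented-out version in the source proceeds instead by case analysis on the type of the input form (type $0$ versus type $k\geq 1$), computing $\pi_*\delta$ and $\delta\pi_*$ separately on each. Your organization via the splitting $\delta=\delta_\mani+\delta_\psi$ is the dual bookkeeping---partitioning the operator rather than the operand---and is arguably cleaner: it makes transparent \emph{why} the identities hold (namely, $\pi_*$ and $e_*$ live entirely in the kernel/cokernel of $\delta_\psi$ and are $\delta_\mani$-equivariant), whereas the type-by-type check verifies them without isolating the mechanism. Both routes reduce to the same handful of sign verifications you flag at the end; your remark that $\delta_\mani$ is parity-even, so commutes with $\psi$ on the nose, is exactly the point that makes the computation painless.
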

\begin{proof} This is a direct computation.
\end{proof}

\begin{lemma} \label{homotopySP} Let $\mani$ be a supermanifold and let $\pi_*$ and $e_*$ be defined as above. The composition map $e_* \circ \pi_* $ is homotopic to $1$ on $\Ber^\bullet_\mathpzc{c} (\mani \times \R)$. More precisely, there exists a chain homotopy
  \begin{eqnarray}
    K \maps \emph{\bersheaf}^\bullet_{\mathpzc{c}} (\mani\times \R^{0|1}) \to \emph{\bersheaf}^{\bullet }_{\mathpzc{c}} (\mani \times \R^{0|1})
  \end{eqnarray}
  such that
$(1 - e_* \pi_*) (\sigma) = (\delta K + K \delta) (\sigma)$,
for any $\sigma \in \emph{\bersheaf}^{\bullet}_{\mathpzc{c}} (\mani \times \R^{0|1}).$ In particular, the map  $[\pi_*] \maps H^\bullet_{\mathpzc{Sp, c}} (\mani \times \R^{0|1}) \to H_{\mathpzc{Sp, c}}^{\bullet} (\mani)$ defines an isomorphism in cohomology, with inverse map $[e_*] \maps H^\bullet_{\mathpzc{Sp, c}} (\mani) \to H^{\bullet}_{\mathpzc{Sp, c}} (\mani \times \R^{0|1}),$ \emph{i.e.}
\begin{eqnarray}
\xymatrix{
H^\bullet_{\mathpzc{Sp,c}} (\mani \times \R^{0|1})  \ar@/^1pc/[rr]|{\pi_*}
 &&  \ar@/^1pc/[ll]|{e_*}  H^{\bullet }_{\mathpzc{Sp,c}} (\mani).
}
\end{eqnarray}
\end{lemma}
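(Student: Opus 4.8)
The plan is to exhibit $K$ explicitly as an ``integration along the odd fibre'' and to reduce the homotopy identity to a short calculation in the single odd variable $\psi$. First I would use the product structure: since $\pi\maps\mani\times\mathbb{R}^{0|1}\to\mani$ is trivial, $\mathcal{O}_{\mani\times\mathbb{R}^{0|1}}=\mathcal{O}_\mani[\psi]$, and every sheaf entering the Spencer complex splits as an exterior tensor product (as already noted for $\bersheaf_{\mani\times\mathbb{R}^{0|1}}=\bersheaf_\mani\boxtimes\bersheaf_{\mathbb{R}^{0|1}}$), the Spencer complex of the total space is the tensor product of complexes $\bersheaf^\bullet_{\mani\times\mathbb{R}^{0|1}}\cong\bersheaf^\bullet_\mani\otimes_{\mathbb{R}}\bersheaf^\bullet_{\mathbb{R}^{0|1}}$; correspondingly the differential decomposes as $\delta=\delta_\mani+\delta_\psi$, with $\delta_\mani$ acting only on the $\mani$-tensor factor and $\delta_\psi$ only on the $\mathbb{R}^{0|1}$-factor, and both are odd operators of $\mathbb{Z}\times\mathbb{Z}_2$-degree $(1,0)$. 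From the local expression for the $\psi$-term one reads off $\delta_\psi\sigma=\ber^c_{\mathbb{R}^{0|1}}\otimes\frac{\partial^2 V}{\partial(\partial_\psi)\,\partial\psi}$, so that a type-$k$ form $(\pi^*\sigma\otimes\ber_{\mathbb{R}^{0|1}})\otimes(f_0(x)+f_1(x)\psi)\,\partial_\psi^k$ is sent by $\delta_\psi$ to $(\pi^*\sigma\otimes\ber_{\mathbb{R}^{0|1}})\otimes k\,f_1(x)\,\partial_\psi^{k-1}$; in particular $\delta_\psi$ annihilates type-$0$ forms.

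With this at hand I would define $K$ type by type by
\[
K\big((\pi^*\tau\otimes\ber_{\mathbb{R}^{0|1}})\otimes\partial_\psi^k\big)=\tfrac{1}{k+1}\,(\pi^*\tau\otimes\ber_{\mathbb{R}^{0|1}})\otimes\psi\,\partial_\psi^{k+1},\qquad K\big((\pi^*\rho\otimes\ber_{\mathbb{R}^{0|1}})\otimes\psi\,\partial_\psi^k\big)=0,
\]
extended $\mathbb{R}$-linearly over locally finite sums (this is well defined precisely because it is defined and evaluated type by type). Here $\tau,\rho$ denote integral forms on $\mani$; note $K$ is $\mathcal{O}_\mani$-linear in the base direction, hence preserves compact supports since $|\mani\times\mathbb{R}^{0|1}|=|\mani|$, it has degree $(-1,0)$, and it acts only on the $\mathbb{R}^{0|1}$-factor. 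The identity $(1-e_*\pi_*)(\sigma)=(\delta K+K\delta)(\sigma)$ then splits into two pieces. In the odd direction, a short case check --- on $\partial_\psi^k$ for $k\ge0$, on $\psi$, and on $\psi\,\partial_\psi^k$ for $k\ge1$ --- using $\delta_\psi(a\,\psi\,\partial_\psi^{k+1})=(k+1)\,a\,\partial_\psi^k$ together with the definitions of $\pi_*$ and $e_*$, gives $\delta_\psi K+K\delta_\psi=\mathrm{id}-e_*\pi_*$. In the base direction, since $K$ and $\delta_\mani$ are odd operators acting on complementary tensor factors they anticommute, so $\delta_\mani K+K\delta_\mani=0$. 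Summing, $\delta K+K\delta=(\delta_\mani+\delta_\psi)K+K(\delta_\mani+\delta_\psi)=\mathrm{id}-e_*\pi_*$, which is the claim.

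The isomorphism statement then follows formally: $[\pi_*][e_*]=\mathrm{id}$ because already $\pi_*e_*=\mathrm{id}$ on $\bersheaf^\bullet_{\mathpzc{c}}(\mani)$ --- indeed $e_*(\tau)$ is a type-$0$ form whose $\psi$-coefficient is $\psi$, and $\pi_*$ reads off that coefficient after applying $\partial_\psi$ and setting $\psi=0$ --- while $[e_*][\pi_*]=\mathrm{id}$ by the homotopy just constructed. The point requiring the most care is sign bookkeeping in Deligne's $\mathbb{Z}\times\mathbb{Z}_2$ convention: one must confirm that $\partial_\psi$ behaves as an \emph{even} symbol (so that the powers $\partial_\psi^k$ and the scalars $\tfrac{1}{k+1}$ are legitimate and the fibre Spencer complex is unbounded below), that the local expression for $\delta_\psi$ really produces the factor $k$ above, and that $\delta_\mani$ and $K$ genuinely \emph{anti}commute rather than commute; once the parities are pinned down, each of these is routine, and what remains is the elementary computation in $\psi$ indicated above.
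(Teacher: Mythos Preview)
Your proposal is correct and constructs the same homotopy operator as the paper: in the paper's notation, $K(\sigma)=(-1)^{|f|+|F|}\tfrac{1}{k+1}\theta(k+1/2)\,\ber\,\psi f\otimes(\partial_\psi)^{k+1}F$, which unwinds to exactly your formula once one recognises the sign $(-1)^{|f|+|F|}$ as the Koszul sign incurred when an operator of bidegree $(-1,0)$ acting on the $\mathbb{R}^{0|1}$-factor passes the $\mani$-factor, and the Heaviside factor $\theta(k+1/2)$ simply encodes that $K$ is defined for all $k\ge 0$.

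The organisational difference is worth noting. The paper verifies the homotopy identity by a single direct computation on a generic local form $\ber\,f\otimes(\partial_\psi)^k F$, expanding $\delta K(\sigma)$ and $K\delta(\sigma)$ in full and tracking the $\mani$- and $\psi$-contributions simultaneously; the explicit signs $(-1)^{\epsilon_\sigma}$ are determined a posteriori so that the cross-terms cancel. Your route instead factorises the Spencer complex as $\bersheaf^\bullet_\mani\otimes\bersheaf^\bullet_{\mathbb{R}^{0|1}}$ and the differential as $\delta_\mani+\delta_\psi$, then checks the identity purely on the rank-two fibre complex and invokes graded anticommutation of $\delta_\mani$ with $K$ for the rest. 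This buys a shorter and more conceptual argument (the entire content becomes a three-case check in $\psi$), at the cost of having to justify up front that the Koszul rule in Deligne's $\mathbb{Z}\times\mathbb{Z}_2$ convention really gives $\delta_\mani K+K\delta_\mani=0$---the point you correctly flag as the one requiring care. The paper's brute-force route avoids this abstraction but pays for it in longer formulae.
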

\begin{proof}
  We work locally on $\mani \times \R^{0|1}$ and we let $\sigma $ be a generic integral form of type $k$, which we write in the following form:
  \begin{align}
  \sigma = \ber f \otimes (\partial_\psi)^k F.
  \end{align} 
  Here $\ber = \ber (x, \psi)$ is a generating section of the Berezinian, $f = f (x, \psi) $ is a (compactly supported) section of the structure sheaf, $F =  \partial^I$ is a polyvector field for some multi-index $I$ in the local coordinates over $\mani$, and we have isolated the polyvector field in the coordinate $\psi$ on $\R^{0|1}$ for convenience.

 One computes:
\begin{align}
\delta (\sigma) = &- \sum_a (-1)^{|x_a||f| + |F|} \ber (\partial_a f) \otimes (\partial_{\partial_a} F) ( \partial_\psi)^k + \nonumber \\
& \qquad \qquad +  (-1)^{|f| + |F|} \ber (\partial_\psi f) \otimes F k ( \partial_\psi)^{k-1} \Theta(k - 1/2).
\end{align}
Notice that the last term does not appear when the integral form is of type 0: to account for this, we have multiplied it by the
Heaviside step function $\Theta(k-1/2)$, instead of listing the two cases.

We claim that
\begin{eqnarray}
K (\sigma) = (-1)^{\epsilon_\sigma} \ber \psi f \otimes \frac{1}{k+1} ( \partial_\psi)^{k+1} F,
\end{eqnarray}
for $\epsilon_{\sigma}$ a choice of sign depending on $\sigma$ to be fixed later. Notice that the map $K$ affects only the part of $\sigma$ which is related to $\R^{0|1}$. \\
On the one hand, one computes
\begin{align}
K \delta ( \sigma ) & = - \sum_{a} (-1)^{\epsilon_{\delta \sigma}} (-1)^{|x_a||f| + |F|} \ber \psi (\partial_a f) \otimes (\partial_{ \partial_a} F ) \otimes \frac{1}{k+1} ( \partial_\psi)^{k+1} \nonumber \\
& \quad - (-1)^{|f| + |F|} (-1)^{\epsilon_{\delta \sigma}} \ber \psi (\partial_\psi f ) \otimes F \otimes (\partial_\psi  )^k \Theta(k - 1/2).
\end{align}
On the other hand, one has
\begin{align}
\delta K (\sigma) & = - \sum_a (-1)^{\epsilon_\sigma} (-1)^{|x_a| (|f| + 1) + |F|} \ber \partial_a (\psi f ) \otimes \frac{1}{k+1}( \partial_\psi)^{k+1} (\partial_{ \partial_a } F)  \nonumber \\
& \quad - (-1)^{\epsilon_\sigma} (-1)^{|f| + |F| + 1} \ber \partial_\psi (\psi f) \otimes F ( \partial_\psi)^{k} \Theta(k - 1/2).
\end{align}
Setting $\epsilon_{\delta \sigma} = - \epsilon_{\sigma}$, one has
\begin{align}
(K \delta + \delta K)(\sigma) 
& = (-1)^{\epsilon_\sigma + |f| + |F|} \ber (f- \delta_{k, 0}\psi \partial_\psi f) \otimes F (\partial_\psi)^k.
\end{align}
It follows that, posing $\epsilon_{\sigma} = |f| + |F|$, $K$ defines a homotopy whereas $k \neq 0$. In particular, one has $1 - e_* \pi_* = K \delta + \delta K.$ \\
For completeness, notice that, having fixed the sign, the action of the homotopy $K$ on a generic integral form $\sigma = \ber f \otimes ( \partial_\psi)^k F$ as above reads
\begin{eqnarray}
K (\sigma) = (-1)^{|f| + |F|} \frac{1}{k+1}  \ber \psi f \otimes (\partial_\psi)^{k+1} F.
\end{eqnarray}
This concludes the proof.
\end{proof}

\noindent
It is worth emphasizing that the homotopy established in Lemma \ref{homotopySP} is ``algebraic" in the sense that it does not involve integration. This reflects the fact that Berezin integration in the odd directions is purely algebraic and fundamentally different from Lebesgue integration -- unsurprisingly so, given that odd directions are nilpotent.

The above Lemmas allow us to prove our main result, which mirrors part of Theorem \ref{thm:compsupp} in the main text of this paper. 

\begin{theorem}[Compactly Supported Poincar\'e Lemma---Integral Forms]
  \label{thm:CSPL}
  One has
  \begin{eqnarray}
    H^k_{\mathpzc{Sp, c}} (\R^{m|n}) \iso \left \{ \begin{array}{lll}
                                                              \R & &  k = m\\
                                                              0 & & k \neq m.
                                                            \end{array}
                                                                    \right.
  \end{eqnarray}
  In particular, $H^m_{\mathpzc{Sp,c}} (\R^{m|n})$ is generated by
  \begin{equation}
    [dx_1 \ldots dx_m \otimes \partial_{\theta_1} \ldots \partial_{\theta_n}]
    \theta_1 \ldots \theta_n \mathpzc{B}_m,
  \end{equation}
  where $\mathpzc{B}_m$ is a bump function which integrates to one on $\R^m$
  and $[dx_1 \ldots dx_m \otimes \partial_{\theta_1} \ldots \partial_{\theta_n}]$ is
  a generating section of the Berezinian $\emph{\bersheaf}^m (\R^{m|n})$.
\end{theorem}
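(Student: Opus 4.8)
The plan is to peel off the odd directions one at a time by means of the two preceding lemmas, reduce to the classical compactly supported Poincar\'e lemma for the body $\mathbb{R}^{m|0}=\mathbb{R}^m$, and then transport the generator back up explicitly.

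First I would write $\mathbb{R}^{m|n}\cong\mathbb{R}^{m|n-1}\times\mathbb{R}^{0|1}$ and apply Lemma~\ref{homotopySP} with $\mani=\mathbb{R}^{m|n-1}$: it supplies the mutually inverse isomorphisms $[\pi_*]$ and $[e_*]$ between $H^\bullet_{\mathpzc{Sp},\mathpzc{c}}(\mathbb{R}^{m|n})$ and $H^\bullet_{\mathpzc{Sp},\mathpzc{c}}(\mathbb{R}^{m|n-1})$, in every cohomological degree separately (this is already part of the statement of that lemma, and is consistent with $\pi_*$ and $e_*$ being degree-preserving cochain maps, since $\mathbb{R}^{0|1}$ has even dimension $0$ so $\ber_{\mathbb{R}^{0|1}}$ sits in degree $0$ and $\psi$ is even). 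Iterating $n$ times gives $H^\bullet_{\mathpzc{Sp},\mathpzc{c}}(\mathbb{R}^{m|n})\cong H^\bullet_{\mathpzc{Sp},\mathpzc{c}}(\mathbb{R}^{m|0})$ as graded vector spaces, the isomorphism being $[\pi_*]^{\circ n}$ with inverse $[e_*]^{\circ n}$.

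Next I would handle the base case $\mathbb{R}^{m|0}=\mathbb{R}^m$. For an ordinary manifold the Berezinian sheaf is the determinant line bundle $\Omega^m_{\mathbb{R}^m}$, and contraction $\bersheaf\otimes\bigwedge^p\mathcal{T}\isoto\Omega^{m-p}$ identifies the Spencer complex $(\bersheaf^\bullet_{\mathbb{R}^m},\delta)$ with the de Rham complex $(\Omega^\bullet_{\mathbb{R}^m},d)$ degree by degree (both putting the top object in degree $m$), compatibly with compact supports. Hence $H^k_{\mathpzc{Sp},\mathpzc{c}}(\mathbb{R}^{m|0})\cong H^k_{\mathpzc{dR},\mathpzc{c}}(\mathbb{R}^m)$, which by the classical compactly supported Poincar\'e lemma -- equivalently the $n=0$ case of the compactly supported Poincar\'e lemma for differential forms above -- equals $\mathbb{R}$ for $k=m$ and $0$ otherwise, with the degree-$m$ class represented by $\ber^c\,\mathpzc{B}_m$, where $\ber^c=[dx_1\ldots dx_m]$ generates $\bersheaf^m(\mathbb{R}^{m|0})$ and $\mathpzc{B}_m$ integrates to $1$ on $\mathbb{R}^m$.

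Combining these, $H^k_{\mathpzc{Sp},\mathpzc{c}}(\mathbb{R}^{m|n})$ vanishes for $k\neq m$ and is one-dimensional for $k=m$, generated by $[e_*]^{\circ n}(\ber^c\,\mathpzc{B}_m)$; it remains to evaluate this class. Using the product decomposition $\bersheaf_{\mathbb{R}^{m|k}}\cong\bersheaf_{\mathbb{R}^{m|k-1}}\boxtimes\bersheaf_{\mathbb{R}^{0|1}}$ recalled before the lemmas (under which $\ber\otimes[\partial_\psi]$ corresponds to the Berezinian with the extra $\partial_\psi$ factor) together with the formula $e_*(\sigma)=(\pi^*\sigma\otimes\ber_{\mathbb{R}^{0|1}})\psi$, each application of $e_*$ tensors the representative by $[\partial_\psi]$ and multiplies it by the newly introduced odd coordinate $\psi$; renaming the successively introduced odd coordinates $\theta_1,\ldots,\theta_n$ yields $[dx_1\ldots dx_m\otimes\partial_{\theta_1}\ldots\partial_{\theta_n}]\,\theta_1\ldots\theta_n\,\mathpzc{B}_m$, the asserted representative. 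I expect no genuine obstacle: the only points needing care are bookkeeping -- verifying that the Spencer $\leftrightarrow$ de Rham identification in the base case respects compact supports, and tracking signs and the product structure on Berezinians when iterating $e_*$ so that the representative comes out exactly in the stated normalization -- everything else being a direct consequence of Lemma~\ref{homotopySP}, the lemma preceding it, and the classical result.
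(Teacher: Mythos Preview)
Your proposal is correct and follows essentially the same route as the paper: identify the Spencer complex with the de Rham complex on the body $\mathbb{R}^{m|0}$ to invoke the classical compactly supported Poincar\'e lemma, and then apply the isomorphism $[e_*]$ from Lemma~\ref{homotopySP} iteratively to climb back up to $\mathbb{R}^{m|n}$, reading off the generator along the way. The only difference is that you spell out in more detail the bookkeeping for the representative under repeated $e_*$, which the paper leaves implicit.
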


\begin{proof}
  One starts from the cohomology of $\R^{m|0}$. In particular, one has that
  $H^k_{\mathpzc{Sp}, \mathpzc{c}} (\R^{m|0}) \iso H^k_{\mathpzc{dR},
  \mathpzc{c}}(\R^{m|0})$, as the Spencer complex is isomorphic to the de
  Rham complex for an ordinary manifold. By the ordinary Poincar\'e lemma for
  compactly supported differential forms, this cohomology is concentrated in degree
  $m$, where it is one-dimensional and spanned by the class of $dx^1 \cdots dx^m
  \mathpzc{B}_m$, for $\mathpzc{B}_m$ a bump function integrating to 1 on
  $\R^m$. By Lemma \ref{homotopySP}, repeated application of the isomorphism $[e_*]$
  yields the claim.
\end{proof}

\end{document}